\newcommand{\be}{\begin{equation}}
\newcommand{\ee}{\end{equation}} 
\newcommand{\Tr}{{\rm Tr}}
\DeclareMathOperator{\im}{\mathrm{i}}
\newcommand{\cG}{\mathcal{G}}
\newtheorem{proposition}{Proposition}
\newtheorem{definition}{Definition}
\newtheorem{lemma}{Lemma}
\theoremstyle{definition}
\renewcommand\theequation{\arabic{section}.\arabic{equation}}
\begin{document}
\title{\bf Notes on Tensor Models and Tensor Field Theories}

\author{Razvan Gurau}

\affil{\normalsize\it 
CPHT, CNRS, Ecole Polytechnique, Institut Polytechnique de Paris, Route de Saclay, 91128 PALAISEAU, France and Perimeter Institute for Theoretical Physics, 31 Caroline St. N, N2L 2Y5, Waterloo, ON,
Canada \authorcr email: rgurau@cpht.polytechnique.fr \authorcr \hfill}

\date{}

\maketitle

\hrule\bigskip

\begin{abstract}
Tensor models and tensor field theories admit a $1/N$ expansion and a melonic large $N$ limit which is simpler than the planar limit of random matrices  
and richer than the large $N$ limit of vector models. They provide examples 
of analytically tractable but non trivial strongly coupled quantum field theories and lead to a new class of conformal field theories.
We present a compact introduction to the topic, covering both some of the classical results in the field, like the details of the $1/N$ expansion, as well as recent developments. These notes are loosely bases on four lectures given at the Journ\'ees de physique math\'ematique Lyon 2019: Random tensors and SYK models.
\end{abstract}

\hrule\bigskip

\tableofcontents

\section{Introduction}

Quantum field theory (QFT) \cite{zinn1996quantum} accurately describes both the fundamental interactions in nature (like the electroweak \cite{Weinberg1,Salam1,Glashow1} and strong \cite{Fritzsch:2015jfa}  interactions) and condensed matter systems (like Ising spins \cite{Ising:1925em} or Fermi liquids \cite{Metzner:2011cw}). In particular it gives one of the most precise predictions in physics: the electron anomalous magnetic moment up to $10^{-10}$ relative error. Perhaps the most important lesson of quantum field theory is that physics changes with the energy scale, as captured by the renormalization group \cite{Wilson:1971dc,Wilson:1973jj,Polchinski:1983gv}.

By and large QFT has two regimes. On the one hand one has weakly coupled theories, like quantum electrodynamics. As the name suggests, these theories are almost free and the effect of interactions is well accounted for in perturbation theory. Making sense rigorously  of the perturbative expansion is quite non trivial \cite{GlimmJaffe}. However, given the right circumstances,  perturbative computations yield very accurate predictions. On the other hand one has strongly coupled theories, which are famously difficult to deal with. While successful numerical approaches have been developed, like lattice quantum chromodynamics, analytical results are much harder to come by. Two prominent strategies exist to deal with strongly coupled QFTs analytically.

One strategy is to consider theories endowed with constraining symmetries or integrability properties. For instance at a fixed point of the renormalization group a QFT becomes scale invariant and, more often than not, scale invariant theories are in fact conformally invariant. Conformal invariance is a very strong constraint \cite{DiFrancesco:1997nk} which allows one to bootstrap a plethora of results \cite{ElShowk:2012ht}. 

A second strategy consists in identifying new parameters, not related to the strength of the interaction, and attempt a perturbative study with respect to them.
An example of this is the so called ``large $N$'' field theory \cite{Moshe:2003xn}.
If the quantum field itself is a vector (or a matrix or a tensor) in some Hilbert space of dimension $N$, one can attempt to study the theory in a  $1/N$ expansion. This is a three step strategy:
\begin{itemize}
 \item[-] take $N$ large (infinite). In this limit the theory simplifies. The $1/N$ expansion is useful as long as the large $N$ limit is rich enough to be non trivial, but simple enough to be more manageable than the original theory.
 \item[-] compute the corrections to the large $N$ behavior order by order in $1/N$, at all orders.
 \item[-] resum the $1/N$ series or bound the rest. 
\end{itemize}

This strategy brings mixed results. With the exception of some models in dimension zero\footnote{By dimension we mean the dimension of the space or space-time on which the vector, matrix or tensor field theory is defined, that is QCD is a theory in $4$ dimensions \cite{'tHooft:1973jz}. In matrix and tensor models \cite{DiFrancesco:1993nw,RTM} there is a second notion of dimension: the Feynman diagrams have non trivial topology and encode combinatorial triangulations in various dimensions (two in the case of QCD). 
Thus \cite{DiFrancesco:1993nw} deals with matrix models in 0 dimensions, although they are relevant for two dimensional quantum gravity.}  \cite{thomasmatrix,expansioin6,Krajewski:2017thd} step 3 is almost never considered. Step 2 is again considered mostly in dimension zero \cite{double,double1,double2,DGR,GurSch,Bonzom:2014oua,Gurau:2015tua,Rivasseau:2017xbk}. 
Beyond the fact that the second and third step are seldom manageable, the two classical examples of a vector \cite{Berlin:1952zz,Stanley:1968gx}
or a matrix \cite{'tHooft:1973jz} field are somewhat disappointing already at step 1. Vector models are analytically tractable in the large $N$ limit and have plenty of applications \cite{Moshe:2003xn,Berges:2004yj}\footnote{They provide for instance explicit CFT duals to Fradkin--Vasiliev higher spin theories \cite{Klebanov:2002ja,Vasiliev:1999ba,Fradkin:1986qy,Fradkin:1987ks}.}. However, they are limited by the fact that at leading order in $1/N$ vector models do not give an anomalous scaling dimension for the field. Consequently one is stuck with either numerical studies \cite{Guida:1998bx} or  almost classical scaling. On the contrary, matrix models \cite{'tHooft:1973jz,Brezin:1977sv,DiFrancesco:1993nw} are too complicated to be resummed in the large $N$ (planar) limit, in more than zero dimension.

Tensor models \cite{RTM,Klebanov:2018fzb} give a new class of large $N$ field theories. They exhibit a \emph{melonic} large $N$ limit \cite{critical,RTM} which is different from both the vector and the matrix ones. Vector-tensor models and some regimes of matrix models also lead to a melonic limit \cite{Ferrari:2017ryl,Azeyanagi:2017drg,Ferrari:2017jgw,Azeyanagi:2017mre}. Unsurprisingly, the melonic limit is richer than the large $N$ limit of vectors. Surprisingly, although as algebraic objects tensors are more complicated than matrices, the melonic limit is simpler than the planar one. 

Tensor models have been extensively studied in zero dimensions (where they were originally introduced and studied as models of quantum gravity  \cite{ambj3dqg,sasa1,color,review,BenGeloun:2011rc,Samary:2014oya,Pascalie:2018nrs }) and in one dimension (e.g. \cite{Witten:2016iux,Gurau:2016lzk,Klebanov:2016xxf,Peng:2016mxj,Krishnan:2016bvg,Krishnan:2017lra,Bulycheva:2017ilt,Choudhury:2017tax,Pakrouski:2018jcc,Klebanov:2018nfp,Kim:2019upg,Carrozza:2018psc,Klebanov:2019jup}) because they provide an alternative to the Sachdev-Ye-Kitaev model \cite{Sachdev:1992fk,Kitaev,Maldacena:2016hyu, Polchinski:2016xgd,Gross:2016kjj,Gross:2017aos,Gross:2017hcz,Ferrari:2019ogc} without quenched disorder. 
Some small $N$ tensor models can also be solved analytically \cite{Krishnan:2017txw,Krishnan:2018hhu,Pakrouski:2018jcc,Klebanov:2018nfp}.
Higher dimensional tensor field theories have been more recently explored \cite{Giombi:2017dtl,Prakash:2017hwq,Benedetti:2017fmp,Giombi:2018qgp,Benedetti:2018ghn,Popov:2019nja}. At large $N$ and in the infrared these theories typically yield conformal field theories (CFTs) of a new kind which we call by extension melonic. 

 Melonic theories are an ideal compromise between solvability and richness:
 contrary to almost all the other examples of strongly interacting theories, 
 they can be treated analytically. To a large extent they can be studied disregarding their origin. This is reflected in the organization of these notes. In Section~\ref{sec:lect0} we present a brief overview of conformal field theories and the particular features of the melonic ones. In Section~\ref{sec:lect1} we present several models which become melonic in the large $N$ limit. Section~\ref{sec:2PI} presents an effective action formalism well adapted to tensor field theories and finally Section~\ref{sec:ren} presents in detail the renormalization group flow, fixed points and infrared melonic CFT in one model. The appendices collect some technical details.

 \bigskip
 
 \bigskip
 
\paragraph{\it Notation.} We work in Euclidean $\mathbb{R}^d$. We sometimes denote integrals over positions by $\int_x \equiv \int d^dx$ and integrals over momenta by  $\int_p \equiv \int \frac{d^dp}{(2\pi)^d}$. The Fourier transform is $  f(p) = \int_x  e^{\im p x} f(x)$ with inverse $f(x) = \int_p  e^{-\im px} f(p)$. Repeated indices are summed.

\newpage

\section{Melonic field theories}\label{sec:lect0}
\setcounter{equation}{0}

\bigskip
 
 Melonic conformal field theories are a new class  of analytically accessible CFTs. We first briefly review CFT in $d$ dimensions and then explain what makes melonic theories special. The reader can find plenty of references on conformal field theories. Here we present a brief digest of \cite{Liu:2018jhs,Simmons-Duffin:2017nub}. We use the notation in \cite{Liu:2018jhs}\footnote{This section is the author's synopsis of the four lectures given by V.  Rosenhaus at the Journ\'ees de physique math\'ematique Lyon 2019: Random tensors and SYK models. The author would like to take this opportunity to thank him for many clarifying discussions on the topic.}.
 
\subsection{A digest of conformal field theories}
 
In  Euclidean $\mathbb{R}^d$ with line element $dx^2 = \delta_{\mu\nu}\,dx^{\mu}dx^{\nu}$ conformal transformations $x \to x'(x)$ preserve the line element up to a local scale factor $dx'^2 = \Omega(x)^2 dx^2$. 
The infinitesimal\footnote{The finite transformations are 
translations $x'^{\mu} = x^{\mu}+ a^{\mu}$, rotations 
$x'^{\mu} = R^{\mu}_{\;\; \nu}x^{\nu}$, dilatations  $x'^{\mu} = \Lambda x^{\mu} $
and special conformal transformations $x'^{\mu}=( x^{\mu} + x^2 b^{\mu}) /(1 + 2b\cdot x + b^2 x^2)$.} conformal transformations $x'^{\mu} = x^{\mu}+v^{\mu}$, $\Omega =1 +\sigma$ are generated by:
\begin{equation}
 v_{\mu}(x) = a_{\mu} + \omega_{\mu\nu}x^{\nu} + \kappa x_{\mu} +
    b_{\mu} x^2 -2 \, x_{\mu} \; b\cdot x \;, \qquad \omega_{\mu\nu} = -\omega_{\nu\mu} \;,
    \qquad \sigma = \kappa -2 \, b\cdot x \; .
\end{equation}
The conformal group has $\binom{d+2}{2}$ generators and is locally isomorphic to $SO(d+1,1)$. A general conformal transformation is such that:
\begin{equation}\label{eq:conftr}
 \frac{\partial x'^{\mu}}{ \partial x^{\nu}} = \Omega(x) R^{\mu}_{\;\;\nu}(x) \;,\qquad
  \delta_{\mu\nu}R^{\mu}_{\;\;\sigma}(x)R^{\nu}_{\;\;\rho}(x) = \delta_{\sigma \rho} \;, 
  \qquad \Omega(x) =  
\left|  \frac{\partial x'^{\mu}}{ \partial x^{\nu}} \right|^{\frac{1}{d}} \;,
\end{equation}
and $(x'_1-x'_2)^2 =\Omega(x_1) \Omega(x_2) (x_1-x_2)^2$. Conformally invariant  cross ratios can be built starting with four positions 
$ ( x_{ij}^2 x_{kl}^2 )/( x_{ik}^{2} x_{jl}^2 )$ where $x_{ij} = x_i-x_j$.

The irreducible representations of $SO(d)$ are classified by the spin.
For bosonic fields the representation space of the spin $J$ representation consists in symmetric traceless tensors with $J$ indices. We denote multi indices by $\bar \mu = \mu_1\dots \mu_{J}$. The rotation $R^{\mu}_{\;\;\nu}$ is represented in the representation $J$ by the tensor product 
$R^{\bar \mu}_{\;\;\bar \nu } =  R^{\mu_1}_{\;\; \nu_1} \dots R^{\mu_J}_{\;\; \nu_J}$.
In a scalar theory for instance, a spin $J$ composite operator is 
$[ (\partial^2)^{n_1} \partial_{(\mu_1} \dots \partial_{\mu_i}\phi] [\partial_{\mu_{i+1}}\dots \partial_{\mu_J)} (\partial^2)^{n_2}\phi] -  {\rm traces}  $.

The primary operators $O_{\Delta,J}$ in a conformal field theory (CFT) are vectors in the spin representation $J$ and change under the conformal transformation in Eq.~\eqref{eq:conftr} 
as\footnote{In components, the infinitesimal transformation of primary fields is:
\[
   \delta O_{\Delta,J}^{\bar \alpha} = - v^{\mu} \partial_{\mu} 
   O_{\Delta,J}^{\bar\alpha}  
 - (\kappa - 2 b\cdot x) \Delta  O_{\Delta,J}^{\bar \alpha}  + 
 \frac{1}{2} \left[ \omega^{\mu\nu} - 2 (b^{\mu}x^{\nu} - b^{\nu} x^{\mu})\right]
  (s_{\mu\nu})^{\bar \alpha}_{\;\; \bar \beta} O_{\Delta,J}^{\bar \beta}    \;,
\]
where $s_{\mu\nu}$ denotes the spin matrices in the representation $J$. The spin matrices are the generators of the $so(d)$ Lie algebra $ \big[s_{\mu\nu}, s_{\rho\sigma} \big] = \delta_{\mu \rho} s_{\nu \sigma}
   -\delta_{\nu \rho} s_{\mu\sigma} - \delta_{\mu \sigma} s_{\nu \rho} + 
   \delta_{\nu \sigma} s_{\mu\rho}$. 
   In the vector representation for instance we have $(s_{\mu\nu})_{a b}
   = \delta_{\mu b} \delta_{\nu a}-\delta_{\nu b}\delta_{\mu a}$. 
}:
\begin{equation}
  O'^{\bar \mu}_{\Delta,J}(x')  = \Omega(x)^{-\Delta} R^{\bar \mu}_{\;\;\bar \nu}(x) 
  O_{\Delta,J}^{\bar \nu}(x)  \;,
\end{equation}
where $\Delta$ and $J$ are the scaling dimension and the spin of the operator 
and $R^{\bar \mu}_{\;\;\bar \nu}(x) $ is the matrix representing the rotation 
$R^{\mu}_{\;\;\nu}(x)$ in the spin representation $J$. 

The two and three point functions of primary operators are fixed by conformal invariance
 \cite{Dolan:2000ut,Simmons-Duffin:2017nub}. The two point function is non zero only for operators of the same dimension and spin:
\begin{equation}\label{eq:twodirect}
  \Braket{  O_{\Delta,J}^{\bar \mu}(x_1)  \;  O_{\Delta,J; \bar \nu}(x_2) } 
   =\frac{   I^{\mu_1}_{(\nu_1} (x_{12}) \dots I^{\mu_J}_{\nu_J)}(x_{12}) -{\rm traces} }
   {|x_{12}|^{2\Delta}} \;,
   \quad
  I^{\mu}_{\nu} (x)    
   =  \delta^{\mu}_{\nu} - 2\frac{ x^{\mu} x_{\nu} }{ |x|^2}  \;,
\end{equation}
while the three point function of two spin zero operators $\phi_{1}$ and $\phi_{2}$ with dimensions $\Delta_1$ and $\Delta_2$ and a spin $J$ operator $O_{\Delta,J}$ is: 
 \begin{equation}\label{eq:threedirect}
 \begin{split}
    \Braket{\phi_1(x_1) \phi_2(x_2) O_{\Delta,J;\bar \mu}(x_3) }
  & = C^{\Delta_1,\Delta_2}_{\Delta,J} \; 
   \frac{  Z_{\mu_1} \dots Z_{\mu_J}  - {\rm traces} }{|x_{12}|^{\Delta_1 + \Delta_2 - \Delta + J}
     |x_{13}|^{\Delta + \Delta_1 -\Delta_2 - J } 
     |x_{23}|^{\Delta + \Delta_2 - \Delta_1 - J} 
   } \;,
   \crcr
   Z_{\mu} &=   \left( \frac{ (x_{13})_{\mu}}{ |x_{13}|^2} 
     - \frac{(x_{23})_{\mu}}{|x_{23}|^2} \right)   \; ,
 \end{split}
 \end{equation}
where $C^{\Delta_1,\Delta_2}_{\Delta,J}$ are pure numbers.
 
The operator product expansion (OPE) in quantum field theory expresses the product of two operators at nearby points as a sum of local operators. For a scalar field theory this is written schematically as
$ \phi(x_1) \phi(x_2) \simeq \sum  C(x_{12}^2) \; x_{12}^{\mu_1}
  \dots x_{12}^{\mu_J} O_{\mu_1 \dots \mu_J} (x_2)$ for $x_1\sim x_2$. This equality should be interpreted in the weak sense, that is it is valid when inserted in arbitrary correlations. In a conformal field theory the OPE is strongly constrained by conformal 
  invariance\footnote{As an example, note that any operator in a CFT, primary or not, will change under global dilatations by a rescaling
$O'(\Lambda x) = \Lambda^{-\Delta_O} O(x)$ which fixes $C(x_{12}^2) \sim |x_{12}|^{  -  2\Delta_{\phi} + \Delta_{O} - J  } $.} and the sum restricts to primary operators\cite{Dolan:2000ut,Pappadopulo:2012jk}:
\begin{equation}\label{eq:OPEphiphi}
 \phi_1(x_1) \phi_2(x_2) = \sum_{\Delta,J} C^{\Delta_1,\Delta_2}_{\Delta,J}  \;
  P^{\Delta_1,\Delta_2;\bar \mu}_{\Delta,J}(x_{12},\partial_{x_2}) O_{\Delta, J;\bar \mu}(x_2) \;,
\end{equation}
with the OPE coefficients $ C^{\Delta_1,\Delta_2}_{\Delta,J} $ given by
Eq.~\eqref{eq:threedirect}  and $P^{\Delta_1,\Delta_2;\bar \mu}_{\Delta,J}(x_{12},\partial_{x_2})$ some universal differential operator 
fixed by conformal invariance which
captures the contribution of the primary $ O_{\Delta,J}$ and all its descendants. For instance the three point function of three spin zero operators is at the same time given by Eq.~\eqref{eq:threedirect} and by the OPE, hence: 
\begin{equation}
\frac{1}{|x_{12}|^{\Delta_1 + \Delta_2-\Delta} 
|x_{13}|^{\Delta_1 + \Delta - \Delta_2}|x_{23}|^{\Delta_2+\Delta - \Delta_1}} 
=P^{\Delta_1,\Delta_2}_{\Delta}(x_{12},\partial_{x_2}) \frac{1}{|x_{23}|^{2\Delta}} \;,
\end{equation}
where we omitted the spin index. The polynomial $ P^{\Delta_1,\Delta_2}_{\Delta}(x_{12},\partial_{x_2}) $ is 
obtained by substituting 
$x_{13} = x_{12}+x_{23}$ in the left hand side and Taylor expanding in $x_{12}$\footnote{At first orders we get $P^{\Delta_1,\Delta_2}_{\Delta}(x_{12},\partial_{x_2}) = |x_{12}|^{\Delta - \Delta_1 -\Delta_2} \left( 1 + \frac{\Delta + \Delta_1 - \Delta_2}{2\Delta} x_{12}^{\mu} \partial_{x_2^\mu} + \dots \right)$.}.

Arbitrary correlation functions in a conformal field theory can be computed by applying the OPE iteratively, therefore a CFT is completely specified by the list of primary operators and OPE  coefficients. We will now present a method for computing the dimensions of (some of) the physical primary operators and (some of) the OPE coefficients in a CFT.

Our starting point is a four point function. To simplify our life we consider correlations with four spin zero fields.
Applying the OPE twice in the channel $(12) (34)$ yields:
\begin{equation}
\Braket{\phi_1(x_1) \phi_2(x_2) \phi_3(x_3) \phi_4(x_4) }  
 = \sum_{\Delta,J}  C^{\Delta_1,\Delta_2}_{\Delta,J} 
 C^{\Delta_3,\Delta_4}_{\Delta,J} \;  G^{\Delta_i}_{\Delta,J}(x_i) \;,
\end{equation}
where the universal functions $G^{\Delta_i}_{\Delta,J}(x_i) $, the 
conformal blocks \cite{Costa:2011dw}, are known explicitly.
The four point function can  be re expressed in terms of conformal partial
 waves \cite{Liu:2018jhs,Simmons-Duffin:2017nub} as we now explain.

For any primary operator $O_{\Delta,J}$ we define its shadow $\tilde O_{\tilde \Delta,J}$ to be an operator with the same spin but with dimension 
$\tilde \Delta = d-\Delta$. Let us denote $\Braket{ \dots }^{\rm cs}$ the conformal structure of a correlation function, given by
Eq.~\eqref{eq:twodirect} 
and~\eqref{eq:threedirect} with the OPE coefficients set to $1$. The shadow coefficient \cite{Liu:2018jhs} of three operators $ S^{O_2O_3}_{O_1}  $ is defined by the equation:
\begin{equation}\label{eq:Sdef}
\begin{split}
&  \int d^dy \; 
\Braket{ \tilde O_{\tilde \Delta,J;\bar \nu}(x_1) 
\tilde O_{\tilde \Delta,J}^{\bar \mu}(y)    }^{\rm cs}  
 \Braket{ O_{\Delta,J;\bar \mu} (y)   O_2(x_2) O_3 (x_3) }^{\rm cs}  =\crcr
 & \qquad \qquad = 
S^{O_2O_3}_{O} \Braket{ \tilde O_{\tilde \Delta,J;\bar \nu}(x_1) O_2(x_2) O_3(x_3) }^{\rm cs} 
\; ,
\end{split}
\end{equation}
where from now on we assume we deal with real fields (otherwise the spin representation should be conjugated).  We have for instance \cite{Liu:2018jhs}:
\begin{equation}
\label{eq:Scoef}
\begin{split}
 S^{\Delta_1,\Delta_2}_{\Delta,J} & =   \pi^{\frac{d}{2}}
 \frac{ \Gamma\left(\Delta - \frac{d}{2} \right) \Gamma\left(\Delta  + J-1 \right)
 \Gamma\left( \frac{\tilde \Delta +\Delta_1 -\Delta_2+J}{2}\right)
 \Gamma\left(\frac{\tilde \Delta  - \Delta_1 + \Delta_2+J}{2} \right)
 }{\Gamma\left( \Delta -1\right)\Gamma\left( d-\Delta  + J \right)
 \Gamma\left( \frac{ \Delta +\Delta_1 -\Delta_2+J}{2} \right)
 \Gamma\left( \frac{ \Delta  - \Delta_1 + \Delta_2+J}{2} \right)} \; , \crcr 
 S^{\Delta_1,(\Delta,J) }_{\Delta_2} & = \pi^{\frac{d}{2}} 
  \frac{\Gamma\left(\Delta_2 - \frac{d}{2}\right) 
     \Gamma\left(\frac{\tilde \Delta_2 + \Delta_1 - \Delta + J}{2}\right) 
     \Gamma\left(\frac{\tilde \Delta_2 + \Delta - \Delta_1 + J}{2}\right) }
  {\Gamma\left(d-\Delta_2\right) 
  \Gamma\left(\frac{  \Delta_2 + \Delta_1 - \Delta + J}{2} \right) 
  \Gamma\left(\frac{  \Delta_2 + \Delta - \Delta_1 + J}{2} \right) } \;.
  \end{split}
\end{equation}

Let us define the conformal partial waves:
\begin{equation}
 \Psi^{\Delta_i}_{\Delta,J} = \int dx_0 
 \Braket{ \phi_1(x_1) \phi_2(x_2)  O_{\Delta,J}^{\bar \mu} (x_0) }^{\rm cs} 
 \Braket{  \tilde O_{\tilde \Delta,J;\bar \mu} (x_0) \phi_3(x_3) \phi_4(x_4) }^{\rm cs} \;.
\end{equation}
Using the conformal scaling in the first three point conformal structure, one can show after some effort \cite{Simmons-Duffin:2017nub} that the conformal partial wave is a sum of the conformal block $G^{\Delta_i}_{\Delta,J}(x_i) $ and its shadow block $G^{\Delta_i}_{\tilde \Delta,J}(x_i) $:
\begin{equation}
 \Psi^{\Delta_i}_{\Delta,J}   = \left( -\frac{1}{2}\right)^J S^{\Delta_3,\Delta_4}_{\tilde \Delta,J}
  G^{\Delta_i}_{\Delta,J}(x_i) +\left( -\frac{1}{2}\right)^J S^{\Delta_1,\Delta_2}_{\Delta,J}
  G^{\Delta_i}_{\tilde \Delta,J}(x_i)  \; .
\end{equation}

A complete set of partial waves $\Psi^{\Delta_i}_{\Delta,J} $ is obtained in $d > 1$ by choosing integer spin $J$ and the dimensions $\Delta = d/2 + \im r, \;r\ge 0$ (for $d=1$ one needs to add the discrete set $\Delta = 2n \;, n\ge 1$). These dimensions do not correspond to physical primary operators.
The functions are orthogonal \cite{Simmons-Duffin:2017nub}:
\begin{equation}
\begin{split}
  \left(  \Psi^{\Delta_{i}}_{\Delta,J}   ,  \Psi^{\tilde \Delta_{i}}_{\tilde \Delta',J'}  \right) &= \int \frac{  \prod_{i=1}^4 d^dx_i}{{\rm Vol}(SO(d+1,1))}   \; \Psi^{\Delta_{i}}_{\Delta, J} (x_i)    \Psi^{\tilde \Delta_{i}}_{\tilde \Delta',J'} (x_i) =  2\pi n_{\Delta,J}\delta(r-r') \delta_{J J'}  \;, \crcr
 n_{\Delta,J} &= \frac{  S^{\Delta_3,\Delta_4}_{\tilde \Delta,J} 
 S^{\tilde \Delta_3, \tilde \Delta_4}_{\Delta,J} {\rm Vol}(S^{d-2})}{ {\rm Vol}(SO(d-1))} \;\;
 \frac{ \pi (2J  + d-2)  \Gamma(J+1) \Gamma(J+d-2)}{2^{2J+d-2} 
 \Gamma\left( J + \frac{d}{2}\right)^2} \;,
 \end{split}
 \end{equation}
with $\Delta = d/2+\im r$, $\tilde \Delta' = d/2 -\im r'$ and $r,r'>0 $.

Now, let us consider a field theory (not necessarily conformal) for a scalar field $\phi$, such that the one point function is zero. The four point function is a sum of a disconnected contribution $(12)(34)$ and the part connecting $(12)$ and $(34)$:
\begin{equation}\label{eq:conndiscon}
 \Braket{\phi(x_1) \phi(x_2) \phi(x_3) \phi(x_4) }  
 = \Braket{\phi(x_1) \phi(x_2) } \Braket{\phi(x_3) \phi(x_4)}
  +  \Braket{\phi(x_1) \phi(x_2) \phi(x_3) \phi(x_4) }_{12\to 34}  \;.
\end{equation}
The correlation $\Braket{\dots}_{12\to 34}$ can be written in term of the irreducible four point kernel (see Section \ref{sec:2PI} for details).  Expressing the self-energy $\Sigma$ (that is the one particle irreducible two point function) in terms of the dressed two point function $G$, the irreducible four point kernel is the functional derivative of $\Sigma$ with respect to $G$:
\begin{equation}
 K(x_1,x_2 ; x_3, x_4) = \int d^dx_a d^dx_b \; G(x_{1a}) G(x_{2b}) \frac{\delta \Sigma(x_{34})}{\delta G(x_{ab})} \;,
\end{equation}
and the four point function connecting $(12)$ and $(34)$ is:
\begin{equation}\label{eq:foupoint1}
\begin{split}
&  \Braket{\phi(x_1) \phi(x_2) \phi(x_3) \phi(x_4) }_{12\to 34} = \crcr
& \qquad =
  \int d^dx_a d^dx_b \;
    \left(  \frac{1}{1-K} \right)(x_1,x_2 ;x_a x_b) 
    \bigg( G(x_{a3}) G(x_{b4}) + (a\leftrightarrow b) \bigg) \; .
\end{split}
\end{equation}

In a CFT in which the field $\phi$ is a primary operator with dimension 
$\Delta_{\phi}$, as the partial waves form a basis, Eq.~\eqref{eq:conndiscon} becomes:
\begin{equation}\label{eq:4expansion}
 \Braket{\phi(x_1) \phi(x_2) \phi(x_3) \phi(x_4) }  
 = \frac{1}{|x_{12}|^{2\Delta_{\phi}}} \; \frac{1}{|x_{34}|^{2\Delta_{\phi}}}
 + \sum_J \int_{\frac{d}{2}}^{\frac{d}{2} + \im \infty}
   \frac{d\Delta}{2\pi \im} \; \rho(\Delta,J) \Psi^{\Delta_{\phi}}_{\Delta,J}(x_i) \;,
\end{equation}
where the field is normalized so that the two point function is exactly the conformal structure. The disconnected term is the contribution to the OPE of the identity operator with dimension and spin $0$. All the other physical operators and the OPE coefficients are captured by the density $\rho(\Delta,J)$. This density can be computed by expanding 
Eq.~\eqref{eq:foupoint1} on partial waves. We first expand the rightmost term in Eq.~\eqref{eq:foupoint1}: 
\begin{equation}
\begin{split}
  \Braket{\phi(x_1) \phi(x_3)}
 \Braket{\phi(x_2) \phi(x_4)} + (1\leftrightarrow 2) = {\cal F}^0(x_i)   & = \sum_{J}
 \int_{\frac{d}{2}}^{\frac{d}{2}+ \im \infty} \frac{d\Delta}{2\pi \im} \;
   \rho^0(\Delta,J) \Psi^{\Delta_{\phi}}_{\Delta,J}(x_i) \;,  
\end{split}
\end{equation}
where $     \rho^0(\Delta,J)   =   \left( {\cal F}^0 ,  \Psi^{\tilde \Delta_{\phi}}_{\tilde \Delta,J} \right) / n_{\Delta,J} $. The first term in the scalar product is, substituting the partial wave (and denoting arguments as indices):
\begin{equation}
\begin{split}
 &  \int\frac{d^dx_i d^dx_0 }{  {\rm Vol}(SO(d+1,1)) }
   \Braket{\phi_{x_1} \phi_{x_3}}
 \Braket{\phi_{x_2} \phi_{x_4}} \Braket{ \tilde \phi_{x_1} 
 \tilde \phi_{x_2}  
 \tilde O_{\tilde \Delta,J}^{\bar \mu} (x_0) }^{\rm cs} 
 \Braket{  O_{ \Delta,J;\bar \mu} (x_0) \tilde \phi_{x_3} \tilde \phi_{x_4} }^{\rm cs}  \crcr
 & = S^{\tilde \Delta_{\phi}, (\Delta,J)}_{\tilde \Delta_{\phi}}
   S^{\Delta_{\phi}, (\Delta,J)}_{\tilde \Delta_{\phi}} 
    \int\frac{dx_1dx_2dx_0}{{\rm Vol}(SO(d+1,1)) } 
    \Braket{   \tilde \phi_{x_1} \tilde \phi_{x_2}  
 \tilde O_{\tilde \Delta,J}^{\bar \mu} (x_0) }^{\rm cs} 
 \Braket{  O_{ \Delta,J;\bar \mu} (x_0)  \phi_{x_1}   \phi_{x_2} }^{\rm cs} \;,
 \end{split}
\end{equation}
where we computed the integrals over $x_3$ and $x_4$ using Eq.~\eqref{eq:Sdef}.
The remaining integral is just a pure number \cite{Liu:2018jhs} which we 
denote $t_0$:
\begin{equation}
t_0 = \frac{1}{{\rm Vol}(SO(d-1))} \,\frac{\Gamma\left( \frac{d-2}{2} \right)
 \Gamma(J+d-2)}{ 2^J \Gamma(d-2) \Gamma\left( J + \frac{d-2}{2} \right) } \;.
\end{equation}
 Taking into account the symmetry properties of the conformal three point function we get:
\begin{equation}
 \rho^0 (\Delta,J)= \frac{1 + (-1)^J}{n_{\Delta, J}} t_0 \;  S^{\tilde \Delta_{\phi}, (\Delta,J)}_{\tilde \Delta_{\phi}}
   S^{\Delta_{\phi}, (\Delta,J)}_{\tilde \Delta_{\phi}}  \;.
\end{equation}

Now, due to conformal invariance, the irreducible four point kernel applied on a three point function must be proportional to the three point function:
\begin{equation}\label{eq:4pointCFT}
 \int dx^d_3 dx^d_4 \;  K(x_1,x_2 ; x_3, x_4) \Braket{\phi(x_3) \phi(x_4) 
    O_{\Delta,J}^{\bar \mu}(x) } = k(\Delta,J) 
 \Braket{\phi(x_1) \phi(x_2) 
    O_{\Delta,J}^{\bar \mu}(x) } \; ,
\end{equation}
therefore:
\begin{equation}
 \rho(\Delta,J) = \frac{1}{1-k(\Delta,J)} \rho^0(\Delta,J) \; .
\end{equation}
Putting everything together, inserting the partial wave in terms of the conformal blocks and noting that $\rho(\tilde \Delta,J) = \rho(\Delta,J)$ we get:
\begin{align}
& \Braket{\phi(x_1) \phi(x_2) \phi(x_3) \phi(x_4) }_{12\to 34}  = \\
&= \sum_J \int_{\frac{d}{2}-\im \infty}^{\frac{d}{2} + \im \infty}
   \frac{d\Delta}{2\pi \im} \; \frac{1}{1-k(\Delta,J)} 
    \frac{1 + (-1)^J}{n_{\Delta, J}} t_0 \;  S^{\tilde \Delta_{\phi}, (\Delta,J)}_{\tilde \Delta_{\phi}}
   S^{\Delta_{\phi}, (\Delta,J)}_{\tilde \Delta_{\phi}}
  \left(-\frac{1}{2} \right)^J S^{\Delta_{\phi}\Delta_{\phi}}_{\tilde \Delta,J} G^{\Delta_{\phi}}_{\Delta,J}(x_i) \;. \nonumber
\end{align}

In order to find the OPE coefficients and the dimension of the primaries, we close the integral contour on the right half complex plane. The integral then becomes a sum over the poles of the integrand. There are many poles: some come from the conformal block itself, some from the explicit $S$ factors and some from the $(1-k(\Delta,J))^{-1}$ factor. It turns out that some of the poles are spurious \cite{Simmons-Duffin:2017nub}, and only the poles of $1/( 1-k(\Delta,J) )$ are physical. We denote $\Delta_n$ the solutions of the equation $k(\Delta_, J)  = 1$. These are the dimensions of the physical primary operators present in the OPE of $\phi\phi$ in 
Eq.~\eqref{eq:OPEphiphi}, and: 
\begin{align}\label{eq:final}
& \Braket{\phi(x_1) \phi(x_2) \phi(x_3) \phi(x_4) }_{12\to 34} 
  = \sum_{J} \sum_{n} \left( C^{\Delta_{\phi}\Delta_{\phi}}_{\Delta_n,J} \right)^2
   G^{\Delta_{\phi}}_{\Delta_n,J}(x_i) \;, \\
&    \left( C^{\Delta_{\phi}\Delta_{\phi}}_{\Delta_n,J} \right)^2  =
     - {\rm Res} \left( \frac{1}{1-k(\Delta,J) } ; \Delta_n \right) 
    \frac{1 + (-1)^J}{n_{\Delta_n, J}} t_0 \;  S^{\tilde \Delta_{\phi}, (\Delta_n,J)}_{\tilde \Delta_{\phi}}
   S^{\Delta_{\phi}, (\Delta_n,J)}_{\tilde \Delta_{\phi}}
   \left( -\frac{1}{2} \right)^J S^{\Delta_{\phi}\Delta_{\phi}}_{\tilde \Delta_n,J}
     \; . \nonumber
\end{align}

This method for computing the dimensions of the physical primary operators and the OPE coefficients is completely general. However, it is of limited use in the most generic case because the four point kernel and consequently $k(\Delta,J)$ are complicated.
 
\subsection{The melonic truncation}
 
We now introduce a class of field theories which we call \emph{melonic}. In these theories one is able to close the equation \eqref{eq:final} and compute $k(\Delta,J)$ analytically.

Let us consider the simple example of a scalar field theory with a $q$--body interaction $\phi^q$ in zero dimensions. The ``field'' $\phi$ is just a real variable and the action and partition function write:
\begin{equation}\label{eq:firstex}
S = \frac{1}{2} \phi \, C^{-1}\phi + \frac{\lambda}{q!} \, \phi^q \;,\qquad
 {\cal Z} = \int [d\phi] \; e^{-S}
  \;,
\end{equation}
where $C>0$ is the covariance (propagator) and $\lambda$ the coupling. Of course in this case one can eliminate the covariance $C$ by a rescaling, but we refrain from doing this. 

The partition function and correlations can be evaluated by Taylor expanding in the coupling and computing the Gaussian integrals\footnote{For $\phi_A$ a vector in some vector space and $C_{AB}$ some non negative operator, the moments of the normalized Gaussian measure of covariance $C$ are computed by the Wick theorem:
\begin{align*}
\int   [d\phi] \; e^{-\frac{1}{2} \phi_A C^{-1}_{AB} \phi_B } \; \phi_{A_1} \dots \phi_{A_{2p}} = \sum_{{\rm pairings}\; \Pi} \prod_{\{i,j\} \in \Pi } C_{A_{i}A_{j}} \; ,
 \qquad [d\phi] \equiv (\det C)^{-1/2}  \prod_A d\phi_A  \;.
\end{align*}
}. This leads to the standard Feynman graph representation. The graphs have vertices with coordination $q$ and, for correlation functions, external points with coordination 1. The connected two point function of the model:
\begin{equation}
G = \Braket{\phi\phi}_c = \frac{1}{{\cal Z}} 
\int [d\phi] \; e^{-S } \; \phi\phi - \left( \frac{1}{{\cal Z}} 
\int [d\phi] \; e^{-S} \; \phi \right)^2 \;,
\end{equation}
is a sum over connect graphs with two external points. It obeys the Schwinger Dyson equation (SDE) depicted in Fig.~\ref{fig:SDE}:
\begin{equation}\label{eq:firstSDE}
 G^{-1} = C^{-1} - \Sigma \;,
\end{equation}
where the \emph{self energy} $\Sigma$ is the sum of amputated, one particle irreducible (1PI)
two point graphs.

\begin{figure}[htb]
\medskip
        \begin{center}
            \psfrag{g}{$G$}
    \psfrag{F}{$=\frac{1}{C^{-1} - \Sigma}$}
    \psfrag{c}{$C$}
    \psfrag{s}{$\Sigma$}
    \includegraphics[width=0.7\textwidth]{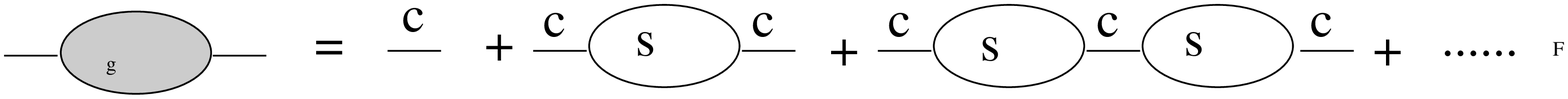}         
    \caption{The Schwinger Dyson equation.}  \label{fig:SDE}       
        \end{center}
\end{figure}

The SDE can be closed by re expressing the self energy back in terms of the two point function $G$. Usually this is not very useful as the self energy is a complicated sum over two particle irreducible graphs (more details on this in Section \ref{sec:2PI}).  

\begin{figure}[htb]
        \begin{center}
    \psfrag{g}{$G$}
    \psfrag{s}{$\Sigma$}
    \includegraphics[width=0.35\textwidth]{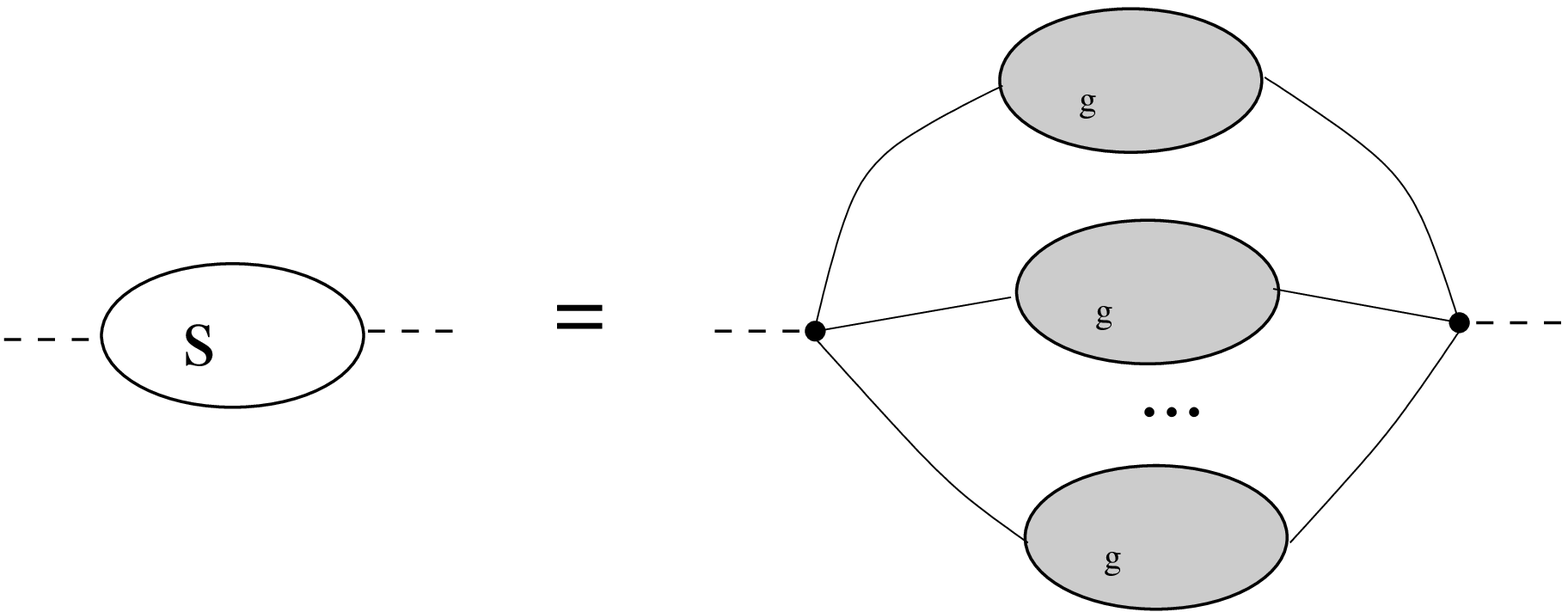}         
    \caption{The melonic truncation of the self energy.}  \label{fig:melon}       
        \end{center}
\end{figure}

The melonic truncation is a truncation of the self energy which leads to a non trivial but manageable SDE. It consists in restricting the self energy to the \emph{melon}  graph
depicted in Fig~.\ref{fig:melon} which is made of two vertices connected by $q-1$ parallel two point functions\footnote{The melonic truncation of the self energy defines melonic two point graphs. Vacuum melonic graphs are obtained by reconnecting the external edges of a (not necessarily one particle irreducible) two point graph into an edge. }. In zero dimensions the melonic truncation reads:
\begin{equation}\label{eq:firstmelon}
 \Sigma = \left[ \frac{ \lambda^2 }{(q-1)!} \right] \, G^{q-1} \; .
\end{equation}
We put the combinatorial factor in square brackets as it is the only thing which depends on the details of the model. We call a theory melonic if this truncation holds. 

In a zero dimensional melonic theory, combining  Eq.~\eqref{eq:firstSDE} and \eqref{eq:firstmelon} one can solve for the two point function:
\begin{equation}
  G  = C  + C  \left[ \frac{\lambda^2}{(q-1)!} \right]\, G^{q} \Rightarrow G = 
  \sum_{n\ge 0}  \frac{1}{qn+1} \binom{qn+1}{n} 
  \left(\left[ \frac{\lambda^2}{(q-1)!} \right]\, C^q\right)^n C \;.
\end{equation}

Now, let us go to higher dimensions. The SDE in a melonic theory in $d$ dimensions is:
\begin{equation}\label{eq:SDEx}
 (G^{-1})(x_1,x_2) = (C^{-1})(x_1,x_2) - \Sigma(x_1,x_2) \; \qquad 
 \Sigma(x) =  \left[ \frac{\lambda^2}{(q-1)!}  \right] \, \big[G(x) \big]^{q-1} \;,
\end{equation}
where $G^{-1}$ (and $C^{-1}$) denotes the operator inverse and we used translation invariance in the second equation. We now attempt to solve for the two point function self consistently. This is possible if one ignores the free covariance. Taking a conformal ansatz for the two point function:
\begin{equation}
 G(x_1-x_2) = b \, \frac{1}{|x_1-x_2|^{2\Delta_{\phi}}} \;, 
\end{equation}
and going to momentum space\footnote{Recall the Fourier transform:
\[
 \int_x \frac{e^{\im px}}{ x^{2a} } = 
 \frac{ \pi^{d/2}\Gamma\left( \frac{d}{2}-a\right)}{2^{2a-d}\Gamma(a)}\; \frac{1}{p^{d-2a}}
 \;.
\]} the SDE in a melonic theory is solved by:
\begin{equation}\label{eq:meloSDE}
 \Delta_{\phi} = \frac{d}{q} \;, \qquad
 \left[ \frac{\lambda^2}{(q-1)!} \right] b^q =  
 \frac{ \Gamma\left(\Delta_{\phi} \right) \Gamma \left( d-\Delta_{\phi} \right) }
 { \pi^d  (-1)\Gamma\left( \frac{d}{2} -\Delta_{\phi}  \right) 
 \Gamma\left( -\frac{d}{2} + \Delta_{\phi} \right) } \;.
\end{equation}

The attentive reader will note that this solution is only formal: the presence of an Euler $\Gamma$ function with a negative argument stems form the fact that the Fourier transform of the right hand side of Eq.~\eqref{eq:SDEx} is in fact divergent. We will be treating this equation rigorously in Section \ref{sec:ren}. Observe that the melonic truncation already expresses the self energy in terms of the two point function. The irreducible four point kernel is then readily obtained:
\begin{equation}
\begin{split}
 K(x_1,x_2 ; x_3, x_4) & = \int d^dx_a d^dx_b G(x_{1a}) G(x_{2b}) \frac{\delta \Sigma(x_{34})}{\delta G(x_{ab})} \crcr
   & = (q-1)  \left[ \frac{\lambda^2}{(q-1)!} \right] 
  G(x_{13}) G(x_{24}) G(x_{34})^{q-2}
 \; .
\end{split}
\end{equation}
In order to close Eq.~\eqref{eq:final} we need to determine $k(\Delta,J)$. The trick is to note that, as $\Delta_{\phi}=d/q$, we have:
\begin{equation}
 \frac{1}{|x_{34}|^{\Delta_{\phi}(q-2) } } \Braket{\phi(x_3) \phi(x_4) O_{\Delta,J}^{\bar \mu}(x_0)}^{\rm cs} =  \Braket{\tilde \phi(x_3) \tilde \phi(x_4) O_{\Delta,J}^{\bar \mu}(x_0)}^{\rm cs} \;,
\end{equation}
therefore:
\begin{align}
&   \int dx_3 dx_4 \; K(x_1,x_2; x_3x_4) 
 \Braket{\phi(x_3) \phi(x_4) O_{\Delta,J}^{\bar \mu}(x_0)}^{\rm cs}
 =  (q-1)  \left[ \frac{\lambda^2}{(q-1)!} \right] b^q 
 \\
& \qquad \qquad \qquad
  \int dx_3 dx_4 \Braket{\phi(x_1) \phi(x_3)}^{\rm cs }
  \Braket{\phi(x_2) \phi(x_4)}^{\rm cs } \Braket{\tilde \phi(x_3) \tilde \phi(x_4) O_{\Delta,J}^{\bar \mu}(x_0)}^{\rm cs}  \;. \nonumber
\end{align}
The integrals can now be computed using the shadow coefficients in Eq.~\eqref{eq:Sdef} and we get:
\begin{equation}
k(\Delta,J) =  (q-1)  \left[ \frac{\lambda^2}{(q-1)!} \right] b^q \;S^{\tilde \Delta_{\phi}, (\Delta,J)}_{\tilde \Delta_{\phi}}
   S^{\Delta_{\phi}, (\Delta,J)}_{\tilde \Delta_{\phi}}  \; ,
\end{equation}
which, with the help of Eq.~\eqref{eq:Scoef}, yields:
\begin{equation}\label{eq:puremelon}
  k(\Delta,J) =  (q-1)   \frac{\Gamma\left(  d- \Delta_{\phi}  \right) 
   \Gamma\left( \frac{d}{2}- \Delta_{\phi}\right) 
   \Gamma\left(\Delta_{\phi} -\frac{d}{2} + \frac{\Delta+J}{2} \right)
   \Gamma\left( \Delta_{\phi} - \frac{\Delta-J}{2}\right) 
   }
   {(-1) \Gamma\left( -\frac{d}{2} + \Delta_{\phi} \right) 
    \Gamma\left( \Delta_{\phi} \right)
     \Gamma\left( d - \Delta_{\phi} -\frac{\Delta-J}{2}\right)
      \Gamma\left( \frac{d}{2} -\Delta_{\phi} +\frac{\Delta +J}{2}\right)
    }  \;.
\end{equation}

 \newpage

 \section{The melonic limit as a large $N$ limit}\label{sec:lect1}
 \setcounter{equation}{0}
 
  \bigskip
  
  Melonic theories lead to analytically controlled CFTs in the infrared limit. However, in the previous section the melonic truncation appeared as a trick designed to produce a solvable model.

  The important question then becomes: is there any natural way to obtain a melonic limit in  a field theory? The answer to this question is yes: in the case of tensor field theories the melonic limit is naturally obtained at large $N$. In fact, when a random tensor is present, the large $N$ limit will often be melonic. In particular, as we will see below,
  models mixing vectors and tensors also fall in this class. 
  
  In this section we present three models which exhibit a melonic large $N$ limit. We only deal for now with the  combinatorial aspects of this limit and, in order to simplify the discussion, we will present the models in dimension zero. We will go back to field theories in the next section.

  The models we discuss here deal with non symmetric tensors. It should be mentioned that there exist models for symmetric tensors (in rank 3) for which the large $N$ limit has been proven to be melonic 
  \cite{Klebanov:2017nlk,Gurau:2017qya,Benedetti:2017qxl,
  Carrozza:2018ewt,Carrozza:2018psc}. However the proofs are significantly more involved for model with symmetries.

  \subsection{The colored tensor model}
   
This model is sometimes called the Gurau--Witten 
model\cite{color,review,Witten:2016iux}. It can be formulated for arbitrary rank tensors. For all the ranks it has a large $N$ limit dominated by melonic graphs \cite{critical}. The classification of graphs at any order in $1/N$ has been performed \cite{GurSch,Gurau:2016lzk}. 
 
Let us consider $D+1$ tensor fields of rank $D$. We denote the fields $T^i_{A^i}$ where $i=0,\dots D$ is the \emph{color} of $T^i$ and the multi index $A^i$ is $A^i = \{ a^i_j | j\neq i\}$. All the indices $a$ go from $1$ to $N$ and the tensors have no symmetry property. For example in rank $D=3$ the list of fields is:
\begin{equation}
 T^0_{a^0_1a^0_2a^0_3} \equiv T^0_{A^0} \; , \;\; 
 T^1_{a^1_0a^1_2a^1_3} \equiv T^1_{A^1} \;, \;\; 
 T^{2}_{a^2_0a^2_1a^2_2} \equiv T^2_{A^2} \;, \;\; 
 T^3_{a^3_0 a^3_1a^3_2} \equiv T^3_{A^3} \;.
\end{equation}
 Observe that the indices $a$ have two colors.  We denote $\delta_{A^iB^i} =\prod_{j\neq i} \delta_{a^i_j b^i_j}$.
 
 The model has a global symmetry group ${\cal O}(N)^{D(D+1)/2}$ consisting in an
orthogonal transformation $O^{(ij)} = O^{(ji)}\in {\cal O}(N)$ 
  for each couple of colors $(ij)$. Under the action of the symmetry group, \emph{both} indices $a^i_j$ and $a^j_i$ transform in the fundamental representation of $O^{(ij)}$. In detail, the global symmetry acts on $T^i$ as:
\begin{equation}
(T')^i_{B^i} = \prod_{j\neq i} O^{(ij)}_{b^i_j a^i_j} T^i_{A^i} \;, \qquad
  B^i = \{b^i_j | j\neq i \}\;, \;\; A^i = \{a^i_j | j\neq i\} \;.
\end{equation}
For example in rank $3$ we have $   O^{(10)} = O^{(01)}$  and so on and the fields transform as: 
\begin{equation}
(T')^0_{b^0_1 b^0_2b^0_3} = O^{(01)}_{b^0_1 a^0_1 } 
O^{(02)}_{b^0_2 a^0_2}O^{(03)}_{b^0_3a^0_3} T_{a^0_1a^0_2a^0_3}  \;,\;\;
 (T')^1_{b^1_0 b^1_2b^1_3} = O^{(10)}_{b^1_0 a^1_0 } 
 O^{(12)}_{b^1_2 a^1_2}O^{(13)}_{b^1_3a^1_3} T_{a^1_0a^1_2a^1_3} \;, \;\; {\rm etc.}\;. 
\end{equation}
The action and partition function of the model are:
   \begin{equation}
     S = \frac{1}{2}\sum_i T^i_{A^i} C^{-1} T^i_{A^i} + 
        \frac{ \lambda }{N^{ D(D-1) / 4 } } \prod_{i< j} \delta_{a^i_j a^{j}_i} \prod_i T^i_{A^i} \;,\qquad {\cal Z} = 
         \int[dT] \; e^{ -S } \;,
    \end{equation}
 where we have included a redundant covariance $C$ for the Gaussian part.
 Due to the global symmetry the two point functions of the model are diagonal both in the colors and in the indices:
   \begin{equation}\label{eq:2poincolor}
   \Braket{T^i_{A^i} T^j_{B^j}}_c = \delta^{ij}\delta_{A^iB^i} \, G \;, \qquad  
    G = C +  C \lambda \partial_{\lambda} (N^{-D} \ln {\cal Z} ) \; . 
    \end{equation}
 $G$ is obtained by taking a two point function, contracting its external indices respecting  the colors and dividing by $N^D$.

    The partition function and the correlations can be computed in the Feynman expansion.
    The  Feynman graphs are $D+1$ valent and the edges have a color $i=0,\dots D$. One can give a detailed, \emph{stranded}, representation of the Feynman graphs adapted to tracking the indices of the tensor. This is represented  in Fig.~\ref{fig:stranded} for $D=3$. Each tensor is represented as a half edge with $D$ strands, one for each one of its indices. $D+1$ half edges meet at a vertex and for every couple of half edges two strands (representing the indices $a^i_j$ and $a^j_i$) are joined. The edges transmit $D$ strands.
    
             \begin{figure}[htb]
        \begin{center}
    \includegraphics[width=0.4\textwidth]{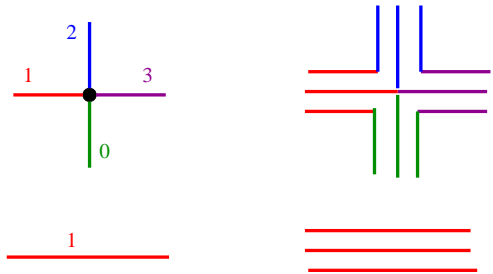}              
    \caption{The vertex and the propagator of the colored tensor model in rank $D=3$}\label{fig:stranded}
        \end{center}
     \end{figure}
     
    The vacuum Feynman graphs are edge $(D+1)$--colored graphs  \cite{color,review}. A graph $\cG$ has:
    \begin{itemize}
     \item[-] $V(\cG)$ vertices of coordination $D+1$. 
     \item[-] $\frac{D+1}{2}V(\cG)$ edges colored $0,1 \dots D$ such that at every vertex we have exactly one edge of each color.
     \item[-] $F(\cG)$ \emph{faces}, that is bi colored cycles.
    \end{itemize}
    
    The faces track the indices of the tensors: the indices are transmitted along the edges and turn around vertices, thus an index $a^i_j = a^{j}_i$ follows the face $(ij)$ and we get a free sum (hence a factor $N$) whenever the face closes. Open graphs arising in the Feynman expansion of correlations have additional  external points corresponding to the external field insertions and open strands connecting pairs of external indices. 
    
    In order to compute $G$ in Eq.~\eqref{eq:2poincolor} we use a trick (depicted in Fig.~\ref{fig:trick}): we contract the external indices of a two point function respecting the colors and divide by $N^D$.
    We denote $\mathfrak{G}$ the set of rooted\footnote{A rooted graph is a graph with one edge (the root) marked by an arrow. For colored graphs we fix the color of the root edge.}, connected edge $(D+1)$--colored graphs and we get:
   \begin{equation}
    G = \frac{1}{N^D} \delta_{A^iB^i} \Braket{T^i_{A^i} T^i_{B^i} }_c = 
    \sum_{\cG \in \mathfrak{G}} (-\lambda)^{V(\cG)} C^{\frac{D+1}{2} V(\cG)+1}
      N^{-D -\frac{D(D-1)}{4} V(\cG) + F(\cG)} \; ,
   \end{equation}
  where the root edge represents the external contraction $ \delta_{A^iB^i} $. Remarkably, every rooted graph with unlabeled vertices has a combinatorial factor 1. Observe that $\mathfrak{G}$ contains a graph with no vertices (on the left in Fig.~\ref{fig:trick}). It corresponds to the Gaussian pairing of the two external tensors and brings a covariance $C$. 
  
    \begin{figure}[htb]
        \begin{center}
    \includegraphics[width=0.3\textwidth]{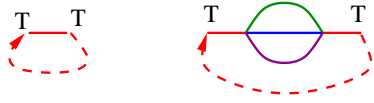}              
    \caption{Rooted graphs.}\label{fig:trick}
        \end{center}
     \end{figure}
  
  The crucial fact is that the numbers of faces and vertices of a connected graph $\cG$ are related \cite{expansion1,expansion2,expansion3} by the following relation (see Proposition\ref{prop:facecount}, Appendix \ref{app:deg}):
  \begin{equation}\label{eq:counting}
         F(\cG) = D + \frac{D(D-1)}{4} V (\cG) -\hat \omega(\cG) \; , \qquad \hat \omega(\cG) \ge 0 \;,
  \end{equation}
  where the \emph{reduced degree} $\hat \omega(\cG) $ of a graph $\cG$ is a non negative half integer.
  The properties of the degree are discussed in detail in Appendix \ref{app:deg}. The two point function (and all the other correlation functions) admits a $1/N$ expansion indexed by the degree:
 \begin{equation}
  G = \sum_{\hat \omega \in \mathbb{N}/2} N^{-\hat\omega}\sum_{\cG \in    \mathfrak{G}  }^{\hat \omega(\cG) = \hat \omega } (-\lambda)^{V(\cG)} C^{\frac{D+1}{2} V(\cG)+1} \;. 
 \end{equation}
 
At leading order one obtains only the graphs with $\hat\omega(\cG) = 0$. The graphs of degree zero are the melonic graphs \cite{critical}, see Proposition \ref{prop:melon}, Appendix \ref{app:deg}. Opening a rooted melonic vacuum graph at the root one obtains a melonic two point graph and at leading order:
\begin{equation}
 (G^{LO})^{-1} = C^{-1} - \Sigma^{LO} \;,\qquad \Sigma^{LO} = \lambda^2 (G^{LO})^{D} 
 \; .
\end{equation}

\subsection{The colored tensor--vector model}

This model is the zero dimensional counterpart of the Sachdev--Ye--Kitaev model  \cite{Sachdev:1992fk,Kitaev,Maldacena:2016hyu,Polchinski:2016xgd,Gross:2016kjj,Gurau:2017xhf}. It comes in two flavors, quenched and annealed, which coincide at the first few orders. Although more involved, the sub leading corrections have been classified also in this case \cite{Fusy:2018xax}. 

 The model consists in $D$ vectors $\psi^i_{a_i}$ (distinguished by the color $i$) coupled by a 
 random coupling $T_{a_1,\dots a_D}$. The random coupling is a rank $D$ tensor with no symmetries distributed on a Gaussian. The action of the model is:
     \begin{equation}
      S =  \frac{1}{2} \sum_i \psi^i_{a_i}C^{-1}\psi^i_{a_i} + \lambda \; T_{a_1\dots a_D} \prod_{i} \psi^i_{a_i} \;.
     \end{equation}

 We denote $T\cdot T \equiv T_{a_1\dots a_D} T_{a_1\dots a_D}$ and $[dT] = \prod_{a_1\dots a_D}
 N^{(D-1)/2} dT_{a_1\dots a_D}$. One can either take  the quenched or the annealed averages over the random couplings. Consequently one has the quenched and the annealed free energies, the quenched and the annealed two point functions, and so on. Denoting  ${\cal Z}(T)  = \int [d\psi] \; \exp\{ - S \} $ we have:
     \begin{align}
     & {\cal F}^{(q)}  = \frac{1}{N} \int [dT] \; e^{-\frac{ N^{D-1} }{2  } T\cdot T  }\ln \big( {\cal Z}(T) \big) \;, \qquad
     {\cal F}^{(a)}  = \frac{1}{N} \ln \left( \int [dT] \; e^{-\frac{N^{D-1}}{2} T\cdot T   }  {\cal Z}(T) \right) \;,    \crcr
 & \Braket{\psi^i_{a_i} \psi^j_{a_j} }_c^{(q)}  = 
  \int[dT] \; e^{-\frac{ N^{D-1} }{2  } T\cdot T } 
   \frac{1}{  {\cal Z}(T)  } \int [d\psi] \; e^{-S} \psi^i_{a_i} \psi^j_{a_j}    \;, \crcr 
 &   \Braket{\psi^i_{a_i} \psi^j_{a_j} }_c^{(a)}  = 
      \frac{1}{ 
      \int[dT] \; e^{-\frac{ N^{D-1} }{2  } T\cdot T   }
      {\cal Z}(T) }
  \int[dT] \; e^{-\frac{ N^{D-1} }{2  } T \cdot T  } 
  \int [d\psi] \; e^{-S} \psi^i_{a_i} \psi^j_{a_j}  \; ,
 \end{align}
The two point functions (both the quenched and the annealed one) are again diagonal in the colors and in the indices:
\begin{equation}
 \Braket{\psi^i_{a_i} \psi^j_{a_j} }_c^{(q),(a)}  = \delta^{ij} \delta_{a_ia_j} G^{(q),(a)} \;.
\end{equation}

The Feynman graphs are still edge $(D+1)$--colored graphs: $0$ is the color of the tensor (disorder) averages and $1,\dots D$ are the colors of the vector contractions. 
One can give a stranded represent in which the tensor has $D$ strands and the vectors only one strand as depicted in Fig.~\ref{fig:SYKstranded} on the right.
        
   \begin{figure}[htb]
        \begin{center}
    \includegraphics[width=0.5\textwidth]{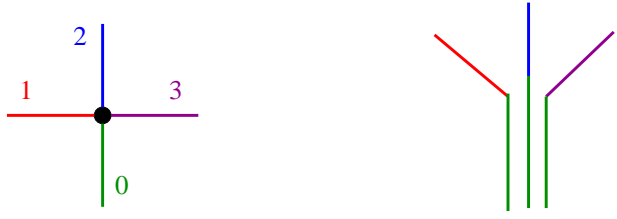}              
    \caption{Colored and stranded representation of the vertex of the tensor--vector model in rank $D=3$.}\label{fig:SYKstranded}
        \end{center}
     \end{figure}
     
     From the point of view of the index contractions the edge $0$ is very different from the others: the edges $1,\dots D$ carry an index each, while the edge $0$ carries $D$ indices. By the same trick as before the two point functions can be expressed as sums over rooted (the root has  color $i\neq0$) connected edge $(D+1)$--colored graphs:
\begin{equation}
\begin{split}
   G^{(q),(a)}  =& \frac{1}{N} \delta_{a_ia_j} 
  \Braket{\psi^i_{a_i} \psi^j_{a_j} }_c^{(q),(a)} \crcr
 &\qquad = \sum_{\cG \in \mathfrak{G}^{(q),(a)}} \ (-\lambda)^{ V(\cG) }
    C^{\frac{D}{2}V(\cG) +1 }   N^{-1-(D-1) V(\cG) + \sum_i F^{(0i)}(\cG)} \;,
\end{split}
\end{equation}
where $F^{(0i)}$ denotes the number of faces with colors $(0i)$ of $\cG$.
         
Contrary to the colored tensor model, we now get a free sum (hence a factor $N$) only for the faces involving the color $0$: it is quite  clear in the stranded representation of Fig.~\ref{fig:SYKstranded} that there is no index going from the edge $1$ to the edge $2$. Therefore the faces $(12)$, which are the bi colored cycles made by edges of colors $1$ and $2$ do not contribute to the amplitude. It is convenient to add and subtract the missing faces, $(ij)$ with $i,j\neq 0$. The number of faces which involve the color zero is the total number of faces minus the number of faces which do not involve the color $0$:
\begin{equation}
\sum_i F^{(0i)}(\cG)  = \sum_{0\le i<j \le D} F^{(ij)}(\cG) - \sum_{1\le i<j \le D} F^{(ij)}(\cG) \;.
\end{equation}

The only difference between the quenched and annealed models is the class of graphs over which we sum. In the annealed case we sum over all the (rooted) connected edge $(D+1)$--colored graphs $\mathfrak{G}^{(a)} \equiv \mathfrak{G}$. For the quenched case, we sum only over the (rooted) connected edge $(D+1)$--colored graphs which remain connected after deleting all the edges of color $0$. We denote the set of graphs with this property  $\mathfrak{G}^{(q)}$.
       
Let $\cG$ be a connected edge $(D+1)$--colored graph, and let us denote $\cG^0$ the edge $D$--colored graph obtained from $\cG$ by deleting the edges of color 0 (which correspond to the disorder averages). In general $\cG^0$ can be disconnected and we denote $C(\cG^0)\ge 1$ the number of connected components of $\cG^0$. As $\cG^0$ is an edge $D$--colored graph it has a reduced degree $\hat\omega(\cG^0)$ and the total number of its faces is given by Eq.~\eqref{eq:counting} with $D$ shifted to $D-1$.
We define the SYK degree of $\cG$ to be $\hat \Omega(\cG)  = \hat \omega(\cG) - \hat \omega(\cG^0)$.
This number is a half integer which obeys the bounds (see Proposition \ref{prop:SYK} in Appendix \ref{app:deg}):
\begin{equation}
\frac{1}{D} \hat \omega(\cG) \le   \Omega(\cG)    \le \hat \omega(\cG) \; ,
\end{equation}
hence in particular it is non negative. A straightforward computation yields:
\begin{equation}
G^{(q),(a)}  = \sum_{\cG \in \mathfrak{G}^{(q),(a)}} \ (-\lambda)^{ V(\cG) }
C^{\frac{D}{2}V(\cG) +1 }  N^{ - \big[ C(\cG^0)-1\big] (D-1) -  \Omega(\cG) } \; .
\end{equation}
For the quenched model we always have $C(\cG^0) = 1 $, but for the annealed model $C(\cG^0) \ge 1$. However, in the large $N$ limit, both the quenched and the annealed models are dominated by graphs with $C(\cG^0) = 1 $ and $\Omega(\cG) =0 $. The quenched and the annealed models coincide up to the order $N^{-(D-1)}$. If one uses the replica trick to compute the quenched averages, the departure between the quenched and the annealed models signals a replica symmetry breaking.
     
The SYK degree $\Omega(\cG)$ is non negative and is zero for a connected graph $\cG$ if and only if $\cG$ is melonic. If $\cG$ is melonic, then $\cG^0$ is a union of melonic graphs. At leading order $\cG^0$ is furthermore connected, hence it consists in exactly one melonic graph\footnote{Observe that in this case one can uniquely reconstruct $\cG$ starting from $\cG^0$.}. Therefore at leading order we get:
\begin{equation}\label{eq:SYKSDE}
 (G^{LO})^{-1} = C^{-1} - \Sigma^{LO} \;,\qquad \Sigma^{LO} = \lambda^2 (G^{LO})^{D-1} \;.
\end{equation}
       
       The annealed model (and consequently the quenched model at first orders) can be simplified by introducing ``bi local'' fields integrating over the disorder:
\begin{align}
& \int [dT] [d\psi] \; e^{-S} =\crcr
& \qquad= \int[d\psi] \; e^{-\frac{1}{2} \sum_i \psi^i_{a_i}C^{-1} \psi^i_{a_i} + \frac{\lambda^2}{2N^{D-1}} \prod_{i} \psi^i_{a_i}\psi^i_{a_i}} \int [dG^i][d\Sigma^i]  e^{ - \frac{1}{2} \sum_i \Sigma^i (NG^i - \psi^i_{a_i}\psi^i_{a_i} )} \crcr
& \qquad= \int [dG^i][d\Sigma^i] [d\psi]\; e^{ -\frac{1}{2} \sum_i  \psi_{a_i} (C^{-1} -\Sigma^i) \psi_{a_i} -N \big( \frac{1}{2} \sum_i G^i \Sigma^i   - \frac{\lambda^2}{2} \prod_i G^i \big)} \crcr
& \qquad = \int [dG^i][d\Sigma^i] \; e^{  -N \big( \pm \frac{1}{2} \sum_i \Tr\ln(1 - C \Sigma^i) + \frac{1}{2}\sum_i G^i \Sigma^i   - \frac{\lambda^2}{2} \prod_i G^i \big)} \;,
\end{align}
   where the $-$ sign is obtained for fermionic vectors (which requires even rank $D$), the $+$ sign for bosonic ones, and we took into account that the Gaussian integral over the vector fields is normalized. The advantage of this representation is that $N$ is an overall scaling and the $1/N$ expansion is a standard saddle point approximation. The saddle point equations write (using the fact that the saddle is color symmetric):
   \begin{equation}
    \Sigma - \lambda^2 G^{D-1} =0 \; ,\qquad G - \frac{1}{C^{-1}-\Sigma} =0 \;,
   \end{equation}
   which reproduce the SDE in the melonic limit \eqref{eq:SYKSDE}. The $1/N$ expansion is obtained by computing the saddle point corrections. However, we stress that this gives the $1/N$ expansion of the annealed model, hence fails to reproduce the one of the quenched case starting with the order $N^{-(D-1)}$. 

   \subsection{The ${\cal O}(N)^3$ model}
   
This model is sometimes called the Carrozza--Tanasa--Klebanov--Tarnopolsky model \cite{Carrozza:2015adg,Klebanov:2016xxf,Giombi:2017dtl}. As the name suggests, the model is defined only for rank 3  tensors. Its interest resides in the fact that it includes all the radiative corrections for quartic interactions, hence a field theory built on it is well adapted to a renormalization group study. At leading order the model is dominated by a \emph{melon-tadpole} graphs, a slight generalization of melons. The first sub leading orders of this model are understood \cite{Benedetti:2018goh,Carrozza:2015adg}.
 
 The field of the model is a rank 3 non symmetric tensor $\phi_A = \phi_{a_1a_2a_3}$
 transforming in the three fundamental representation of ${\cal O}(N) \otimes {\cal O}(N) \otimes {\cal O}(N)$, that is under a change of basis each index turns with its own orthogonal transformation:
 \begin{equation}(\phi')_{b_1b_2b_3} = O^{(1)}_{b_1a_1}O^{(2)}_{b_2a_2} O^{(3)}_{b_3a_3} \phi_{a_1a_2a_3}  \; .\end{equation}
One can consider the slightly more general case of a ${\cal O}(N_1) \otimes {\cal O}(N_2) \otimes {\cal O}(N_3) $ symmetry, but we will refrain from doing this here. We  denote by capital letters triples of indices: $A = \{ a_1,a_2,a_3\}$ and so on and we define three patters of contraction of indices among four tensors:
     \begin{align}
       &  \delta^t_{ABCD} = ( \delta_{a_1b_1} \delta_{c_1d_1} ) (\delta_{a_2c_2} \delta_{b_2d_2}) (\delta_{a_3d_3 } \delta_{b_3c_3}) \;,   \crcr
       & \delta^p_{AB;CD} = \frac{1}{3} \sum_{j} \delta_{a_jc_j} \delta_{b_jd_j}
    \left( \prod_{i\neq j} \delta_{a_ib_i} \delta_{c_id_i} \right)
    \; \qquad \delta^d_{AB;CD} = \prod_{i} \delta_{a_ib_i} \delta_{c_id_i} \;.
     \end{align}
The first pattern $ \delta^t_{ABCD}$ is called the \emph{tetrahedral} pattern, the second one $\delta^p_{AB;CD} $ the \emph{pillow} and the third one the \emph{double trace}. The action of the model is:
     \begin{equation}
      S = \frac{1}{2} \phi_{A} C^{-1}\phi_A +\left(  \frac{\lambda}{4N^{3/2}} 
      \delta^t_{ABCD} + \frac{\lambda_p}{4N^2} \delta^p_{AB;CD} +
        \frac{\lambda_d}{4N^3} \right)
         \phi_{A} \phi_{B} \phi_{C} 
         \phi_{ D } \; ,
     \end{equation}
and the two point function is diagonal in the tensor indices $\Braket{ \phi_A \phi_B } = \delta_{AB} G $.

Again one can compute the partition function and correlations in a Feynman expansion.
The Feynman graphs of the CTKT model are stranded graph  made of stranded vertices connected by stranded edges, as depicted in 
Fig.~\ref{fig:textCTKTvertices}.
The strands are associated to the indices of the tensors.
All the vertices are four valent and the edges have three strands.
The strands have a color and close into the faces of the graph. The faces correspond, again, to free sums over indices.      
From left to right in Fig.~\ref{fig:textCTKTvertices} we represented the  tetrahedral, the pillow and the double trace vertex. There are three kinds of pillow vertices, as a function of the special color which is transmitted from on pair of half edges to the other. 
      \begin{figure}[htb]
        \begin{center}
    \includegraphics[width=0.4\textwidth]{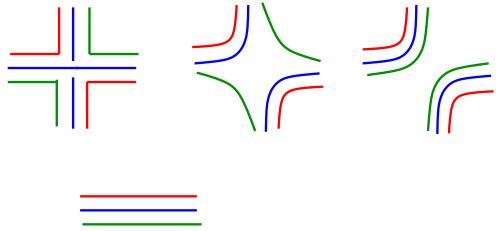}     
    \caption{Vertices and edges of the CTKT model}  \label{fig:textCTKTvertices}       
        \end{center}
     \end{figure}

We denote $V_t(\cG)$, $V_p(\cG)$ and $V_d(\cG)$ the numbers of tetrahedral, pillow and double trace vertices of a CTKT graph $\cG$, and $F(\cG)$ the number of faces of $\cG$. The number of pillow vertices splits as the sum of the numbers of pillow vertices of each kind. The edges are not colored, but the faces are colored with a color 1, 2 or 3. Using the same trick as before, we can compute the two point function by reconnecting the external edges and dividing by $N^3$, obtaining $G$ as a sum over rooted graphs:
     \begin{equation}
     \begin{split}
    G = & \sum_{\cG\in \mathfrak{G}} (-\lambda)^{V_t(\cG)}
       \left(-\frac{\lambda_p}{3}\right)^{V_p(\cG)} (-\lambda_d)^{V_d(\cG)} \crcr
    & \qquad \qquad  C^{2 \big[ V_t(\cG)+ V_p(\cG) +V_d(\cG) \big] +1}
       N^{-3 - \frac{3}{2} V_t(\cG) - 2V_p(\cG) - 3V_d (\cG) + F (\cG) } \;.
            \end{split}
     \end{equation}   
As in the case of the colored model, the number of faces of a CTKT graph can be computed in terms of the number of vertices (see Proposition~\ref{prop:CTKTcount} in Appendix \ref{app:deg}):
     \begin{equation}
      F(\cG) =   3 + \frac{3}{2} V_t(\cG) + 2V_p(\cG) + 3V_d(\cG) - 
             \omega(\cG) \; ,
     \end{equation}
 where the  CTKT degree  $ \omega(\cG)$ is a non negative half integer. The two point function (and any other correlation) has a $1/N$ expansion indexed by the CTKT degree:
 \begin{equation}
      G = \sum_{\omega \in \mathbb{N}/2} N^{-\omega} 
      \sum_{\cG\in \mathfrak{G}}^{\omega(\cG) = \omega} (-\lambda)^{V_t(\cG)}
        \left(-\frac{\lambda_p}{3}\right)^{V_p(\cG)} (-\lambda_d)^{V_d(\cG)} 
          C^{2 \big[ V_t(\cG)+ V_p(\cG) +V_d(\cG) \big] +1}
        \;.
 \end{equation}
   
   At leading order one obtains only \emph{melon-tadpole} graphs (see Appendix \ref{app:deg} for details). Similarly to the melonic graphs, the melon tadpole graphs can be seen as a truncation of the self energy depicted in 
   Fig.~\ref{fig:melon-tadpole1}.
  
\begin{figure}[htb]
        \begin{center}
    \psfrag{g}{$G$}
    \psfrag{a}{$\lambda_p,\lambda_d$}
    \psfrag{l}{$\lambda$}
    \psfrag{s}{$\Sigma$}
    \includegraphics[width=0.5\textwidth]{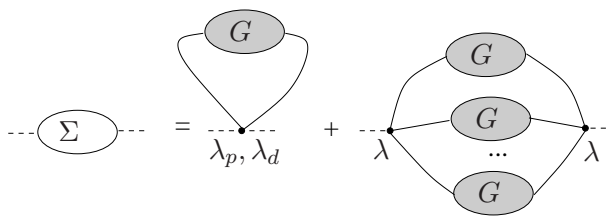}         
    \caption{The melon-tadpole self energy.}\label{fig:melon-tadpole1}       
        \end{center}
\end{figure}

In the melon-tadpole truncation one includes two graphs. The first one is a tadpole graph whose vertex is either a pillow or the double trace such that the edge closes the maximal number of faces. The second is a melon with two tetrahedral vertices. At leading order we have:
\begin{equation}
 (G^{LO})^{-1} = C^{-1} - \Sigma^{LO} \;,\qquad 
 \Sigma^{LO} = - (\lambda_p+ \lambda_d) G^{LO}
  + \lambda^2  (G^{LO})^{3} \;. 
\end{equation}

 \newpage
 
\section{The 2PI formalism}\label{sec:2PI}
\setcounter{equation}{0}

 \bigskip

 It is often convenient to recast a QFT in terms of an effective action \cite{JonaLasinio:1964cw,Cornwall:1974vz}. Effective actions integrate out the quantum fluctuations and the correlation functions are computed by functional derivatives. The most familiar case is the one particle irreducible (1PI) effective action, but the concept is directly generalized to $p$ particle irreducible effective actions \cite{JonaLasinio:1964cw}. 
 
Unsurprisingly, in practice the effective actions are very difficult to compute
and one needs to resort to truncations. The situation is greatly simplified in the case of vector models. The two particle irreducible (2PI) action is particularly well suited for their study because it can be explicitly computed order by order in the $1/N$ expansion \cite{Berges:2004yj}. It turns out that tensor models are similar\cite{Benedetti:2018goh}, and the 2PI action can be computed explicitly at first orders in $1/N$. However, contrary to the vector case, for tensors the effective action is non local at leading order which leads to much richer physics.
  
\subsection{The 2PI effective action}

We introduce some more notation, to be used only in this section. We denote functionals by bold letters and functions by straight letters. We denote
the field $\phi_x$, where $x$ denotes the position and any additional indices. Repeated indices are summed/integrated. Bi local fields are denoted $G_{xy}$,  $J_{xy}$ and so on. A dot denotes integrals  and index contractions. We will sometimes suppress the indices to simplify the notation.

We consider a scalar theory with action and partition function: 
\begin{equation}\label{eq:defmodel2PI}
\pmb{S}[\phi] = \frac{1}{2} \phi \cdot C^{-1} \cdot \phi + \pmb{S}^{\rm int}[\phi] \; , \qquad 
 {\cal Z} =  \int [d\phi] \; e^{ - \pmb{S}[\phi]  } \;.
\end{equation}
The interaction part of the action $ \pmb{S}^{\rm int}[\phi] $ can include
bi valent vertices. They will always be treated as a perturbation of the free theory defined by the covariance $C$. In addition, we require the one point function of the theory to be zero
$
 \Braket{\phi_x} = {\cal Z}^{-1} \int [d\phi] \; e^{ - \pmb{S}[\phi]  } \phi_x  =0 
$.
This is guaranteed if the action is even $\pmb{S}[ - \phi] = \pmb{S}[ \phi]$, which we will assume from now on. Note however that for a colored model the one point function is zero in any rank, even though the action is even only for odd rank. 

In order to define the effective action \cite{Benedetti:2018goh,Benedetti:2019eyl}, we start from the generating function with bi local source term $J_{xy}$:
\begin{equation}
 e^{\pmb{W}[J]}  = \int  [d\phi] \; e^{-\pmb{S}[\phi] + \frac{1}{2} \phi \cdot J \cdot \phi  }   
 \;.
\end{equation}
Observe that, even in the presence of the source, the odd point expectations are zero. The functional $\pmb{W}[J]$ can either be seen as a generating function of connected moments with a bi local source, or as the free energy of the theory with shifted covariance $C^{-1} - J$. It includes the ring graph consisting in only an edge closing onto itself whose amplitude is\footnote{We normalized the integral to $1$ for $\pmb{S}^{\rm int}=0, J=0, C=1$.} 
$-\frac{1}{2} \Tr \ln [  C^{-1} - J ] $. 

The derivatives of $\pmb{W}$ are\footnote{For symmetric functions the derivative is the symmetric projector $\frac{\delta J_{xy}}{ \delta_{J_{ab}}} = \frac{1}{2} \bigg(  \delta_{xa} \delta_{yb} +\delta_{xb} \delta_{ya}  \bigg)\equiv{\cal S}_{xy;ab}$.}:
\begin{align}\label{eq:Wderiv}
  2 \frac{\delta \pmb{W}  }{ \delta J_{xy} } & = \Braket{\phi_x \phi_y}^J = \Braket{\phi_x \phi_y}^J_c \equiv \pmb{G}_{xy} \;,\\
 4  \frac{\delta^2 \pmb{W}}{  \delta J_{xy} \delta J_{ab} } & =
  \Braket{\phi_x\phi_y \phi_a \phi_b}^J - 
  \Braket{\phi_x \phi_y}^J \Braket{\phi_a \phi_b}^J
  = \Braket{\phi_x\phi_y \phi_a \phi_b}^J_c + \pmb{G}_{xa}\pmb{G}_{yb}
   + \pmb{G}_{xb} \pmb{G}_{ya} \;,\nonumber
\end{align}
where the upper script $J$ signifies that that correlations are computed in the presence of the bi local source $J$. The functional $\pmb{G}_{xy}$ (which is a functional of the source $J$) is the connected two point function of the theory with source $J$. Note that the second derivative of $\pmb{W}$ is exactly the four point function $ \Braket{\phi_x\phi_y \phi_a \phi_b}^J_{(xy) \to (ab)} $  we encountered in Section \ref{sec:lect0}. Going  ``on shell''  means putting the source $J=0$. 

We denote $\pmb{J}[G ]$ the  inverse functional of $\pmb{G}[J]$, that is the solution of the equation $\pmb{G}[J]=G$. The effective action is the Legendre transform of $\pmb{W}$ with respect to $J$:
\begin{equation}\label{eq:2PIdef}
\begin{split}
&  \pmb{\Gamma}[G] = -\pmb{W}[ \pmb{J} ] + 
   \frac{1}{2} \Tr[G\pmb{J}]
\;,\crcr 
& \frac{ \delta \pmb{\Gamma} }{\delta G_{xy}}= \frac{1}{2} \pmb{J}_{xy} \; ,\qquad
 \frac{\delta^2 \pmb{\Gamma}}{\delta G_{ab} \delta G_{xy} } = 
 \frac{1}{2} \frac{\delta \pmb{J} }{\delta G } =
 \frac{1}{2} 
 \left( \frac{\delta \pmb{G} }{ \delta J }\right)^{-1}_{J = \pmb{J}} 
  = \frac{1}{4} \left( \frac{\delta^2 \pmb{W} }{\delta J_{xy} \delta J_{ab}}\right)^{-1}_{J = \pmb{J}} \;.
  \end{split}
\end{equation}

This Legendre transform can be written as a functional integral for $\phi$ with inverse covariance $ C^{-1}- \pmb{ J } $ and interaction $\pmb{S}^{ \rm int}[\phi] $:
\begin{equation}
  e^{-\pmb{\Gamma}[G]}   = e^{-\frac{1}{2}\Tr[G\pmb{J}]}\int  [d\phi] \;
  e^{- \frac{1}{2} \phi\cdot (C^{-1} - \pmb{J}) \cdot \phi - \pmb{S}^{\rm int}[\phi] } \;,
\end{equation}
where $\pmb{J}[G]$ is fixed by the condition $\braket{\phi \phi}^{\pmb{J}}_c = G$.

We denote $-\pmb{\Gamma}^{\rm 2PI}[G]$ the generating function of nontrivial 2PI graphs (that is graphs which do not disconnect when cutting two edges) with propagator $G$ and vertices in $\pmb{S}^{ \rm int}[\phi] $. If $ \pmb{S}^{ \rm int}[\phi]$ has bi valent vertices, $ \pmb{\Gamma}^{\rm 2PI}[G]$  contains the graph formed by only one edge with propagator $G$ connected on the bi valent vertex.
For example, in zero dimension with $\pmb{S}^{ \rm int}[\phi] = \frac{m^2}{2} \phi^2 + \frac{\lambda}{4!} \phi^4$ we have:
\begin{equation}
 \pmb{\Gamma}^{\rm 2PI}[G] = \frac{m^2}{2} G + \frac{\lambda}{4!} 3G^2 - \frac{1}{2} 
  \left( \frac{\lambda}{4!} \right)^2 4! G^4 + {\cal O}(\lambda^3)\;.
\end{equation}
The associated graphs are depicted in Fig.~\ref{fig:2PIaction}.
Observe that the mass vertex appears in only one 2PI graph.
\begin{figure}[htb]
        \begin{center}
    \includegraphics[width=0.3\textwidth]{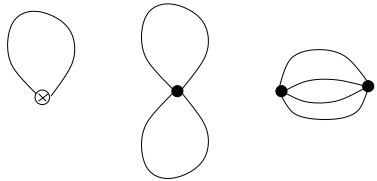}         
    \caption{Graphs contributing to the 2PI action at first orders in a $\phi^4$ theory.}\label{fig:2PIaction}       
        \end{center}
\end{figure}

The derivatives of the 2PI generating function are denoted:
\begin{equation}
 \pmb{\Sigma}[ G]_{xy} = -2 \frac{ \delta  \pmb{\Gamma}^{\rm 2PI} }{\delta G_{xy}}
  \;,\qquad \pmb{ K }[G]_{ab;xy} = G_{aa'} G_{bb'} \frac{\delta \pmb{\Sigma}_{xy}}{ \delta G_{a'b'}} \;.
\end{equation}
$ \pmb{\Sigma}$ is the \emph{self energy} (the amputated one particle irreducible two point function) expressed in terms of the full two point function $G$. To see this we observe that the derivative cuts an edge and $2$ counts the ways to attach it to  the external end points. The fact that this is nothing but the self energy in which all the propagators are fully dressed comes from the remark that the configuration depicted on the left in Fig.~\ref{fig:2Pself} is excluded by the two particle irreducibly condition.
\begin{figure}[htb]
        \begin{center}
    \includegraphics[width=0.6\textwidth]{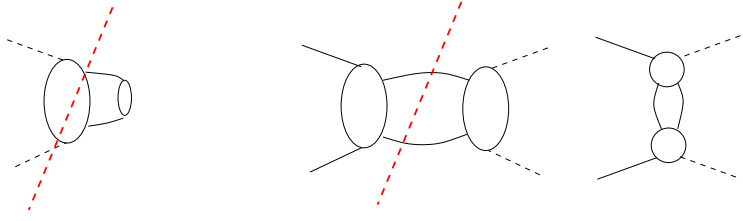}         
    \caption{On the left: a two particle reducible contribution to the self energy. On the right: two contributions to the four point kernel, one which is two particle reducible in the channel $(ab) \to (cd)$ and one which is not (although it is two particle reducible in a different channel. Solid lines represent full two point functions $G$ while dashed lines represent amputations.}\label{fig:2Pself}       
        \end{center}
\end{figure}

The kernel $\pmb{K}$ is the \emph{irreducible four point kernel} in the channel $(ab)\to (xy)$. As it comes from a derivative of $\pmb{\Sigma}$, it can not contain two edges which, when cut, disconnect the kernel into a component having the external points $a,b$ and another component having the external points $x,y$ (this is depicted in Fig.~\ref{fig:2Pself} in the middle). However, the kernel $\pmb{K}$ can be disconnected by cutting two edges in a different channel.

For any source $J$, the two point function is determined self consistently by the Schwinger Dyson equation:
\begin{equation}
  G^{-1} = C^{-1} - J -  \pmb{\Sigma}[ G] \;.
\end{equation}
This equation fixes the source $\pmb{J}[G] = C^{-1} - G^{-1} - \pmb{\Sigma}[ G] $ which ensures that the two point function is exactly $G$. In particular:
\begin{equation}\label{eq:2PIexp}
\frac{ \delta \pmb{\Gamma} }{\delta G} = \frac{1}{2} \pmb{J} 
   =\frac{1}{2} C^{-1} - \frac{1}{2}G^{-1} - \frac{1}{2}\pmb{\Sigma}[G]  \;,
\end{equation}
and, recalling that the self energy is the derivative of the 2PI generating function, this equation can be integrated to obtain\footnote{It is sometimes useful to give a formal functional integral formula for $\pmb{\Gamma}$:
\[
  e^{-\pmb{\Gamma}[G]} 
 = e^{ - \frac{1}{2}\Tr[ C^{-1}  G ] } 
 \int_{2PI} [d\phi] \; e^{- \frac{1}{2} \phi \cdot 
 G^{-1} \phi  - \pmb{S}^{ \rm int}[\phi]} \;.
\]}:
\begin{equation}
\pmb{\Gamma}[G] =  \frac{1}{2}\Tr\big[ C^{-1} G \big]  -   \frac{1}{2} \Tr [\ln ( G ) ] +\pmb{\Gamma}^{\rm 2PI}[G] \;.
\end{equation}

\begin{figure}[htb]
        \begin{center}
    \includegraphics[width=0.9\textwidth]{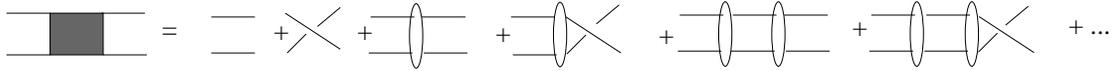}         
    \caption{The four point Dyson equation.}\label{fig:Dyson}       
        \end{center}
\end{figure}

The second derivative of $\pmb{\Gamma}$ is related to the four point kernel:
\begin{equation}
 \frac{\delta^2 \pmb{\Gamma}}{\delta G_{ab}\delta G_{xy}} =  
 \frac{1}{2} G^{-1}_{aa'} G^{-1}_{bb'}  \bigg( {\cal S} - \pmb{ K} \bigg)_{a'b';xy} \;.
\end{equation}
Combining this with Eq. \eqref{eq:2PIdef} and \eqref{eq:Wderiv} we find 
the Dyson equation (see Fig.~\ref{fig:Dyson}) of the four point function connecting $(ab)$ to $(xy)$:
\begin{equation}\label{eq:truc1}
 \Braket{\phi_x\phi_y \phi_a \phi_b}^{\pmb{J}}_{(ab)\to (xy)} =
  \left( \frac{1}{1 - \pmb{K} }\right)_{ab;x'y'} 
  \bigg(   G_{x'x}  G_{y'y} +  G_{x'y}  G_{y'x} \bigg) \;.
\end{equation}

The correlation functions of the original theory are recovered by taking derivatives of $\pmb{\Gamma}$ and then going on shell, that is setting the two point function to be $G^0$, the solution of:
\begin{equation}
\pmb{J}[G^0] = 0 = C^{-1} - (G^0)^{-1}-\pmb{\Sigma}[G^0] \;.
\end{equation}

 \subsection{The Bethe Salpeter equation}

Let $O_{\bar \mu;x}$ be a local operator with some spin:
\begin{equation}
  \left[ \partial_{\mu_1  \dots \mu_{j_1}}(-\partial^2)^{n_1} \phi_x \right] 
    \left[ \partial_{\mu_{j_1 +1}  \dots  \mu_{j_1+j_2}}(-\partial^2)^{n_2} \phi_x \right] \dots 
     \left[ \partial_{\mu_{j_{q-1} + 1}   \dots  \mu_{j_{r-1}+j_r}}(-\partial^2)^{n_r} \phi_x \right] 
    \; .
\end{equation}
We aim to find a closed equation for the three point connected function 
$\Braket{\phi_{x_1} \phi_{x_2} O_{\bar\mu;x}}_c$. To this end we define the generating function with a source $\epsilon^{\bar \mu}_x$ for our  operator:
\begin{equation}
 e^{\pmb{W}^{\epsilon}[J]} = \int [d\phi] \; e^{- \pmb{S}[\phi] + \frac{1}{2}
 \phi \cdot J \cdot \phi + \epsilon^{\bar \mu} \cdot O_{\bar \mu}} \;,
\end{equation}
We assume that the one point expectation $\Braket{\phi_x}$ is zero in the presence of the source. This is always the case if $O$ contains an even number of fields $\phi$. Then: 
\begin{equation}
\begin{split}
 \Braket{O_{\bar \mu; x} }^{J,\epsilon}_c & = \Braket{O_{\bar \mu; x} }^{J,\epsilon} = \frac{\delta \pmb{W}^{\epsilon} }{\delta \epsilon^{\bar \mu}_x} \; , \crcr
  \Braket{\phi_{x_1} \phi_{x_2} O_{\bar \mu ; x}}^{J,\epsilon}_c & = 
    \Braket{\phi_{x_1} \phi_{x_2} O_{\bar \mu ; x} }^{J,\epsilon} -   \Braket{\phi_{x_1} \phi_{x_2} }^{J,\epsilon}_c \Braket{O_{\bar \mu; x}}^{J,\epsilon}_c = 2 \frac{ \delta^2 \pmb{W}^{\epsilon} }{\delta J_{x_1x_2} \delta \epsilon^{\bar \mu}_x} \;. 
\end{split}
\end{equation}
The physical expectations in the theory are obtained by going on shell $J=0$ and setting the source $\epsilon =0$. The results of the previous section go through, except that everything now depends on the source $\epsilon$. We have:
\begin{equation}
 2\frac{\delta \pmb{W}^{\epsilon}}{ \delta J_{x_1x_2} } = \pmb{G}^{\epsilon}[J]_{x_1x_2} \;,\qquad
  \frac{\delta \pmb{G}^{\epsilon}_{x_1x_2}}{\delta \epsilon^{\bar \mu}_{x}} =  \Braket{\phi_{x_1} \phi_{x_2} O_{\bar \mu;x} }_c^{J,\epsilon} \;,
\end{equation}
and we denote $\pmb{J}^{\epsilon}[G]$ the inverse functional of $\pmb{G}^{\epsilon}[J]$
and the Legendre transform of $\pmb{W}^{\epsilon}[J] $ by:
\begin{equation}
\pmb{\Gamma}^{\epsilon}[G] = - \pmb{W}^{\epsilon}[\pmb{J}^{\epsilon}] + \frac{1}{2}\Tr[G \pmb{J}^{\epsilon} ]  \;,\qquad 
 \pmb{\Gamma}^{\epsilon}[G] =  \frac{1}{2}\Tr\big[ C^{-1} G \big]  -   \frac{1}{2} \Tr [\ln ( G ) ] +\pmb{\Gamma}^{\rm 2PI;\epsilon}[G]  \; .
\end{equation}

The functional $- \pmb{\Gamma}^{\rm 2PI;\epsilon}[G] $ is now the sum over 2PI graphs (from the point of view of $\phi$) with propagator $G$ and vertices in $\pmb{S}^{\rm int}[\phi]$ or $\epsilon\cdot O$. With respect to the previous case, we now have additional vertices with coordination $r$ representing insertions of the operator $O$ in the graphs. The derivatives of $\pmb{\Gamma}^{\epsilon}$ can be computed using either the Legendre transform or the explicit formula relating $\pmb{\Gamma}^{\epsilon}$ to $\pmb{\Gamma}^{\rm 2PI;\epsilon}$. In particular:
\begin{equation}
\begin{split}
 - 2 \frac{\delta \pmb{\Gamma}^{\epsilon}}{ \delta G_{x_1x_2} \delta\epsilon^{\bar \mu}_{x}} 
& = -2 \frac{\delta }{ \delta \epsilon^{\bar \mu}_x} 
 \left(  \frac{\delta \pmb{\Gamma}^{\epsilon}}{\delta G_{x_1x_2}} \right)
 = -  \frac{\delta \pmb{J}^{\epsilon}_{x_1x_2}}{ \delta \epsilon^{\bar \mu}_x}  [G]  \;, \crcr  
 - 2 \frac{\delta \pmb{\Gamma}^{\epsilon}}{ \delta G_{x_1x_2} \delta\epsilon_{x}} 
&   =  - 2 \frac{\delta \pmb{\Gamma}^{\rm 2PI; \epsilon}}{ \delta G_{x_1x_2} \delta\epsilon^{\bar \mu}_{x}}  \equiv  
   \Braket{\phi_{x_1} \phi_{x_2} O_{\bar \mu ; x} }^{\pmb{J}^{\epsilon} ,\epsilon }_{2PI}  \; .
\end{split}
\end{equation}
The last correlation $    \Braket{\phi_{x_1} \phi_{x_2} O_{\bar \mu ; x} }^{G;\epsilon}_{2PI} $ is represented in Fig.~\ref{fig:3point2PI}. 
It is two particle irreducible in the channel  $(\phi\phi) \to O$, that is by cutting two edges it can not be disconnected in a component containing both external points $\phi$ and a second connected component containing the operator $O$.
\begin{figure}[htb]
        \begin{center}
    \includegraphics[width=0.4\textwidth]{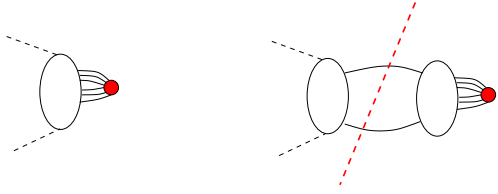}         
    \caption{On the left the correlation function $    \Braket{\phi_{x_1} \phi_{x_2} O_{\bar \mu ; x} }^{G;\epsilon}_{2PI} $. On the right a contribution which is two particle reducible in the channel $(\phi\phi) \to O$. The red dot represents the composite operator.}\label{fig:3point2PI}       
        \end{center}
\end{figure}

On the other, by definition $  \pmb{G}^{\epsilon}[\pmb{J}^{\epsilon}[G]] = G  $
therefore:
\begin{equation}
0=  \frac{\delta \pmb{G}^{\epsilon}_{x_1x_2}}{\delta \epsilon^{\bar \mu}_x }\Big{|}_{ 
 J =  \pmb{J}^{\epsilon} } 
  + \frac{ \delta \pmb{G}^{\epsilon}_{x_1x_2} } { J_{ab} } \Big{|}_{ 
 J =  \pmb{J}^{\epsilon} }   
   \frac{\delta \pmb{J}^{\epsilon}_{ab } } {\delta \epsilon^{\bar \mu}_x}  
   \Rightarrow  \frac{\delta \pmb{J}^{\epsilon}_{ab } } {\delta \epsilon^{\bar \mu}_x} 
    = - 2 \frac{ \delta^2 \pmb{\Gamma}^{\epsilon}}{\delta G_{ab} \delta G_{uv} } \frac{\delta \pmb{G}^{\epsilon}_{uv}}{\delta \epsilon^{\bar \mu}_x }\Big{|}_{ 
 J =  \pmb{J}^{\epsilon} }  \;.
\end{equation}
Putting everything together we conclude that:
\begin{equation}
 \Braket{\phi_{x_1} \phi_{x_2} O_{\bar \mu ; x }}^{\pmb{J}^{\epsilon} ,\epsilon }_{2PI} = 
  2 \frac{ \delta^2 \pmb{\Gamma}^{\epsilon} }{\delta G_{x_1x_2} \delta G_{uv} } 
  \frac{\delta \pmb{G}^{\epsilon}_{uv}}{\delta \epsilon^{\bar \mu}_x }
  \Big{|}_{  J =  \pmb{J}^{\epsilon} }  
 = G^{-1}_{x_1a} G^{-1}_{x_2b} ({\cal S} - \pmb{K}^{\epsilon})_{ab;uv} \Braket{\phi_{u} \phi_{v} O_{\bar \mu ; x} }^{\pmb{J}^{\epsilon} ,\epsilon }_{c} \;,
\end{equation}
which can be rewritten, taking $\epsilon =0$, in the form:
\begin{equation}
 \Braket{\phi _{x_1} \phi_{x_2 } O_{\bar \mu; x} }^{\pmb{J}}_{c} =
    G_{x_1a} G_{x_2b} \Braket{\phi_{a} \phi_{b} O_{\bar \mu;x} }^{\pmb{J}}_{2PI}   +  \pmb{K}_{x_1x_2;ab} \Braket{\phi _{a} \phi_{b } O_{\bar \mu;x} }^{\pmb{J}}_{c}  \; .
\end{equation} 

This equation should be compared with Eq.~\eqref{eq:4pointCFT}:
\begin{equation}
 \int dx^d_3 dx^d_4 \;  K(x_1,x_2 ; x_3, x_4) \Braket{\phi(x_3) \phi(x_4) 
    O_{\Delta,J}^{\bar \mu}(x) } = k(\Delta,J) 
 \Braket{\phi(x_1) \phi(x_2) 
    O_{\Delta,J}^{\bar \mu}(x) } \; ,
\end{equation}
together with the condition that the physical primary operators are such that $k(\Delta,J)=1$. This means that in a CFT the 2PI contribution to the three point function of two $\phi$ fields and a primary $O_{\bar \mu}$ must be identically zero. The implications of this fact need to be investigated in depth.

\subsection{The $1/N$ expansion and melonic theories}

We consider the generalization of the $O(N)^3$ model to dimension $d$. 
Using the notation of Section~\ref{sec:lect1}. The field is a tensor with three indices $\phi_A(x)$ and the action of the $O(N)^3$ field theory is:
     \begin{equation}
     \begin{split}
      \pmb{S}[\phi] =&  \frac{1}{2} \int_{xy} \phi_{A}(x) ( C^{-1})(x,y)\phi_A(y)  +
         \frac{1}{2}   m \int_x   \phi_{A}(x) \phi_{A}(x)  \crcr
        & +
       \int_x \left(  \frac{\lambda}{4N^{3/2}} 
      \delta^t_{ABCD} + \frac{\lambda_p}{4N^2} \delta^p_{AB;CD} +
        \frac{\lambda_d}{4N^3} \right)
         \phi_{A}(x) \phi_{B}(x) \phi_{C}(x) 
         \phi_{ D }(x)  \; ,
    \end{split}
     \end{equation}
 where from now on we reinstate the separation between the position arguments and the tensor indices. We added a mass parameter $m$ and the covariance of the theory, $C$, is kept arbitrary for now.  
     
 The source $J_{AB}(x,y)$ is bi local both in positions and tensor indices. In order to determine the scaling in $N$ of a term in the 2PI action, we use the diagonal ansatz $G_{AB}(x,y) \sim \delta_{AB}G(x,y)$ because on shell the two point function is indeed diagonal in the tensor indices. It follows that the scaling in $N$ of the 2PI graphs is just the standard scaling in $N$ discussed in Section~\ref{sec:lect1}. At leading order only melon--tadpole graphs contribute, and the only 2PI melon--tadpole graphs are those represented in Fig.~\ref{fig:2PIaction}. This is because an insertion of a melon or a tadpole in any of the three graphs yields a two particle reducible contribution. At leading order in $1/N$ we get:
\begin{equation}
\begin{split}
   \pmb{\Gamma}^{2PI}[G] = &   \frac{m }{2} \Tr[G]  + \int_x G_{AB}(x,x)  \left(  \frac{\lambda_p}{4N^2} \delta^p_{AB; CD}  +  \frac{\lambda_d}{4N^3}  \delta^d_{AB;CD} \right) G_{CD}(x,x)   \crcr
  &\qquad - \frac{1}{2} \left( \frac{\lambda}{4 N^{3/2}}\right)^2 4 \int_{x,y} 
  \delta^t_{A_1A_2A_3A_4 } \delta^t_{B_1B_2B_3B_4 } \prod_{i}G_{A_iB_i}(x,y) 
  \; .
\end{split}
\end{equation}
The self energy is then:
\begin{equation}
\begin{split}
  \pmb{\Sigma}[G]_{AB}(x,y) = & -m  \delta_{AB} \delta_{xy} - 
  \left( \frac{\lambda_p}{N^2}   \delta^p_{AB ; CD} + \frac{\lambda_d}{N^3}\delta^d_{AB ; CD}   \right) \delta_{xy} G_{CD}(x,x)   \crcr
  &\qquad + \frac{\lambda^2}{N^3} \delta^t_{AA_1A_2A_3 }  \delta^t_{B B_1B_2B_3} 
    \prod_{i=1}^3 G_{A_iB_i}(x,y)  \;,
\end{split} 
\end{equation}
and the irreducible kernel is:
\begin{align} \label{eq:kernel}
& \pmb{ K }_{A'B';CD } (x'y'; zt) = \\
& =  G_{A'A}(x',x) G_{B'B}(y',y)  \bigg[
    - \left( \frac{\lambda_p}{N^2} \delta^p_{AB;CD} + \frac{\lambda_d}{N^3}  \delta^d_{AB;CD} \right) \delta_{xy} \delta_{xz} \delta_{xt}  \crcr
 & \qquad + \frac{\lambda^2}{N^3} \delta^t_{AA_1A_2A_3 } 
 \delta^t_{BB_1B_2B_3 }    \sum_{i=1}^3
     \left( \frac{1}{2} \delta_{xz}\delta_{yt} \delta_{A_i C} \delta_{ B_i D} + \frac{1}{2}\delta_{xt}\delta_{yz}  \delta_{A_i D} \delta_{B_i C}  \right) 
    \prod_{j\neq i} 
    G_{C_jD_j}(x,y) \;. \nonumber
\end{align}

  \newpage

\section{Renormalization in a tensor field theory}\label{sec:ren}
\setcounter{equation}{0}

 \bigskip
 
In Sections~\ref{sec:lect0} and \ref{sec:lect1} we have seen that melonic CFTs can be analytically treated and that in many models the melonic limit can be recovered as a large $N$ limit. CFTs should correspond to fixed points of the renormalization group and infrared fixed points are especially interesting because they describe the low energy behavior of theories.

The natural question is: are there examples of field theories with infrared attractive fixed points described by melonic CFTs? The answer to this question is yes: many fermionic models in less than 2 dimensions \cite{Witten:2016iux,Prakash:2017hwq,Benedetti:2017fmp,Klebanov:2018nfp,Pakrouski:2018jcc,Klebanov:2019jup}, and some supersymmetric \cite{Popov:2019nja}
or bosonic ones \cite{Giombi:2018qgp} in dimension strictly less that 3 do have melonic infrared fixed points.

However, it turns out that it is not so easy to find models with melonic fixed points in $d=3$ dimensions. In this section we discuss one example which works \cite{Benedetti:2019eyl}. 

From now on we consider $d<4$ dimensions. Although we keep $d$ generic, we are especially interested in the $d=3$ case.
Our starting point is the $O(N)^3$ field theory described in section \ref{sec:2PI},
with a suitable covariance:
      \begin{equation}\label{eq:actionfin}
     \begin{split}
      \pmb{S}[\phi] =&  \frac{1}{2} \int d^d x  \;  \phi_{A}(x)
      ( -\partial^2)^{\zeta} \phi_A(x)  +
         \frac{1}{2}   m \int d^dx\;   \phi_{A}(x) \phi_{A}(x)  \crcr
        & +
       \int d^dx \; \left(  \frac{\lambda}{4N^{3/2}} 
      \delta^t_{ABCD} + \frac{\lambda_p}{4N^2} \delta^p_{AB;CD} +
        \frac{\lambda_d}{4N^3} \right)
         \phi_{A}(x) \phi_{B}(x) \phi_{C}(x) 
         \phi_{ D }(x)  \; ,
    \end{split}
     \end{equation}
where for now $\zeta$ is not fixed and  we take the large $N$ limit.

The first choice that comes to mind  \cite{Giombi:2017dtl} is 
$\zeta=1$, that is the tensor generalization of the 
standard $\phi^4_4$ theory. The quartic couplings are classically marginal in dimension $d=4$, and one can search for Wilson Fisher \cite{Wilson:1971dc} like fixed point in $d=4-\epsilon$ dimensions \cite{Giombi:2017dtl}. At first orders one finds the beta functions
(where $\tilde g = g/(4\pi)^2$):
\begin{equation}\label{eq:WFKlebanov}
\begin{split}
\beta_{\tilde g} &= -\epsilon \tilde g + 2\tilde g^3 \;,\qquad
\beta_{\tilde g_p} = -\epsilon \tilde g_p + 
\left( 6\tilde g^2 +\frac{2}{3} \tilde g_p^2\right) - 2  \tilde g^2\tilde g_p \;, \crcr
\beta_{\tilde g_d} &= -\epsilon \tilde g_d + 
\left( \frac{4}{3}\tilde g_p^2 + 4 \tilde g_p\tilde g_d + 2\tilde g_d^2 \right)
  - 2\tilde g^2 ( 4\tilde g_p + 5\tilde g_d ) \;,
\end{split}
\end{equation}
which admit a fixed point 
$g_{\star} = ( \epsilon /2 )^{1/2}, \; 
  g_{p,\star} = \pm 3\im  ( \epsilon /2 )^{1/2} ,\; 
 g_{d,\star} = \mp \im (3\pm \sqrt{3} ) ( \epsilon /2 )^{1/2}$. 

 The pillow and double trace couplings are purely imaginary at the fixed point. This in itself is not a problem, but a more careful study reveals other unpleasant features of the fixed point: searching for the dimensions of the physical primary fields at this fixed point along the lines of Section \ref{sec:lect0}, one finds a primary with complex dimension $d/2 \pm \im \alpha$ \cite{Giombi:2018qgp}. Now, this is problematic:
\begin{itemize}
  \item in an  AdS/CFT  picture \cite{Gubser:1998bc,Witten:1998qj}, bulk fields with mass $m_{AdS}$ corresponds to boundary single trace primaries with dimensions $\Delta=  d/2 \pm (d^2 / 4 + m_{AdS}^2)^{1/2}  $. Primaries with dimensions $d/2 \pm \im \alpha$ correspond to fields with 
  $m_{AdS}^2< -d^2/4$ violating the Breitenlohner--Freedman \cite{Breitenlohner:1982jf} bound.
  \item a physical primary with dimension $d/2\pm\im \alpha$ represents a pole of the density $\rho(\Delta,J)$ located exactly on the original contour of integration of the partial waves (recall Section \ref{sec:lect0}). The initial expansion of the four point function in terms of partial waves needs to be revisited in order to deal with this singularity.
  \item the problematic primary is the mass operator. A dimension of a mass--like primary operator of the form $d/2\pm\im \alpha$ has recently been shown in a similar model \cite{Kim:2019upg} to correspond to an instability and signal that the corresponding operator acquires a non zero vacuum expectation value. 
 \item the dimension of the mass is half the one of the double trace invariant \cite{Gubser:2002vv} which is $d+ \nu$, with $\nu$ the critical exponent of the double trace coupling. The dimension $d/2 \pm \im \alpha$ of the mass implies that the double trace coupling has a purely imaginary critical exponent. The fixed point is a limit cycle, not an infrared fixed point (see Appendix\ref{app:ren}).
 \end{itemize}

 In $4+\epsilon$ dimensions the problem goes away, but the fixed point turns out to be an ultraviolet fixed point: the pillow and double trace couplings are relevant at the fixed point. 
 
 In order to find a genuine infrared fixed point described by a melonic CFT one needs to take a more drastic approach.
 According to Eq.~\eqref{eq:meloSDE}, in the melonic limit the field is expected to acquire the infrared scaling dimension $\Delta_{\phi} = d/q$, with $q=4$ in our case. The idea \cite{Benedetti:2019eyl} is to modify the ultraviolet scaling of the free part of the action $\zeta$ in such a way that the UV scaling dimension of the field  
 $( d-2\zeta ) /2 $ equals the IR one. 
 
 From now on we fix $\zeta = d /2 -d/q$, which is $\zeta = d/4$ for $q=4$. 
 This means that the free part of the action has a non integer power of the momentum. 
  Before continuing, let us briefly comment on this. Although models with non integer scaling have been considered in the literature \cite{Brydges:2002wq,Abdesselam:2006qg} (and more recently in \cite{Gross:2017vhb} in the context of the SYK model), they might be somewhat unfamiliar to the reader. 
  
  For any $\zeta \le 1$ the free theory:
  \begin{equation}\label{eq:libzeta}
   \pmb{S}_0[\phi] = \frac{1}{2} \int d^dx \; \phi(x) (-\partial^2)^{\zeta} \phi(x) \; , \qquad \zeta\le 1 \;,
 \end{equation}
 is \emph{unitary} because it is explicitly Osterwalder Schrader positive. Indeed, the covariance:
\begin{align}
C(p)  & = \frac{1}{p^{2\zeta}} = \frac{1}{\Gamma(\zeta)} \int_0^{\infty} 
d\alpha \; \alpha^{\zeta-1} e^{-\alpha p^2} \;, \\
C(x-y)  &  =  \frac{1}{ (4\pi)^{d/2} \Gamma(\zeta)} \int_0^{\infty}  d\alpha \; \alpha^{\zeta-1-d/2} e^{- \frac{(x-y)^2}{4\alpha}}   =  \frac{\Gamma\left( \frac{d}{2} -\zeta \right) }{2^{2\zeta} \pi^{d/2} \Gamma(\zeta) }\;\frac{1}{|x-y|^{d-2\zeta}}\; ,\nonumber
\end{align}
 (where the last equality holds for $d-2\zeta>0$) admits an absolutely convergent K\"all\'en--Lehmann spectral representation as a superposition of massive particles with a continuous mass spectrum:
\begin{equation}
  \frac{1}{p^{2\zeta}} = \frac{1}{\Gamma(\zeta) \Gamma(1-\zeta)}\int_0^{\infty}dx \; \frac{ x^{-\zeta} }{p^2 + x} \; .
\end{equation}
The condition $\zeta<1$ is crucial for the convergence in $0$. 

One can also consider interacting theories with $\zeta<1$  \cite{Brydges:2002wq,Abdesselam:2006qg}. The most well known example is the Brydges--Mitter--Scoppola model with $d=3$,  $\zeta = 3/4+\epsilon$ and $\lambda\phi^4$ interaction. This model has:
\begin{itemize}
 \item the Gaussian fixed point where the quartic coupling is relevant,
 \item an interacting fixed point $g_{\star} \sim \epsilon$ (with $g$ the running dimensionless quartic coupling) where the quartic coupling is irrelevant,
 \item a renormalization group trajectory connecting the two fixed points.
\end{itemize}
These statements can are rigorously proven  \cite{Brydges:2002wq,Abdesselam:2006qg}. 
The infrared fixed point of this model is the inspiration for using a non integer scaling in our case.

\subsection{Renormalization}

Although $q$ is fixed to 4, we will often keep it generic. There are two reasons for this. First, this makes the comparison with Section \ref{sec:lect0} easier. Second, it is likely that some of the results can be generalized for colored models with $q$ body interactions. 
As $\zeta = d/2-d/q$ and $\zeta<1$ we obtain a bound $d < 2q/(q-2)$ that is:
\begin{itemize}
 \item[-] for $q=4$, $d < 4$. In particular this covers a quartic model in $d=3$, our main interest. The case $4-\epsilon$ can also be recovered.
 \item[-] for $q=6$, $d<3$.
 \item[-] any $q$ in $d=2$.
\end{itemize}

Following Appendix \ref{app:ren}, we introduce an ultraviolet cutoff $\Lambda$ and an infrared cutoff $k$:
\be
 C^{\Lambda}_k = \frac{1}{p^{2\zeta}} \chi^{\Lambda}_k(p) = 
    \frac{1}{\Gamma(\zeta)} \int_{\Lambda^{-2}}^{k^{-2}} d\alpha \;
     \alpha^{\zeta -1} \;e^{-\alpha p^2} \;, 
\ee
that is we chose a multiplicative cutoff $\Theta( u ) =\Gamma(\zeta)^{-1} \int_u^{\infty} d\alpha \; \alpha^{\zeta-1} e^{-\alpha}$ which is
an upper incomplete Euler Gamma function. We aim to compute the wave function renormalization (and consequently the anomalous field dimension) and the 
$\beta$ functions of the couplings.

We start from the large $N$ self energy and four point kernel. Using Section \ref{sec:2PI}, on shell we have $G_{AB}(x,y) = G_{xy} \delta_{AB}$ and:
 \begin{equation}
 \begin{split}
  &  \Sigma_{AB}(x,y)  = \delta_{AB} \Sigma_{xy}
   \;,\qquad \Sigma_{xy}= -m \delta_{xy} - (\lambda_p + \lambda_d) \delta_{xy}
     G_{xx}+ \lambda^2 G_{xy}^3  \;, \crcr
&  K_{AB;CD}(x'y';zt)  = G_{x'x} G_{y'y} \crcr
  &
  \qquad \qquad \bigg[  - \frac{\lambda_p}{N^2} \delta^p_{AB;CD}\delta_{xy} \delta_{xz} \delta_{zt}    - \frac{\lambda_d}{N^3} \delta^d_{AB;CD}\delta_{xy} \delta_{xz} \delta_{zt}  
     + \frac{\lambda^2}{N^2} 3 \delta^p_{AB;CD} \delta_{xz} \delta_{yt} G_{xy}^2 \bigg] \; ,
      \end{split}
 \end{equation}
where $m$ and $\lambda$ denote the dimensionful mass parameter and four point couplings at the UV scale $\Lambda$. 
 It is convenient to parametrize the interaction in terms of $\lambda_1 = \lambda_p/3$ and $\lambda_2 = \lambda_p+\lambda_d$ and the two mutually orthogonal orthogonal projectors $P_1 = 3 \left( \frac{1}{N^2} \delta^p - \frac{1}{N^3} \delta^d\right)$ and $P_2= \frac{1}{N^3} \delta^d$. In momentum space we get:
 \begin{align}\label{eq:onsheellker}
  & \Sigma(p)   = -m  - \lambda_2 \int_r G(r) + \lambda^2
     \int_{r_1r_2}G(r_1) G(r_2) G(p+r_1+r_2) \\
& K_{ p_1,p_2; r_1,r_2 }  =  (2\pi)^d \delta(p_1+p_2 - r_1 -r_2)   G(p_1) G(p_2) \crcr  
& \qquad \qquad \bigg[ \bigg( \lambda^2 \int_r G(r)G(r+p_1-r_1)       -\lambda_1 \bigg) P_1 + \bigg( 3\lambda^2 \int_r G(r)G(r+p_1-r_1) -  \lambda_2 \bigg) P_2 
\bigg]\; . \nonumber
 \end{align}
Note that $\lambda_1$ is essentially the pillow coupling and $\lambda_2$ essentially the double trace one.

 \paragraph{The wave function.} In momentum space the Schwinger Dyson equation with cutoffs becomes:
\begin{equation}\label{eq:SDEcutoff}
G_k(p)^{-1} =p^{2\zeta} \chi^{-1} +  m + \lambda_2 \int_r G_k(r) - \lambda^2
     \int_{r_1\dots r_{q-2}}G_k(r_1) \dots G_k(r_{q-2}) G_k(p+r_1+ \dots +r_q) \;,
\end{equation}
where in the last term we reintroduced a generic $q$. It turns out that
(after tuning the bare mass):
\begin{equation}
  G_k(p) = \frac{1}{Z p^{2\zeta}} \, \chi^{\Lambda}_k(p) \;,  \qquad
  \zeta = \frac{d}{2} -\frac{d}{q} \;,
\end{equation}
with a constant $Z$ (to be determined below) verifies Eq.~\eqref{eq:SDEcutoff} up to terms which vanish in the limit $k\to 0$. The important point is that the total $Z$ is finite, hence the anomalous field dimension $\eta_{\star}$ is zero. This is consistent with the the fact that $\zeta$ has been chosen such that the ultraviolet and infrared scaling dimensions of the field coincide.

To check this, let us first consider the local part of the right hand side of Eq.~\eqref{eq:SDEcutoff} (we need to remember that $q=4$ hence $\zeta = d/4$ for this discussion):
\begin{equation}
 m + \lambda_2 \int_r G_k(r) - \lambda^2
     \int_{r_1 r_2 }G_k(r_1)  G_k(r_2) G_k(r_1+ r_2) \;.
\end{equation}
The first term is $
 \int_r G_k(r) \sim k^{d/2} -\Lambda^{d/2}$, which vanishes in the $k\to 0$ limit if 
 $m=-\lambda_2 \Lambda^{d/2}$. The second term is similar. 
 
 Once the local part of the SDE is subtracted via a Taylor expansion with integral rest \cite{Benedetti:2019eyl} we can take the cutoffs to their limits and, rescaling the $\alpha$ by $p^2$, we obtain:
\begin{equation}
\begin{split}
  Z p^{2\zeta} = & p^{2\zeta} + p^{2\zeta } \frac{\lambda^2}{ (4\pi)^{d\frac{q-2}{2}} \Gamma(\zeta)^{q-1} Z^{q-1}} \crcr
  & \qquad \qquad \qquad\int_0^1 dt 
  \int_{0}^{\infty} d\alpha \; 
  \frac{ \prod_{i=1}^{q-1}  \alpha_i^{\zeta} }{
  \left( \sum_{i=1}^{q-1} \prod_{j\neq i} \alpha_j  \right)^{d/2+1}} 
 e^{ -t \frac{\prod_{i=1}^{q-1}  \alpha_i }{ \sum_{i=1}^{q-1} \prod_{j\neq i} \alpha_j  }  }  \;.
\end{split}
\end{equation}
Using Appendix \ref{app:wf} we see that the total wave function renormalization $Z$ verifies the equation:
\begin{equation}\label{eq:Z}
   1 = \frac{1}{Z} + \frac{1 }{ g_c^2 }  
  \left( \frac{ \tilde \lambda}{Z^{q/2}}\right)^2 \; , \quad
  \tilde \lambda \equiv \frac{\lambda}{ (4\pi)^{\frac{d(q-2)}{4}  } \Gamma(\zeta)^{q/2}}
   \;,\quad \frac{1}{g_c^2} = 
   \frac{\Gamma(\zeta) \Gamma(1-\zeta) \Gamma\left( \frac{d}{2} -\zeta \right)^{q-1}  }{\zeta \; \Gamma\left( \frac{d}{2} + \zeta\right) } \;. 
\end{equation}

It is instructive to compute the two point function in direct space
$G(x_{12}) = b |x_{12}|^{- 2\Delta_{\phi} } $. Taking the Fourier transform and recalling that $\Delta_{\phi} = d/q$ we obtain that $b$ verifies:
 \begin{equation}\begin{split}
    1 = b \frac{2^{d-2\Delta_{\phi}}\pi^{d/2}\Gamma(\frac{d}{2} -\Delta_{\phi})}{\Gamma(\Delta_{\phi} )}
    + \lambda^2b^q  \pi^d \frac{\Gamma\left( 1 - \frac{d}{2} + \Delta_{\phi}\right) \Gamma\left(\frac{d}{2} -\Delta_{\phi}\right)}
    {\left(\frac{d}{2} -\Delta_{\phi} \right) \Gamma(d-\Delta_{\phi}) \Gamma(\Delta_{\phi})} \;,
                 \end{split}
  \end{equation}
which reproduces Eq.~\eqref{eq:meloSDE} if one neglects the first term on the right hand side.
 
\paragraph{Four point couplings.} 
From now on we  denote:
\begin{equation}
 \tilde \lambda \equiv \frac{\lambda}{ (4\pi)^{\frac{d(q-2)}{4}  } \Gamma(\zeta)^{q/2}} \;, \qquad 
  \tilde \lambda_{i} \equiv \frac{\lambda_i}{ (4\pi)^{d/2} \Gamma(\zeta)^{2}} 
  \; .
\end{equation}

The classical scaling dimension of an operator $\partial^J \phi^n$ is 
$J + n\Delta_{\phi}$ (see Appendix \ref{app:ren}). In our case $\Delta_{\phi}=d/q$ and the only classically marginal operators are $J=0, n=q$. For $q=4$ they are the tetrahedron, pillow and double trace. 

At leading order in $1/N$ the tetrahedral coupling does not receive any radiative correlation  therefore the renormalized tetrahedral coupling is just a rescaling of the bare one by the wave function constant. Using the tilde  couplings we write:
\begin{equation}
  \tilde g = \frac{\tilde \lambda}{Z^{q/2}}   \;.
\end{equation}

The renormalized tetrahedral coupling does not run, hence it is just a parameter which can be adjusted. On the contrary, the pillow and double trace couplings do run. We denote $\tilde g_{1}, \tilde g_{2}$ the running dimensionless couplings at scale $k$ (we suppress the dependence in $k$ in order to simplify the notation). We will still keep $q$ generic in some formulae, but we will remember that the pillow and double trace couplings are four point couplings. The $\tilde g_i$s are the amputated 1PI four point functions at zero momentum divided by $Z^2$. In terms of the four point kernel Eq.~\eqref{eq:onsheellker} we get:
 \begin{equation}\label{eq:startbare}
 \tilde g_{i} = \frac{ \Gamma^{4;i} }{  (4\pi)^{ d/2 } \Gamma(\zeta)^{2}   Z^{2} } \;,\qquad 
  -\Gamma^{4;1} P_1  -\Gamma^{4;2} P_2 = G^{-1}G^{-1} \frac{K}{1-K} \;.
  \end{equation}
As $P_1$ and $P_2$ are mutually orthogonal, the two cases $i=1,2$ are identical up to substituting $\tilde \lambda^2 $ by $(q-1) \tilde \lambda^2$. 

Expanding the series in 
Eq.~\eqref{eq:startbare} we obtain the bare expansion: $\tilde g_1$ is a sum over ``sausage graphs'' depicted in Fig.~\ref{fig:bare1}. A sausage graph  is a sequence of vertical irreducible pieces connected by pairs of horizontal edges. The vertical pieces are either ladder rungs with two tetrahedral couplings or bare vertices $\lambda_1$. 

\begin{figure}[ht]
\begin{center}
\includegraphics[scale=1.3]{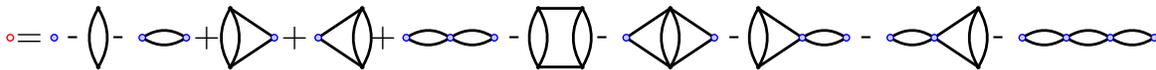} 
 \caption{The bare series up to quartic order (the blue vertices represent $\lambda_1$).} \label{fig:bare1}
 \end{center}
\end{figure}

Each graph has a (log divergent) amplitude:
\begin{equation}\label{eq:ampli1}
 A(\cG) = \int_{\Lambda^{-2}}^{k^{-2}} 
 \left( \prod_{e\in \cG} d\alpha_e \;  \alpha_e^{\zeta-1}\right) \; 
 \frac{ 1 }{ \big[ \sum_{ {\cal T} \subset \cG} \prod_{e\notin {\cal T} } \alpha_e \big]^{d/2} } \;,
\end{equation}
where $e\in \cG$ runs over the edges of $\cG$ and ${\cal T}$ over the trees in $\cG$ \cite{Rivasseau:1991ub,Krajewski:2008fa}. The graph consisting in only a bare vertex has amplitude $1$. We denote $\mathfrak{S}$ the set of connected sausage graphs with at least two internal vertices, and $n_t(\cG)$ respectively $n_1(g)$ the numbers of tetrahedral vertices and blue vertices of $\cG$. Then:
\begin{align}
  \tilde g_1( \tilde \lambda_1, \tilde \lambda)  =   \frac{\tilde \lambda_1}{Z^2}  
  +  \sum_{\cG\in \mathfrak{S}} (-1)^{1 + n_1(\cG) } 
     \left(  \frac{ \tilde \lambda}{Z^{q/2}}\right)^{n_{t}(\cG)} 
 \bigg( \frac{ \tilde \lambda_1}{Z^2}  \bigg)^{n_1(\cG)} 
 A(\cG)\; .
\end{align}
Observe that this is naturally a series in the renormalized tetrahedral coupling $\tilde g = Z^{-q/2} \tilde \lambda$. The bare expansion for $g_2$ is identical up to replacing $\tilde \lambda^2$ by $(q-1) \tilde \lambda^2$. 

The graphs with no internal $\lambda_1$ vertex are special. They might have no external $\lambda_1$ vertex either (ladders), or one external $\lambda_1$ vertex (caps) or two external $\lambda_1$ vertices (double caps), as depicted in Fig.~\ref{fig:1VR}.

\begin{figure}[htb]
\begin{center}
\includegraphics[scale=0.7]{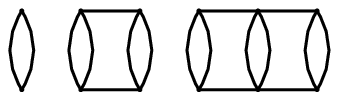} 
\hspace{40pt}
\includegraphics[scale=0.7]{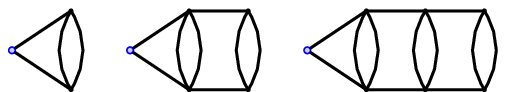} 
\hspace{40pt}
\includegraphics[scale=0.7]{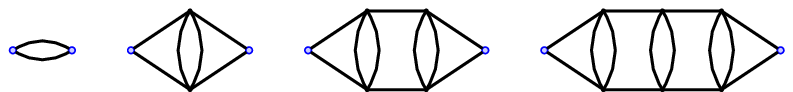} 
 \caption{Ladders, caps and double caps.} \label{fig:1VR}
 \end{center}
\end{figure}

Let us denote $U_r,S_r,T_r$ the amplitude of the ladder, cap respectively double cap with $2r$ tetrahedral vertices, and let us define the generating functions:
\begin{equation}
    U(\tilde g) =  \sum_{r\ge 1} \tilde g^{2r} U_r  \;,\qquad
     S(\tilde g) =  \sum_{r \ge 1 } \tilde g^{2r} S_r  \;,\qquad   
     T (\tilde g) =  \sum_{r\ge 0} \tilde g^{2r } T_r \; ,
\end{equation}
where, in $S(\tilde g)$ and $T(\tilde g)$, we have \emph{not} included any coupling constants for the $\lambda_1$ vertices. The crucial fact is that the amplitude of any sausage factors at the internal $\lambda_1$ vertices, thus:
\begin{align}\label{eq:bare1VR}
 \tilde g_1 = &- U( \tilde g) + \bigg( \frac{ \tilde \lambda_1}{Z^2} \bigg) 
\frac{\big[ 1+S( \tilde g) \big]^2}
 { 1 + \frac{ \tilde \lambda_1 }{Z^2} T( \tilde g ) } \;. 
\end{align}
This gives a particularly simple $\beta$ function and short computation yields:
\begin{equation}\label{eq:beta}
 \beta_{\tilde g_1} = k\partial_k \tilde g_1
 = \beta_0^{\tilde g}
  - 2  \beta_1^{\tilde g} \; \tilde g_1  + \beta_2^{\tilde g} \; \tilde g_1^2\;,
\end{equation}
with the coefficients of the $\beta$ function given by:
\begin{equation}\label{eq:coef}
 \begin{split}
 \beta_0^{\tilde g}  &= - k\partial_k  U + 2 \frac{U}{1+S}k\partial_k  S - \frac{U^2}{(1+S)^2} k\partial_k  T   \; , \qquad  \beta_2^{\tilde g} = - \frac{1}{(1+S)^2}  k\partial_k T
\;, \\
  \beta_1^{\tilde g} & = - \frac{1}{1+S}  k\partial_k S + \frac{U}{(1+S)^2}  k\partial_k  T \; .  
 \end{split}
\end{equation}

This result is an all order result in the couplings: this is the complete $\beta$ function at leading order in $1/N$. The important remark is that the $\beta$ functions are quadratic.  The coefficients $ \beta_{0,1,2}^{\tilde g}$ are series in the tetrahedral coupling $\tilde g$. While it is not obvious, they are finite term by term in the limit $k\to 0$ \cite{Benedetti:2019eyl}.   
 
\subsection{Fixed points}

Let us recapitulate where we stand. Starting with the UV scaling $\zeta = d/2-d/q<1$ hence field dimension $\Delta_{\phi}=d/q$ we obtained the following results.

\paragraph{\emph{Wave function.}} Tuning the renormalized mass to zero and lifting the cutoffs the two-point function is: 
 \begin{equation}
  G(p)  = 
  \frac{1}{Z p^{2\zeta}} \;, \qquad   1 = \frac{1}{Z} + \frac{\tilde g^2}{g_c^2} 
   \;, \qquad 
      \frac{1}{g_c^2} = 
   \frac{\Gamma(\zeta) \Gamma(1-\zeta) \Gamma\left( \frac{d}{2} -\zeta \right)^{q-1}  }{\zeta \; \Gamma\left( \frac{d}{2} + \zeta\right) } 
   \;,
 \end{equation}
that is the anomalous field dimension $\eta_{\star}$ is $0$. 

\paragraph{\emph{Tetrahedral coupling.}} The tetrahedral coupling has a finite flow, that is the renormalized coupling is a rescaling of the bare one $ \tilde g = Z^{-q/2} \tilde \lambda$. In particular we have:
  \begin{equation}
   Z= \left( 1 - \frac{ \tilde g^2}{g_c^2} \right)^{-1} \;, \qquad \lambda = \tilde g Z^2   
   \;.
  \end{equation}
   There are two cases, depicted in Fig.~\ref{fig:tetra}: $\tilde \lambda$ real and $\tilde \lambda$ purely imaginary:
   \begin{itemize}
    \item {\emph{$\tilde \lambda$ (and $\tilde g$) real:}} $\tilde\lambda(\tilde g)$ is invertible to $\tilde g(\tilde \lambda)$ for any $\tilde\lambda $ and $g< g_c$. 
    \item {\emph{$\tilde\lambda$ (and $\tilde g$) imaginary:}} $\tilde \lambda(\tilde g)$ is invertible to $\tilde g(\tilde\lambda)$ 
    for $|\lambda |< 3^{3/2}2^{-4}g_c$ and  $| g | < 3^{-1/2}g_c$ (the end point of the blue curve in  Fig.~\ref{fig:tetra}.
   \end{itemize}
\begin{figure}[htb]
\begin{center}
\includegraphics[scale=0.2]{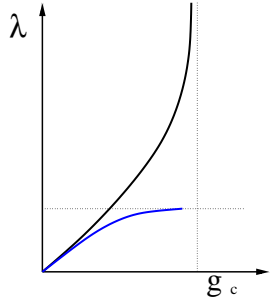} 
 \caption{Bare and renormalized tetrahedral couplings (in blue
 the purely imaginary case).}\label{fig:tetra}
 \end{center}
\end{figure}
  
\paragraph{\emph{Pillow and double-trace couplings.}} At leading order in $1/N$ but at all orders in the couplings the $\beta$ functions are:
  \begin{align}
 \beta_{\tilde g_1} & =  k \partial_k \tilde g_1 
   =  \beta_0^{\tilde g} - 2 \beta_1^{ \tilde g} \, \tilde g_1 +
   \beta_2^{\tilde g} \, \tilde g_1^2 \;, \qquad
    \crcr
  \beta_{ \tilde g_2} & =  k\partial_k \tilde g_2
   =  \beta_0^{ \sqrt{(q-1)} \tilde g }- 2 \beta_1^{\sqrt{(q-1)} \tilde g } \,  \tilde g_2 + \beta_2^{ \sqrt{q-1} \tilde g }\,  \tilde g_2^2 \;,
  \end{align}
where $   \beta_0^{ \tilde g}, \beta_1^{ \tilde g } $  and   $\beta_2^{ \tilde g}$ are power series in $\tilde g^2$. At first orders they are \cite{Benedetti:2019eyl}:
\begin{equation}
 \beta_0^{\tilde g} = 
 \left( 2\frac{\Gamma(\frac{d}{4})^2}{ \Gamma(\frac{d}{2})}\right) \tilde g^2+  \mathcal{O}(\tilde g^4) \;,\qquad
 \beta_1^{ \tilde g} =  \mathcal{O}(\tilde g^2) \;, \qquad
  \beta_2^{ \tilde g} = \left( 2\frac{\Gamma(\frac{d}{4})^2}{ \Gamma(\frac{d}{2})}\right) + \mathcal{O}(\tilde g^2) \;.  
\end{equation}
It follows that, non perturbatively, $\beta_{\tilde g_1}$ admits two fixed points:
   \begin{equation} \label{eq:FP}
   \begin{split}
 \tilde g_{1\pm}&  =  \frac{
    \beta_1^{\tilde g} 
    \pm \sqrt{ (\beta_1^{\tilde g})^2 -\beta_0^{\tilde g}\beta_2^{ \tilde g} }
    }{\beta_2^{\tilde g}}  = \pm\sqrt{-\tilde g^2} +  \mathcal{O}(\tilde g^2) 
    \; , \crcr
  \beta'_{g_1}( \tilde g_{1\pm}) & = \pm 2\sqrt{ (\beta_1^{\tilde g})^2 -\beta_0^{\tilde g}\beta_2^{\tilde g} }
     = \pm \sqrt{-\tilde g^2} \left( 4\frac{\Gamma(\frac{d}{4})^2}{ \Gamma(\frac{d}{2})}\right) +  \mathcal{O}(\tilde g^3)\;.
   \end{split}
 \end{equation}
 The same holds for $\tilde g_2$ substituting $\tilde g$ with $\sqrt{q-1} \tilde g$, consequently we obtain four lines of fixed points parameterized by the marginal coupling $\tilde g$.
 
 \paragraph{\emph{Stability.}} The critical exponents are purely imaginary 
 for $\tilde g$ real and not too large, that is the fixed points are limit cycles (see Appendix \ref{app:ren}) and no trajectory can reach them.
 
 The situation, depicted in Fig.~\ref{fig:flows}, is much more interesting for a purely imaginary tetrahedral coupling $\tilde g = \pm\im |\tilde g|$. In this case the fixed point values of the pillow and double trace couplings are real (for $\tilde g$ not too large) and the critical exponents are also real. In particular $g_{1+}>0$ and $\beta'_{g_1}( g_{1+}) >0$, that is $(g_{1+},g_{2+}) $ is an infrared attractive fixed point (both the pillow and the double trace couplings are irrelevant). 

  \begin{figure}[htb]
\begin{center}
\includegraphics[scale=0.4]{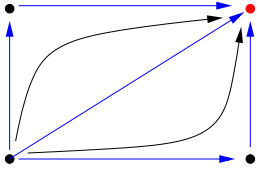} 
 \caption{Flows in the $(g_1,g_2)$ plane in the case of imaginary tetrahedral coupling.} 
 \label{fig:flows}
 \end{center}
\end{figure}
 
The tetrahedral invariant does not have any positivity property, but the pillow and double-trace do. It turns out that at the infrared fixed point $(g_{1+},g_{2+}) $, the real part of the action is bounded from below. 
 
 \paragraph{\emph{Comparison with $\zeta =1, d=4-\epsilon$.}}
 The case $\tilde g$ real is very similar to the Wilson-Fisher like fixed point discussed in Eq.~\eqref{eq:WFKlebanov}. In $d=4-\epsilon$ the tetrahedral coupling is not marginal (see Eq.~\eqref{eq:WFKlebanov}), but has a flow driven by the wave function. The flow has a fixed point for a real value of the tetrahedral coupling. 
 
 The key point is that for $\zeta =d/4$ the tetrahedral coupling is genuinely marginal and we are free to chose it purely imaginary. This leads to a stable infrared fixed point. 
 
\subsection{The infrared fixed point CFT}

The infrared fixed point $(g_{1+},g_{2+}) $ should be described by a melonic CFT. We will therefore attempt to identify the scaling dimensions and 
OPE coefficients of this CFT along the lines discussed in Section \ref{sec:lect0}. We consider the four point function:
\begin{equation}
\begin{split}
&  \frac{1}{N^6} \big\langle \phi_A(x_1) \phi_A(x_2)  \phi_B(x_3) \phi_B(x_4)  \big\rangle   \crcr
& = \frac{1}{N^3}  \big\langle \phi_A(x_1) \phi_A(x_2) \big\rangle 
\frac{1}{N^3} \big\langle \phi_B(x_3) \phi_B(x_4)  \big\rangle 
+ \frac{1}{N^6}\big\langle \phi_A(x_1) \phi_A(x_2)  \phi_B(x_3) \phi_B(x_4)  \big\rangle _{12 \to 34} \;.
\end{split}
\end{equation}
From Eq.~\eqref{eq:truc1}, the second term writes in terms of the four point kernel:
\begin{equation}
\frac{1}{N^6}\int_{y_1y_2}\left( \frac{1}{1-K} \right)_{AA; BB } (x_1,x_2 ; y_1,y_2)
   \left( G_{y_1 x_3}  G_{y_2x_4}  + G_{y_1x_4}  G_{y_2x_3}   \right) \;, 
\end{equation}
and from Eq.~\eqref{eq:onsheellker}, the four point kernel at leading order in $1/N$ is $K = K_1 P_1 + K_2 P_2$ with
$P_1 = 3 \left( \frac{1}{N^2} \delta^p - \frac{1}{N^3} \delta^d\right)$, $P_2= \frac{1}{N^3} \delta^d$ and:   
 \begin{equation}
 \begin{split}
&  K_1(x_1,x_2;y_1,y_2)  = G_{x_1z_1} G_{x_2z_2} \bigg[
\lambda^2 G_{z_1z_2}^{q-2} - \lambda_1 \delta_{z_1z_2} \bigg]  
     \delta_{z_1y_1} \delta_{z_2y_2} \;, \crcr
&  K_2(x_1,x_2;y_1,y_2)  = G_{x_1z_1} G_{x_2z_2} \bigg[ (q-1) \lambda^2 G_{z_1z_2}^{q-2} - \lambda_2 \delta_{z_1z_2} \bigg] 
     \delta_{z_1y_1} \delta_{z_2y_2} \;.
      \end{split}
 \end{equation}
Taking into account that $(P_1)_{AA;BB}=0$ and $(P_2)_{AA,BB} = N^3$ we obtain that only the term proportional to $P_2$ contributes: 
\begin{equation}
 \frac{1}{N^3}\int_{y_1y_2}\left( \frac{1}{1-K_2} \right) (x_1,x_2 ; y_1,y_2)
   \left( G_{y_1 x_3}  G_{y_2x_4}  + G_{y_1x_4}  G_{y_2x_3}   \right) \; .
\end{equation}
Recalling that the two point function in direct space is:
\begin{equation}
  G_{xy} = \frac{\Gamma(\Delta_{\phi})}{ 2^{ d-2\Delta_{\phi}} \pi^{d/2} \Gamma(\frac{d}{2} - \Delta_{\phi} ) Z }  \; \frac{1}{|x-y|^{2\Delta_{\phi}}}
   \;,
\end{equation}
we obtain that the eigenvalues of the kernel $k(\Delta, J)$ (see Section \ref{sec:lect0}) are:
\begin{equation}
 k(\Delta,J) =   (q-1) \tilde g^2 \;\;\Gamma(\Delta_{\phi})^{q-2}    
 \Gamma\left(  \frac{d}{2} - \Delta_{\phi}\right)^2
    \frac{
      \Gamma\left(\Delta_{\phi} - \frac{d}{2} + \frac{ \Delta + J}{2}\right)       \Gamma\left( \Delta_{\phi} - \frac{ \Delta - J}{2}\right)  }
   { 
   \Gamma\left(  d -   \Delta_{\phi}  - \frac{   \Delta - J}{2} \right) 
   \Gamma\left( \frac{d}{2} - \Delta_{\phi} +  \frac{  \Delta  + J}{2} \right) } \;,
\end{equation}
with $\Delta_{\phi}=d/q$.
The dimensions of the primary operators as well as the OPE coefficients can be computed starting from this formula. This study has been started in \cite{Benedetti:2019eyl} and the dimensions of the spin zero primaries have been analyzed. Surprisingly, for an imaginary (not too large) tetrahedral coupling one finds only real dimensions, while for a real tetrahedral coupling one finds complex dimensions of the form $d/2 + \im \alpha$.  

An interesting question at this stage is whether this large $N$ CFT is unitary.  In order to answer this question one needs to check whether the leading order OPE coefficients are real.
Pursuing this line of inquiry is a very interesting direction of research.

 \newpage

 \appendix

\section{The degree}\label{app:deg}
\renewcommand\theequation{\Alph{section}.\arabic{equation}}
\setcounter{equation}{0}

In this appendix we review the degree of edge colored graphs \cite{color,review} and reproduce the results cited in the main body of this paper. All these  results can be found in the literature.

We start by recalling some facts about ribbon graphs.
Ribbon graphs can be defined formally as combinatorial maps with an additional sign associated to the edges \cite{FabBolRiord} or as graphs embedded in surfaces. 
Being embedded they have vertices, edges and two dimensional cells which we call faces. 
A ribbon graph  $G$ can always be  projected onto the plane (see Fig.~\ref{fig:Ex-ribbon}). The projection of $G$ consists in:
 \begin{itemize}
  \item $V(G)$ \emph{ribbon vertices}.
  \item $E(G)$ \emph{ribbon edges} whose sides we call \emph{strands}. The edges can be straight (parallel strands) or twisted (twisted strands) and can cross. There is at most one twist per edge.
  \item $F(G)$ \emph{faces}, that is closed strands.
 \end{itemize} 
 The projection onto the plane is not canonical: by flipping the orientation on some of the vertices, some edges acquire twists and some twists are straightened out. 

\begin{figure}[H]
 \begin{center}
  \includegraphics[width=0.5\textwidth]{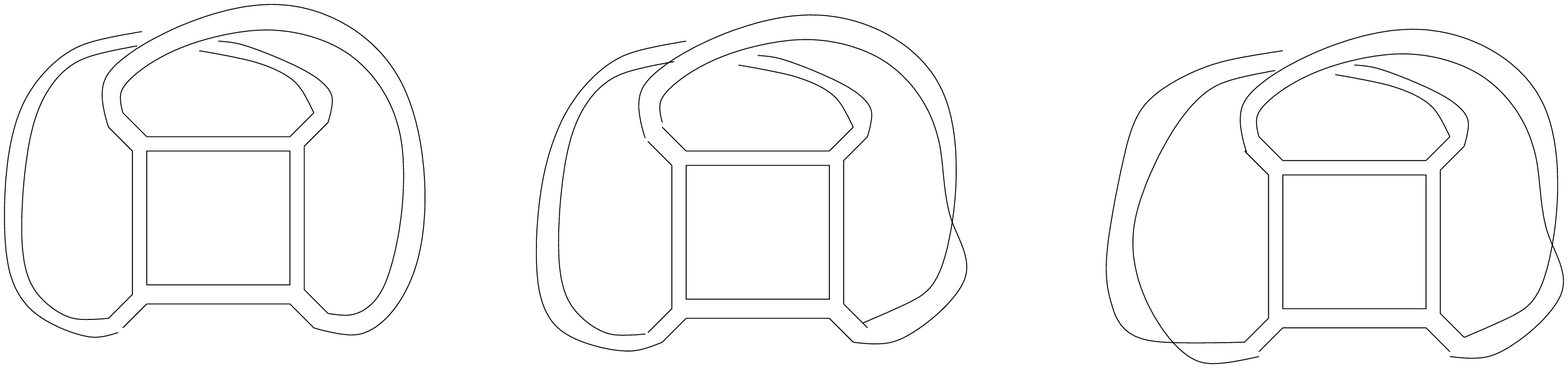}
  \caption{Examples of ribbon graphs with 4 vertices, 6 edges and, from left to right,  $2$, $2$ and $3$ faces. From left to right they are embedded into the torus, the Klein bottle and the real projective plane  $\mathbb{R}P^2$.}  \label{fig:Ex-ribbon}
 \end{center}
 \end{figure} 
 
 The Euler characteristic of a connected ribbon graph $G$ is $   V(G)-E(G)+F(G)  =  2-k(G)$ with $k(G)$ the \emph{non orientable genus} of $G$. The genus of a disconnected graph is the sum of the genera of its connected components. A connected ribbon graph with non orientable genus $k$ is embedded\footnote{To be precise, it is embedded in a surface with non orientable genus at least $k$ and the surface is orientable or not depending on whether the graph is orientable or not.} in a surface with non orientable genus $k$, that is:
 \begin{itemize}
  \item if $k=0$ then the graph is \emph{planar} and is embedded in the sphere.
  \item if $k$ is odd then the graph can only be embedded in a non orientable surface. Any projection onto the plane will have crossings and twists (see Fig.~\ref{fig:Ex-ribbon}, the rightmost case).
  \item if $k$ is even and non zero, then either the graph is:
     \begin{itemize}
      \item orientable, that is embedded in an orientable surface of genus $k/2$. It can be projected onto the plane with only crossing, but no twists (see Fig.~\ref{fig:Ex-ribbon} leftmost case).
      \item non orientable, that is embedded in a non orientable surface of non orientable genus $k$ (see Fig.~\ref{fig:Ex-ribbon}, the middle case). Any projection onto the plane   will required both crossings and twists.
     \end{itemize}
  \end{itemize}
   
 The edges in a ribbon graph ca be \emph{deleted} (see Fig.~\ref{fig:Del-edge}. The deletion of a ribbon edge consists in cutting the edge and joining together the strands at each end of the edge. 
 
\begin{figure}[htb]
 \begin{center}
  \includegraphics[width=0.5\textwidth]{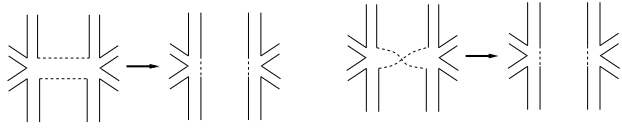}
  \caption{Deletion of an edge.}  \label{fig:Del-edge}
 \end{center}
 \end{figure}
 
 Let us delete iteratively a maximal set of edges in a connected graph such that at each step the edge we delete separates two different faces. This can not disconnect the graph. The number of edges deleted is $F(G)-1$. The remaining edges connect all the vertices, hence there are at least $V(G) -1$ of them. It follows that the non orientable genus of a connected ribbon graph is a non negative integer.

 \begin{proposition}\label{prop:genus}
 Consider a connected ribbon graph $G$ and denote $G'$ the graph obtained by deleting an edge $e$. Then: 
  \begin{itemize}
   \item either $G'$ consists in two connected components $G'_{1}$ and $G'_{2}$. In this case the genus is distributed between the connected components:
   $k(G) = k(G'_{1})+k(G'_{2})$.
   \item or $G'$ is connected an in this case the genus can not increase:
      $k(G') \le k(G)$.
  \end{itemize}
   \end{proposition}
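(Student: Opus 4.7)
\medskip

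\noindent\textbf{Proof plan.} The natural strategy is to track how the Euler characteristic $\chi(G)=V(G)-E(G)+F(G)$ behaves under deletion of the edge $e$ and then convert the result into a statement about the non--orientable genus via $k = 2 - \chi$ (with $k$ additive over connected components). Since deleting $e$ does not affect vertices and decreases $E$ by exactly one, everything comes down to analysing how $F(G)$ changes. The two cases of the proposition correspond to the two qualitatively different behaviours of this face count.

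\medskip

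\noindent\emph{Bridge case.} Suppose $G' = G'_1 \sqcup G'_2$ with $u\in G'_1$ and $v\in G'_2$ the endpoints of $e$. First I would show that the two strands of $e$ belong to the \emph{same} face of $G$. Tracing a face $f$ that contains one strand of $e$, I start at $u$, pass through that strand to $v$, then follow $f$ inside the $G'_2$ side; since $e$ is the only edge between the two components, the only way to return to $u$ and close the cycle is to pass again through $e$, necessarily via its other strand. Hence both strands lie on $f$, and after deletion the cycle of $f$ splits into two cycles, one entirely in $G'_1$ and one entirely in $G'_2$, giving $F(G) = F(G'_1)+F(G'_2)-1$. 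Combined with $V(G)=V(G'_1)+V(G'_2)$ and $E(G)=E(G'_1)+E(G'_2)+1$, Euler's formula applied to each piece yields $k(G) = k(G'_1)+k(G'_2)$.

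\medskip

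\noindent\emph{Non--bridge case.} Here $G'$ is connected, so $k(G') = k(G) + (F(G)-F(G')-1)$, and the inequality $k(G')\le k(G)$ is equivalent to $F(G') \ge F(G)-1$. I would split according to whether the two strands of $e$ lie on distinct faces or on the same face of $G$. If they are on distinct faces $f_1, f_2$, deletion of $e$ concatenates their cycles into a single cycle, merging them into one face; then $F(G') = F(G)-1$ and $k(G')=k(G)$. If they are on a common face $f$, then inspecting the cyclic order of half-edges at the two endpoints $u,v$ of $e$ shows that after removing $e$ the cycle of $f$ gets rerouted at $u$ and at $v$, producing either a single cycle (if the rerouting keeps the walk connected) or exactly two cycles (if it splits); thus $F(G')\in\{F(G),F(G)+1\}$, and in both subcases $k(G')\le k(G)$.

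\medskip

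\noindent\emph{Main obstacle.} The delicate point is the last claim: showing that when the two strands of a non--bridge edge lie on the same face, the face splits into \emph{at most} two pieces, never into three or more. This is a purely local verification at the endpoints $u,v$ of $e$: the cyclic order of half-edges at $u$ and at $v$, combined with the possible presence of a twist on $e$, determines how the arcs of the boundary component corresponding to $f$ are reglued. A careful case analysis (straight edge vs.\ twisted edge, and whether the second traversal of $e$ occurs in the same or the opposite orientation) shows that only two possible reconnection patterns can arise for the cycle of $f$, producing one or two cycles respectively. Once this local combinatorial fact is in place, everything else is bookkeeping with Euler characteristics.
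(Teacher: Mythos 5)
Your proof is correct and follows essentially the same route as the paper's: both arguments reduce everything to Euler-characteristic bookkeeping, using $F(G')=F(G)+1$ in the bridge case and $F(G')\ge F(G)-1$ in the non-bridge case. You simply supply the face-tracing details that the paper's proof states without justification.
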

\begin{proof}
  In the first case $E(G') = E(G)-1, V(G')=V(G)$, $F(G') = F(G) + 1$ and the vertices, edges and faces are distributed between the connected components of $G'$. Then
  \[
   4- k(G'_{1}) -k(G'_{2}) = V(G') - E(G') + F(G')  =2 +  2 - k(G) \;.
  \]
   In the second case $E(G') = E(G)-1, V(G')=V(G)$ and $F(G') \ge F(G) -1 $, hence: 
   \[
    2-k(G') = V(G') - E(G') + F(G') \ge 2-k(G) \;.
   \]
   
\end{proof}  

\begin{figure}[htb]
 \begin{center}
  \includegraphics[width=0.15\textwidth]{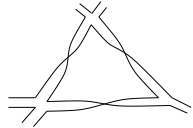}
  \caption{A triangle with twisted edges in a ribbon graph.}  \label{fig:triangle}
 \end{center}
 \end{figure}

\begin{proposition}\label{prop:twists}
A \emph{triangle} in a ribbon graph is a cycle of exactly three edges. If a connected ribbon graph $G$ contains a triangle of twisted edges (see Fig.~\ref{fig:triangle})
then $k(G)\ge 1$. 
\end{proposition}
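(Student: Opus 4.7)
The plan is to use Proposition~\ref{prop:genus} to reduce to the case of an isolated twisted triangle, and then compute its non-orientable genus directly via Euler's formula.

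First, I would iteratively delete every edge of $G$ that is not one of the three edges of the twisted triangle. By Proposition~\ref{prop:genus}, each such deletion either splits a component into two with non-orientable genera summing to the old one, or leaves the component connected with non-orientable genus not increasing; in either case, the sum of non-orientable genera over the connected components of the current graph does not increase. Starting from the connected graph $G$ of genus $k(G)$, after all deletions we obtain a ribbon graph whose components have total genus at most $k(G)$. The three triangle vertices remain joined by the three (still twisted) triangle edges, so the twisted triangle sits inside one connected component $T$; any other components contribute non-negative genus, and hence $k(T)\le k(G)$.

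Next, I would analyze $T$, which consists of three degree-two vertices connected by three twisted edges (at a degree-two vertex the cyclic order of half-edges is trivial, so the face arcs are determined). I would count $F(T)$ by tracing strands: each edge contributes two strand segments, at each degree-two vertex the face arcs pair adjacent strand endpoints of the two incident half-edges, and each twisted edge swaps its two strand segments between its endpoints. Following a strand from any starting point, the three successive half-twists force it to traverse all six strand segments before closing, so $F(T)=1$. Euler's formula then gives $k(T) = 2-V(T)+E(T)-F(T) = 2-3+3-1 = 1$, and combining with the previous step yields $k(G)\ge k(T) = 1$.

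The main obstacle is the strand count in the last step. One has to track carefully how deleting non-triangle edges affects the ribbon structure at the triangle vertices: after reduction, those vertices have degree two and the face arcs there are forced, but verifying that the single face of $T$ really absorbs all six strand segments requires following the three half-twists around the cycle. An alternative route, which bypasses the explicit strand-tracing, is the topological observation that a ribbon cycle carrying an odd total number of twists has a regular neighborhood homeomorphic to a M\"obius band, which cannot embed in any orientable surface and hence forces $k(G)\ge 1$ without invoking Euler's formula.
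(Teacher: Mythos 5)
Your proposal is correct and follows essentially the same route as the paper: delete all edges outside the triangle, use the monotonicity of the total non-orientable genus from Proposition~\ref{prop:genus}, and then compute directly that the isolated twisted triangle has $V=3$, $E=3$, $F=1$ and hence $k=1$. The only cosmetic difference is your closing remark about the M\"obius-band neighborhood, which the paper does not use.
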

\begin{proof} 
 We delete one by one all the edges incident to the triangle. In the process the graph $G$ splits into several connected components $G_{\rho}$. Let us denote $G_{1}$ the connected component consisting in the triangle. It has $3$ (bi-valent) vertices, $3$ edges and only $1$ face, hence $k(G_1) = 1$. Under the deletions the genus either decreases or is distributed between connected components, thus: 
 \[
  k(G) \ge \sum_{\rho\ge 1} k(G_{\rho}) \ge k(G_1) =1 \;.
 \]

\end{proof}

\subsection{The degree of edge colored graphs}

Edge colored graphs have been extensively discussed in detail in the literature \cite{color,RTM,review}. We review here some of their properties. 

\begin{definition}\label{def:colgr}
 An \emph{edge $(D+1)$--colored graph} $\cG$ is a graph with $D+1$ valent vertices and whose
  edges have a color $0,1,\dots D$ such that all the edges incident at a vertex have different colors. 
  
  The \emph{faces} with colors $(ij)$ of $\cG$ are the alternating cycles formed by edges with colors $i$ and $j$. We denote $V(\cG)$ the number of vertices and $F(\cG)$ the number of faces of $\cG$. 
\end{definition}

Let us consider a connected edge $(D+1)$--colored graph $\cG$. We can project it onto the plane by ordering the edges $0,1,\dots D$ (or any other order) clockwise around the vertices. There are $D!$ cyclic permutations $\pi$ over the colors $(0, \dots D)$.
We call \emph{jackets} of $\cG$ the ribbon graphs indexed by $\pi$ obtained by keeping all the vertices and edges of $\cG$ but only the faces with colors $( \pi^p(0) \pi^{p+1}(0))$. In a ribbon graph representation in which all the edges turn clockwise (following $\pi$) around the vertices, all the edges are twisted. Each of these ribbon graphs has a non orientable genus $k(\pi)$. The \emph{reduced degree} (or simply the \emph{degree}) of $\cG$ is the non negative number:
\begin{equation}\label{eq:defdegree}
 \boxed{\hat \omega(\cG) = \frac{1}{2(D-1)!} \sum_{\pi} k(\pi)  \ge 0 \; .}
\end{equation}
The reduced degree of a disconnected graph is the sum of the reduced degrees of its connected components.

    \begin{proposition}\label{prop:facecount}
      The total number of faces of a connected edge $(D+1)$--colored graph $\cG$ is:
      \[
       F(\cG) = D + \frac{D(D-1)}{4} V(\cG) -\hat \omega(\cG) \; .
      \]
      In particular the reduced degree is a non negative half integer.
     \end{proposition}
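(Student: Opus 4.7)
The plan is to compare two different ways of summing the face counts across all jackets of $\cG$, then extract the desired relation between $F(\cG)$, $V(\cG)$, and $\hat\omega(\cG)$.

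First, I would fix a cyclic permutation $\pi$ of $(0,1,\dots,D)$ and apply Euler's formula to the corresponding jacket. Since the jacket keeps every vertex and every edge of $\cG$, it has $V(\cG)$ vertices and $E(\cG)=\tfrac{D+1}{2}V(\cG)$ edges; only its face set is a strict subset of the faces of $\cG$, namely the $D+1$ bicolored cycles of the form $(\pi^p(0),\pi^{p+1}(0))$. Denoting the number of faces of the jacket by $F_{\mathrm{jac}}(\pi)$, Euler's formula $V-E+F=2-k$ gives
\begin{equation*}
F_{\mathrm{jac}}(\pi) \;=\; 2 - k(\pi) - V(\cG) + \tfrac{D+1}{2}V(\cG) \;=\; 2 - k(\pi) + \tfrac{D-1}{2} V(\cG).
\end{equation*}
Summing over the $D!$ cyclic permutations and invoking the definition $\sum_\pi k(\pi) = 2(D-1)!\,\hat\omega(\cG)$, I obtain
\begin{equation*}
\sum_\pi F_{\mathrm{jac}}(\pi) \;=\; 2\,D! - 2(D-1)!\,\hat\omega(\cG) + \tfrac{(D-1)\,D!}{2}\,V(\cG).
\end{equation*}

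Next I would compute $\sum_\pi F_{\mathrm{jac}}(\pi)$ a second way, by exchanging the order of summation and asking how many jackets contain a given face of $\cG$. A bicolored face with colors $\{i,j\}$ appears in the jacket $\pi$ if and only if $i$ and $j$ are cyclically adjacent in $\pi$. A straightforward count shows this happens in exactly $2(D-1)!$ of the $D!$ cyclic permutations: treating $\{i,j\}$ as a block with two internal orderings leaves $(D-1)!$ cyclic arrangements of the $D$ remaining objects, for a total of $2(D-1)!$. Therefore
\begin{equation*}
\sum_\pi F_{\mathrm{jac}}(\pi) \;=\; 2(D-1)! \sum_{\{i,j\}} F^{(ij)}(\cG) \;=\; 2(D-1)!\,F(\cG).
\end{equation*}

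Equating the two expressions and dividing by $2(D-1)!$ yields
\begin{equation*}
F(\cG) \;=\; D - \hat\omega(\cG) + \tfrac{D(D-1)}{4}\,V(\cG),
\end{equation*}
which is the claim. Non-negativity of $\hat\omega(\cG)$ follows immediately from non-negativity of the non-orientable genera $k(\pi)$ (each jacket being a connected ribbon graph, as $\cG$ itself is connected), and the half-integrality then follows from the fact that $F(\cG)$, $V(\cG)$ and $D+\tfrac{D(D-1)}{4}V(\cG)$ are integers up to a factor of $1/2$ at most. The only place that requires care is the bookkeeping of cyclic orderings versus linear ones (factors of $D+1$) and the verification that each edge of $\cG$ appears exactly once in every jacket regardless of $\pi$; once those combinatorial factors are pinned down, the rest of the argument is a direct application of Euler's formula.
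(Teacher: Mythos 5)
Your proof is correct and follows essentially the same route as the paper's: apply Euler's formula $V-E+F_{\mathrm{jac}}=2-k(\pi)$ to each jacket and then average over the $D!$ cyclic permutations using the fact that each bicolored face appears in exactly $2(D-1)!$ jackets. The only difference is presentational — you sum both sides over $\pi$ and double-count, whereas the paper divides by $2(D-1)!$ directly — and your combinatorial bookkeeping checks out.
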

     \begin{proof}
       Every face $(ij)$ appears in $2(D-1)!$ cycles: the $(ij\dots)$ cycles and the $(i \dots j)$ cycles. Denoting $F^{(ij)}(\cG)$ the number of faces with colors $(ij)$ of $\cG$ we have:
      \begin{align*}
      & V(\cG) - \frac{ (D+1)}{2}V(\cG) +  \sum_{p=0}^{D}F^{( \pi^p(0) \pi^{p+1}(0) )}(\cG) = 2-k(\pi) \crcr
      & \qquad \qquad\Rightarrow F(\cG) =  D + \frac{D(D-1)}{4}V(\cG) -\frac{1}{2(D-1)!} \sum_{\pi} k(\pi)
      \; .       
      \end{align*}

     \end{proof}
     
     The discussion so far applies for $(D+1)$--colored graphs with $D\ge 2$. For $D=2$ the edge colored graphs are trivalent ribbon graph and the degree is just the genus. The fundamental difference between the $D=2$ and $D\ge 3$ cases comes from the family of graphs of degree zero. For $D=2$ they are the (edge $3$--colored) planar graphs. For $D\ge 3$ they are melonic graphs.
 
      \begin{definition}\label{def:melons}
       The \emph{ring graph} made of an edge of color $i$ closing onto itself and having $D$ faces  (with colors $(ij)$ for $j\neq i$) is melonic. 
       
       All melonic graphs are obtained by inserting iteratively two vertices connected by $D$ parallel edges 
       arbitrarily on the edges of lower order melonic graphs.  
       Melonic graphs are always connected.
      \end{definition}
      
      This definition pertains to vacuum graphs. Cutting any edge in a melonic vacuum graph one obtains a melonic two point graph.  Melonic two point graphs are such that their one particle irreducible components factor into $D$ parallel two point functions \cite{GurSch}. 
      
        \begin{figure}[H]
        \begin{center}
    \includegraphics[width=0.8\textwidth]{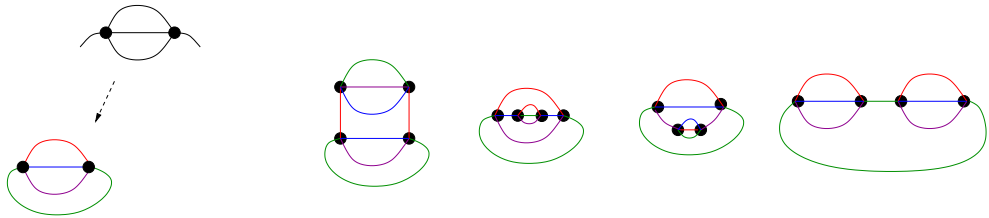}     
    \caption{Melonic graphs at first orders}         
        \end{center}
     \end{figure}
 
  \begin{proposition}\label{prop:melon} For $D\ge 3$, 
      a connected edge $(D+1)$--colored graph $\cG$ has reduced degree zero if and only if it is melonic.
     \end{proposition}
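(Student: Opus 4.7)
My plan is to prove both directions by induction on $V(\cG)$.

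For the forward direction, the base case is the two-vertex graph with $D+1$ parallel edges, one of each color. Each jacket is a ribbon graph with $V=2$, $E=D+1$, and $F_\pi = D+1$ bicolored $2$-cycles, hence Euler characteristic $2$ and genus $k(\pi)=0$. For the inductive step, a melonic insertion replaces an edge $e$ of color $i$ in a smaller melonic graph $\cG'$ by two new vertices joined by $D$ parallel edges of the remaining colors. Within any jacket $\pi$, the effect is local: the two jacket faces passing through $e$ each grow by two edges, while $D-1$ new bicolored $2$-cycles appear from pairs of colors cyclically adjacent in $\pi$ and both different from $i$. The resulting changes $\Delta V = 2$, $\Delta E = D+1$, $\Delta F_\pi = D-1$ leave $V - E + F_\pi$ invariant, so $k(\pi)$ is unchanged and $\hat\omega(\cG) = 0$.

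For the reverse direction, suppose $\hat\omega(\cG) = 0$, so Proposition~\ref{prop:facecount} gives $F(\cG) = D + \frac{D(D-1)}{4} V(\cG)$. Each edge belongs to exactly $D$ bicolored cycles, so $\sum_f |f| = D(D+1)V(\cG)$ and the mean face length is approximately $4(D+1)/(D-1)$, which is bounded for $D\geq 3$. Since bicolored cycles have even length at least $2$, a macroscopic fraction of faces must be $2$-cycles, i.e.\ pairs of parallel edges between two vertices. The crucial step is to upgrade a single $2$-face of colors $(i,j)$ between vertices $u$ and $v$ to a full \emph{elementary melon}: $D$ parallel edges in all colors $\neq i$ between $u$ and $v$. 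I would exploit that every jacket is planar: in each jacket where $i$ and $j$ are cyclically adjacent, the $2$-face bounds a disk, and using the deletion analysis of Proposition~\ref{prop:genus} this disk contains no additional vertex. Running over jackets where other pairs $(i,k)$ are cyclically adjacent, and invoking Proposition~\ref{prop:twists} to rule out twisted triangles, forces the color-$k$ half-edges at $u$ and $v$ to also be parallel for every $k \neq i$.

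Given an elementary melon between $u$ and $v$, I would contract it: remove $u$, $v$, and the $D$ parallel edges, and splice the two remaining color-$i$ half-edges into a single color-$i$ edge of a smaller graph $\cG'$. Then $\cG'$ is a connected edge $(D+1)$-colored graph with $V(\cG)-2$ vertices, and a direct count shows that the $\binom{D}{2}$ internal bicolored $2$-faces disappear while the $D-1$ faces of colors $(i,k)$ for $k \neq i$ each shorten by two edges, yielding $F(\cG') = F(\cG) - \tfrac{D(D-1)}{2}$. Proposition~\ref{prop:facecount} then gives $\hat\omega(\cG') = 0$, so the inductive hypothesis applies and $\cG'$ is melonic. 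Since $\cG$ is obtained from $\cG'$ by the inverse operation -- a melonic insertion on the spliced color-$i$ edge -- $\cG$ is melonic.

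The main obstacle is the upgrade from a single $2$-face to a full elementary melon: this step genuinely requires the hypothesis $D \geq 3$ (planar edge $3$-colored graphs form a much richer class than melonic triangulations of the sphere, reflecting the coincidence of degree and ribbon genus when $D=2$). A robust execution likely demands either a delicate simultaneous analysis of several jackets at once, or an alternative argument showing that the absence of an elementary melon forces at least one jacket to have strictly positive genus, contradicting $\hat\omega(\cG)=0$.
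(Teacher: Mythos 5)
Your forward direction is correct and is essentially the paper's argument (the paper shortcuts the jacket-by-jacket Euler count by noting that an insertion adds $\binom{D}{2}$ faces and two vertices, which leaves the count of Proposition~\ref{prop:facecount} invariant, but your per-jacket version amounts to the same thing). The existence of a length-$2$ face also matches the paper's counting, although your phrasing is off: from $\hat\omega(\cG)=D+\sum_{p\ge 1}F_{2p}(\cG)\bigl(p\tfrac{D-1}{D+1}-1\bigr)=0$ one only gets $F_2(\cG)\ge \tfrac{D+1}{2}D/ \dots$ --- more precisely $F_2(\cG)>0$ forced by the additive constant $D$, not a ``macroscopic fraction'' (for $D=3$ the length-$4$ faces contribute nothing to the degree, so almost all faces could have length $4$).

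The genuine gap is exactly the step you flag as the main obstacle, and it is worse than an unexecuted step: the claim you hope to prove is false. In a degree-zero graph an arbitrary $2$-face need not sit inside an elementary melon. Take $D=3$, start from the two-vertex melon, insert a melon on the color-$0$ edge creating vertices $c,d$ joined by edges of colors $1,2,3$, then insert a melon on the color-$1$ edge between $c$ and $d$: now $c$ and $d$ are joined by exactly two parallel edges (colors $2$ and $3$), giving a $2$-face that does not extend to $D$ parallel edges. So ``pick a $2$-face and upgrade it'' cannot work; you would have to prove that \emph{some} $2$-face upgrades, which is what actually requires the argument. The paper's route avoids this entirely: given a $2$-face of colors $(i,j)$ between $v$ and $w$, either $v$ and $w$ are joined by an edge of every color (and by connectedness $\cG$ is the two-vertex melon), or there is a color $c$ with distinct incident edges $e^c_v\neq e^c_w$; one then works in the single jacket $\pi=(i\,c\,j\dots)$, which is planar, checks that the two faces flanking $e^c_v$ are distinct so its deletion keeps the jacket connected with genus $0$ (Proposition~\ref{prop:genus}), and observes that the subsequent deletion of $e^c_w$ would force genus $-2$ if the graph stayed connected --- hence $\{e^c_v,e^c_w\}$ is a two-edge cut. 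Cutting there and resealing the half-edges splits $\cG$ into two connected colored graphs whose degrees sum to zero, and the induction runs on the pieces. Your contraction step (once an elementary melon is actually in hand) is fine, but without the two-edge-cut mechanism, or some substitute for it, the converse direction does not close.
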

     \begin{proof}
As the insertion of two vertices connected by $D$ parallel edges brings $\binom{D}{2}$ new faces, it does not change the degree. It follows that melonic graphs have degree zero. 
      
      For the converse statement, we proceed by induction on the number of vertices. The faces are cycles with alternating colors hence have even length. We denote $F_{2p}(\cG)$ the number of faces of length $2p$ of $\cG$.
      A vertex contributes $\binom{D+1}{2}$ corners\footnote{A corner of a vertex is a couple of half edges $\{i,j\}$.} to the faces, thus 
      $\sum_{p\ge 0} 2p \, F_{2p}(\cG) = \frac{(D+1)D}{2}V(\cG) $. On the other hand 
      $\hat \omega(\cG)  = D + \frac{D(D-1)}{4}V(\cG) + \sum_{p\ge 1} F_{2p}(\cG)$, hence:
      \[
       \hat \omega(\cG) = D + \sum_{p\ge 1} F_{2p}(\cG)  \bigg( p \frac{D-1}{D+1} -1 \bigg) \;.
      \]
      As $D\ge 3$, the coefficient of $F_{2p}(\cG)$ is non negative for all $p\ge 2$. It follows that, if $\hat\omega(\cG)=0$ then $F_2 (\cG)> 0 $, that is the graph has at least a face of length exactly 2.
      
      Consider a face of length $2$ of $\cG$ formed by two edges of colors $i$ and $j$ which join two vertices $v$ and $w$. If $v$ and $w$ are joined by exactly one edge for all the colors the graph is melonic. Let a color $c$ such that two distinct edges of color $c$ are incident to $v$ and $w$. We call them $e^c_v$ and $e^c_{w}$. We consider the jacket $\pi = (icj\dots )$. This jacket is planar, $k( \pi )=0$. The two faces on the two sides of $e^c_v$ have colors $(i,c)$ and $(j,c)$ hence are different.
      Deleting $e^c_v$ we obtain a connected ribbon graph ${\cal J}_{\pi}'$ having one less face than $\pi$, and (using Proposition \ref{prop:genus}) $k({\cal J}_{\pi}') =k(\pi)=0$. We now delete $e^c_w$ to obtain ${\cal J}_{\pi}''$. This deletion increases the number of faces by $1$, $F({\cal J}_{\pi}'') = F({\cal J}_{\pi}')+1$. If ${\cal J}_{\pi}'' $
      were connected we would have $
       2- k({\cal J}_{\pi}'' ) = 4   \Rightarrow k({\cal J}_{\pi}'' ) = -2
       $
       which is impossible. Hence the deletion of $ e^c_v$ and $e^c_w$ disconnects the graph $\cG$. 
       
       Consider the graph obtained from $\cG$ by deleting $ e^c_v$ and $e^c_w$ and reconnecting the half edges directly in each connected component respecting the colors. It has the same numbers of edges and vertices as $\cG$, $D$ more faces (all the faces going through $e^c_v$ and $e^c_w$ are split) and two connected components $\cG_1$ and $\cG_2$, thus:
       \[
        \hat \omega(\cG_1) + \hat \omega(\cG_1) = 2D + \frac{D(D-1)}{4} V(\cG)- (F+D)
         = \hat \omega(\cG) =0,
       \]
       hence both $\cG_1$ and $\cG_2$ have degree zero and strictly fewer vertices than $\cG$. 
       
       Iterating we conclude that $\cG$ contains two vertices connected by $D$ parallel edges.
      
     \end{proof}
 
 \subsection{The SYK degree}
 
    Let $D\ge 3$ and denote
     $\cG^0$ the (possibly disconnected) edge $D$--colored graph obtained from a connected edge $(D+1)$--colored graph $\cG$ by erasing the edges of color $0$\footnote{For $D=3$, $\cG^0$ is a 3--colored graph, hence a trivalent ribbon graph.}. Being an edge colored graph, $\cG^0$ has a reduced degree $\hat \omega(\cG^0)$
    (defined as in Eq.~\eqref{eq:defdegree}, but with $D$ shifted to $D-1$). 
    
    \begin{proposition}\label{prop:SYK}
       The SYK degree of a connected edge $(D+1)$--colored graph $\cG$ 
       \begin{equation}
        \boxed{\Omega(\cG)  = \hat \omega(\cG) - \hat \omega(\cG^0) \;,} 
       \end{equation}
        is a half integer which obeys the bounds:
       \begin{equation}
          \frac{1}{D} \; \hat \omega(\cG) \le   \Omega(\cG)    \le \hat \omega(\cG) \;.
       \end{equation}
       The SYK degree is non negative and it is zero if and only if $\cG$ is melonic. 
       \end{proposition}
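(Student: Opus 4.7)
The plan is to prove all four parts of the proposition by reducing everything to a single comparison of the jackets of $\cG$ and of $\cG^0$. First, I would note that Proposition~\ref{prop:facecount} (applied in rank $D+1$ and in rank $D$) guarantees that $\hat\omega(\cG)$ and $\hat\omega(\cG^0)$ are both non-negative half-integers, so their difference $\Omega(\cG)$ is automatically a half-integer, and the upper bound $\Omega(\cG)\le\hat\omega(\cG)$ is immediate from $\hat\omega(\cG^0)\ge 0$. All the work is therefore in the lower bound.

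The central observation I would establish is that every jacket of $\cG^0$ arises as the deletion of the color-$0$ edges inside some jacket of $\cG$. Concretely, if $\pi$ is a cyclic ordering of $\{0,1,\dots,D\}$ and $\pi'$ is the cyclic ordering of $\{1,\dots,D\}$ obtained by erasing $0$, then the cyclic order of the remaining edge-colors around each vertex of $\cG$ matches $\pi'$, so deleting all color-$0$ edges from the ribbon graph ${\cal J}_\pi$ produces exactly the ribbon graph ${\cal J}_{\pi'}$ (which is in general disconnected). Iterating Proposition~\ref{prop:genus} one edge at a time, each deletion either splits a component, in which case the total non-orientable genus is preserved as it redistributes between the pieces, or leaves the component connected with non-increasing non-orientable genus. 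Hence the total non-orientable genus can only decrease after all the color-$0$ deletions, and $k(\pi')\le k(\pi)$, where $k(\pi')$ denotes the sum of the non-orientable genera of the connected components of ${\cal J}_{\pi'}$.

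Each cyclic ordering $\pi'$ of $\{1,\dots,D\}$ is the reduction of exactly $D$ cyclic orderings $\pi$ of $\{0,1,\dots,D\}$, according to the position in which $0$ is inserted. Summing the inequality $k(\pi')\le k(\pi)$ over the $D$ lifts of each $\pi'$ and then over all $\pi'$, and using the defining formulas $\hat\omega(\cG)=\tfrac{1}{2(D-1)!}\sum_\pi k(\pi)$ and $\hat\omega(\cG^0)=\tfrac{1}{2(D-2)!}\sum_{\pi'}k(\pi')$, I would obtain the key estimate $\hat\omega(\cG^0)\le\tfrac{D-1}{D}\hat\omega(\cG)$. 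From this, $\Omega(\cG)\ge\tfrac{1}{D}\hat\omega(\cG)\ge 0$, which completes both bounds and the non-negativity.

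Finally, for the characterization $\Omega(\cG)=0\Leftrightarrow\cG$ melonic: if $\Omega(\cG)=0$ then $\hat\omega(\cG)=\hat\omega(\cG^0)\le\tfrac{D-1}{D}\hat\omega(\cG)$ forces $\hat\omega(\cG)=0$, which by Proposition~\ref{prop:melon} means $\cG$ is melonic; conversely, if $\cG$ is melonic then $0\le\hat\omega(\cG^0)\le\tfrac{D-1}{D}\cdot 0=0$, so $\Omega(\cG)=0$. The main obstacle is the deletion step at the heart of the argument: one has to verify carefully that the identity ${\cal J}_\pi^0={\cal J}_{\pi'}$ really holds as ribbon graphs (by inspecting the cyclic order of edges at each vertex, not merely as abstract graphs), and that successive applications of Proposition~\ref{prop:genus} yield monotonicity of the total genus even when intermediate deletions disconnect the ribbon graph; once this is done, the rest is bookkeeping.
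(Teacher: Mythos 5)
Your proposal is correct and follows essentially the same route as the paper: the $D$-to-one correspondence between jackets of $\cG$ and of $\cG^0$ via deletion of the color-$0$ edges, monotonicity of the total non-orientable genus under deletions (Proposition~\ref{prop:genus}), summation over jackets to get $\sum_{\pi}k(\pi)\ge D\sum_{\pi^0}k(\pi^0)$, and reduction of the zero-degree characterization to Proposition~\ref{prop:melon}. The only cosmetic difference is that you package the estimate as $\hat\omega(\cG^0)\le\frac{D-1}{D}\hat\omega(\cG)$ whereas the paper regroups the sums algebraically; the content is identical.
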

        \begin{proof}
            In order to prove the bounds,
            we observe that $\cG$ has $D!$ jackets and $\cG^0$ has $(D-1)!$ jackets. There is a $D$ to one correspondence between the jackets of $\cG$ and those of $\cG^0$ obtained by erasing the color $0$, that is $\pi = (0i\dots j) \to (i\dots j) = \pi^0 $.
            In the associated jacket of $\cG$ this corresponds to deleting the edges of color $0$. Observe that the graph corresponding to $\pi^0$ (which is a jacket of 
            $\cG^0$) might be disconnected. The genus can not decrease with the deletions, hence 
            $ \sum_{\pi}k(\pi) \ge D \sum_{\pi^0} k(\pi^0) $. We rewrite the SYK degree as:
            \begin{align*}
             \Omega(\cG) & = \frac{1}{2(D-1)!} \sum_{\pi} k(\pi) - 
             \frac{1}{2(D-2)!} \sum_{\pi^0} k(\pi^0) \crcr
             & = \frac{1}{2D (D-2)!} \left[  \sum_{\pi} k(\pi) - D \sum_{\pi^0} k(\pi^0) \right] + \frac{1}{2D!} \sum_{\pi} k(\pi) \;.
            \end{align*}
            
            The last statements follows from the bounds. 
            
            \end{proof}

   If $\cG$ is a melonic graph, then $\cG^0$ is 
   a union of melonic graphs. If $\cG^0$ happens to have only one connected component, then $\cG$ can be uniquely reconstructed from it: in the iterative construction of $\cG^0$ one ads an edge of color zero between the pair of vertices inserted at each step. 
    
 \subsection{The CTKT degree}

The graphs of the CTKT model are made of four valent stranded vertices connected by edges with three strands as depicted in  Fig.~\ref{fig:appCTKTvertices}. 
The faces are the closed strands and have a color. From left to right in Fig.~\ref{fig:appCTKTvertices} the vertices are the \emph{tetrahedral}, the \emph{pillow} and the \emph{double} trace vertex. There are three kinds of pillow vertices distinguished by the special color which is transmitted from on pair of half edges to the other. 
      \begin{figure}[htb]
        \begin{center}
    \includegraphics[width=0.4\textwidth]{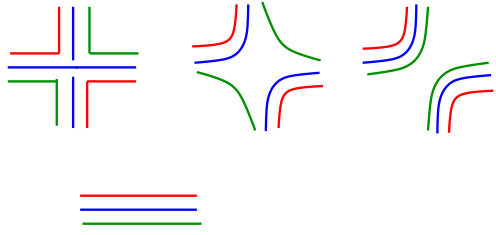}     
    \caption{Vertices and edges of the CTKT model}  \label{fig:appCTKTvertices}       
        \end{center}
     \end{figure}
We denote $V_t(\cG)$, $V_p(\cG)$ and $V_d(\cG)$ the numbers of tetrahedral, pillow and double trace vertices and $F(\cG)$ the number of faces of a graph $\cG$.

We aim to define jacket ribbon graphs which will allow us to count the faces. We can not do this naively due to the pillow and double trace vertices. So we first get rid of them. The pillow and double trace vertices can be resolved in terms of minimal configurations of the tetrahedral vertex. This is depicted in 
Fig.~\ref{fig:appresolve}. 

\begin{figure}[htb]
        \begin{center}
    \includegraphics[width=0.4\textwidth]{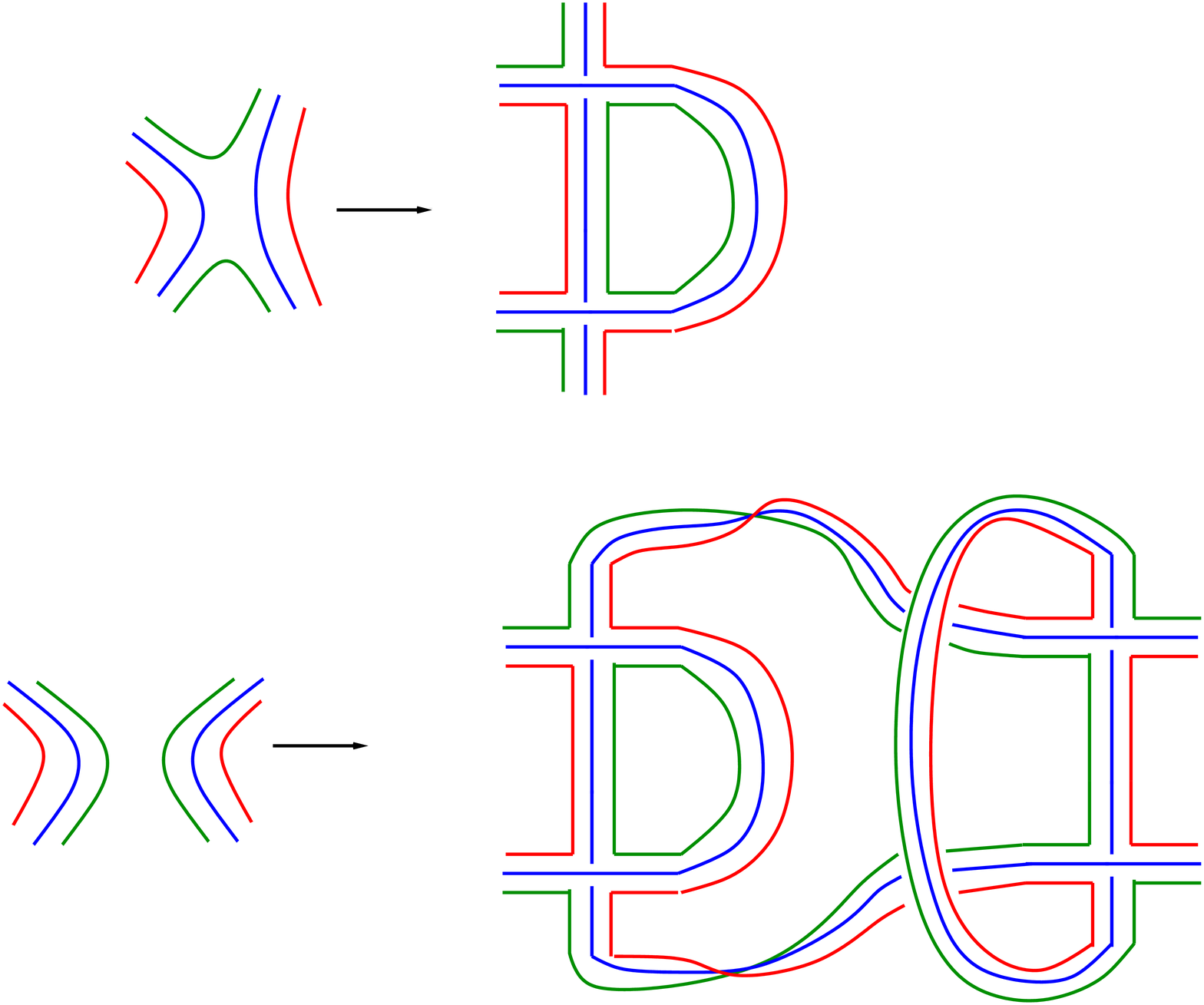}     
    \caption{Resolution of the pillow and double trace vertices in terms of the tetrahedral vertex}  \label{fig:appresolve}       
        \end{center}
\end{figure}

     For any graph $\cG$ we denote $\tilde \cG$ the graph obtained by replacing all the pillow and double trace vertices by their minimal resolutions in terms of tetrahedral vertices. We call this the \emph{refinement} of $\cG$. The refined graph $\tilde \cG$ has only tetrahedral vertices and:
     \[
      V_t(\tilde \cG) = V_t(\cG) + 2V_p(\cG) + 4V_d(\cG) \;,\qquad
      F(\tilde \cG) = F(\cG) + V_p(\cG) + 3V_d (\cG) \;.
     \]
     
     The refined graph $\tilde \cG$ admits three jacket ribbon graph ${\cal J}^i$ obtained by erasing the faces of the color $i$. We denote their non orientable genera $k({\cal J}^i)$. We define the \emph{CTKT degree} of $\cG$ (and of its refinement $\tilde \cG$) as:
     \begin{equation}\label{eq:degCTKT}
      \boxed{  \omega(\cG) = \frac{1}{2} \sum_{i} k({\cal J}^i) \ge 0 \;. }
     \end{equation}
    
    \begin{proposition}\label{prop:CTKTcount}
     The number of faces of a CTKT graph is:
     \[
      F(\cG) =   3 + \frac{3}{2} V_t(\cG) + 2V_p(\cG) + 3V_d(\cG) - 
               \omega(\cG) \;.
     \]
    \end{proposition}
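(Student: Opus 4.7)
My plan is to mirror the proof of Proposition~\ref{prop:facecount}: first reduce to the refined graph $\tilde \cG$, which has only tetrahedral vertices and hence admits the three jackets used to define the degree, and then apply Euler's formula to each jacket.

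First I would record how refinement affects the combinatorial data. From the minimal resolutions in Fig.~\ref{fig:appresolve}, replacing each pillow vertex adds two tetrahedral vertices and closes one new face, while replacing each double-trace vertex adds four tetrahedral vertices and closes three new faces, giving
\[ V_t(\tilde\cG) = V_t(\cG) + 2V_p(\cG) + 4V_d(\cG),\qquad F(\tilde\cG) = F(\cG) + V_p(\cG) + 3V_d(\cG),\]
together with $\omega(\tilde\cG) = \omega(\cG)$ by the very definition \eqref{eq:degCTKT}. One also needs the resolutions to preserve connectedness, which is immediate from Fig.~\ref{fig:appresolve} since each resolution is itself connected and is grafted onto the same four external half-edges.

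Next I would count the faces of $\tilde\cG$ via Euler's formula applied to each of the three jackets. Every vertex of $\tilde\cG$ is tetrahedral (four half-edges, three strands per edge), so the jacket ${\cal J}^i$ is a connected ribbon graph sharing its underlying vertices and edges with $\tilde\cG$ but retaining only the faces of the two colors different from $i$. Hence $V({\cal J}^i) = V_t(\tilde\cG)$, $E({\cal J}^i) = 2V_t(\tilde\cG)$, and $F({\cal J}^i) = F(\tilde\cG) - F_i(\tilde\cG)$, where $F_i$ counts the faces of color $i$. Euler's relation $V-E+F=2-k$ then reads
\[ V_t(\tilde\cG) - 2V_t(\tilde\cG) + F(\tilde\cG) - F_i(\tilde\cG) = 2 - k({\cal J}^i).\]
Summing over $i=1,2,3$, using $\sum_i F_i(\tilde\cG) = F(\tilde\cG)$ and the definition \eqref{eq:degCTKT} of $\omega$, I obtain
\[ F(\tilde\cG) = 3 + \tfrac{3}{2}\,V_t(\tilde\cG) - \omega(\tilde\cG).\]

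Finally, substituting the two refinement identities from the first step eliminates $\tilde\cG$ and produces exactly the claimed formula; non-negativity of $\omega$ is automatic from $k({\cal J}^i)\ge 0$. The only non-routine step is the bookkeeping in the refinement: one has to verify case by case (including the three species of pillow, distinguished by the color transmitted across the vertex) that the minimal resolutions in Fig.~\ref{fig:appresolve} produce exactly $+1$ and $+3$ new faces respectively, and do not disconnect the graph. Everything else is a mechanical transcription of the argument used for Proposition~\ref{prop:facecount}.
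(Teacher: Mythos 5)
Your proposal is correct and follows essentially the same route as the paper: reduce to the refined graph $\tilde\cG$ via the identities $V_t(\tilde\cG)=V_t(\cG)+2V_p(\cG)+4V_d(\cG)$ and $F(\tilde\cG)=F(\cG)+V_p(\cG)+3V_d(\cG)$, apply Euler's formula to each of the three jackets, sum over $i$ using $\sum_i F_i(\tilde\cG)=F(\tilde\cG)$ to get $F(\tilde\cG)=3+\tfrac{3}{2}V_t(\tilde\cG)-\omega(\cG)$, and substitute back. Your write-up is in fact slightly more explicit than the paper's (which compresses the jacket face count into one line), but there is no difference in substance.
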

     \begin{proof}
      Counting the faces of the refined graph $\tilde \cG$ by jacket we find 
      $F({\cal J}^i) = 2 - 2V(\tilde \cG) - k({\cal J}^i)$ hence
      $  F(\tilde \cG) = 3 + \frac{3}{2} V_t(\tilde \cG) -    \omega(\cG)$. 
      Expressing everything in terms of the numbers of vertices and faces of $\cG$ we find:
      \[
             F(\cG) =  3 + \frac{3}{2} V_t(\tilde \cG) - V_p(\cG) - 3V_d (\cG) -    \omega(\cG) = 3 + \frac{3}{2} V_t(\cG) + 2V_p(\cG) + 3V_d(\cG) - 
             \omega(\cG) \;.
      \]
      
     \end{proof}

     The CTKT degree is a half integer.
     The graphs of degree zero are a slight generalization of the melonic graphs.
     
     \begin{definition}
      We call a connected CTKT graph $\cG$ a \emph{melon-tadpole} graph if its refinement $\tilde \cG$ is a melonic graph.
      Such a graph is obtained by iterated insertions of melons and tadpoles into melons and tadpoles such that all the tadpoles are based at pillow or double trace vertices and all the melons have pairs of tetrahedral vertices. An example is presented in Fig.~\ref{fig:melon-tadpole}
     \end{definition}

\begin{figure}[htb]
        \begin{center}
    \includegraphics[width=0.2\textwidth]{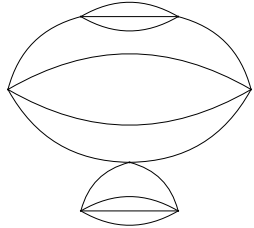}     
    \caption{A melon-tadpole graph.}  \label{fig:melon-tadpole}       
        \end{center}
\end{figure}

 \begin{proposition}\label{prop:CTKTlead}
  A $CTKT$ graph has reduced degree zero if and only if it is a melon-tadpole graph.
 \end{proposition}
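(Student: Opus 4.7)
The plan is to prove both directions by analyzing the refined graph $\tilde \cG$, which contains only tetrahedral vertices, and leveraging Proposition~\ref{prop:CTKTcount} together with the analysis already carried out in Proposition~\ref{prop:melon}. The key observation is that $\omega(\cG) = \omega(\tilde \cG)$ by definition, and that $\omega(\tilde \cG) = 0$ is equivalent to the condition that all three jackets ${\cal J}^i$ of $\tilde \cG$ are planar ribbon graphs, since $\omega(\cG) = \tfrac{1}{2}\sum_i k({\cal J}^i)$ and each $k({\cal J}^i)\ge 0$.

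For the ``if'' direction, I would proceed by induction on the combinatorial construction of $\cG$ as a melon--tadpole graph. The base cases (the minimal melon consisting of two tetrahedral vertices joined by three parallel strand patterns, and the minimal tadpoles based at a pillow or double-trace vertex) refine to small tetrahedral graphs whose three jackets are easily verified to be planar using the explicit resolutions of Fig.~\ref{fig:appresolve}. For the inductive step one checks that each elementary insertion preserves the planarity of every jacket: a melon insertion at an edge introduces two tetrahedral vertices with three parallel connections, which on each jacket is a planar face subdivision; a tadpole insertion at a pillow or double-trace vertex, once refined, amounts to attaching a resolved tetrahedral configuration along an edge, and again reduces on each jacket to a planar subdivision. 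Hence $\omega(\cG) = 0$.

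For the converse, suppose $\omega(\cG) = 0$, so every jacket of $\tilde \cG$ is planar. Apply Proposition~\ref{prop:CTKTcount} to $\tilde \cG$ (which has only tetrahedral vertices) to get $F(\tilde \cG) = 3 + \tfrac{3}{2} V_t(\tilde \cG)$, and reproduce the bigon argument from the proof of Proposition~\ref{prop:melon}: since each tetrahedral vertex contributes $\binom{4}{2}=6$ corners to faces, the sum $\sum_{p\ge 1} 2p\,F_{2p}(\tilde \cG) = 6 V_t(\tilde \cG)$ forces the number of length-two faces $F_2(\tilde \cG)$ to be positive. One then mimics the edge-deletion/separation argument of Proposition~\ref{prop:melon}, adapted to the three-color jackets, to show that such a bigon must be accompanied by two tetrahedral vertices joined by three parallel strand patterns, whose removal disconnects the graph into smaller planar-jacket components; inductively each piece is itself melonic in the tetrahedral sense. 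Finally, the ``unrefinement'' step identifies the resulting tetrahedral melonic structure of $\tilde \cG$ with a melon--tadpole $\cG$: each minimal tetrahedral pair in $\tilde \cG$ is either a genuine melon of $\cG$ or arises as the resolved core of a pillow or double-trace tadpole vertex, the distinction being fixed by matching the ambient strand pattern against Fig.~\ref{fig:appresolve}.

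The main obstacle will be precisely this last matching: one must show that the refinement map is injective on melon--tadpole graphs and that its image is exactly the set of tetrahedral graphs with planar jackets. Since the same minimal bigon in $\tilde \cG$ can a priori sit either inside the resolution of a pillow/double-trace vertex or inside a bona fide melon of $\cG$, the inductive reduction requires a careful local check that identifies the correct preimage by tracking the colors of the surrounding faces; once that matching is established, the rest of the argument reduces to the bookkeeping already developed for the colored model in Proposition~\ref{prop:melon}.
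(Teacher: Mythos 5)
There is a genuine gap in your converse direction. You write the corner-counting identity as $\sum_{p\ge 1} 2p\,F_{2p}(\tilde \cG) = 6 V_t(\tilde \cG)$, i.e.\ you implicitly assume that every face of $\tilde \cG$ has even length, and from this you conclude $F_2(\tilde \cG)>0$. But this assumption is false for CTKT graphs: unlike the edge-colored graphs of Proposition~\ref{prop:melon}, the stranded tetrahedral vertex admits faces of odd length (a tadpole closes a face of length $1$, and a triangle of edges can close a face of length $3$). The correct identity is $\sum_{q\ge 1} q\,F_q(\tilde\cG) = 6V_t(\tilde\cG)$, which combined with Proposition~\ref{prop:CTKTcount} gives $\omega(\tilde\cG) = 3 + \sum_{q\ge 1}\left(\tfrac{q}{4}-1\right)F_q(\tilde\cG)$; here the coefficients of $F_1$, $F_2$ \emph{and} $F_3$ are all negative, so $\omega(\tilde\cG)=0$ does not by itself force $F_2>0$. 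The missing step is a lemma showing that a degree-zero $\tilde\cG$ has $F_1=F_3=0$: a face of length $1$ means a tadpole whose replacement by an edge exhibits $\omega(\tilde\cG)\ge 1/2$, and a face of length $3$ (absent tadpoles) forms a triangle which, in the jacket not carrying that face's color, becomes a triangle of twisted edges and forces that jacket to be non-planar by Proposition~\ref{prop:twists}. Only after excluding odd short faces does the bigon-deletion induction of Proposition~\ref{prop:melon} go through.

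Separately, the ``main obstacle'' you identify --- matching minimal bigons of $\tilde\cG$ against preimages under refinement --- is not actually an issue: the paper \emph{defines} a melon-tadpole graph as one whose refinement is melonic, so since $\omega(\cG)=\omega(\tilde\cG)$ the proposition reduces by definition to showing that $\tilde\cG$ has degree zero iff it is melonic, with no unrefinement step required. Your effort is directed at a non-problem while the genuine difficulty (odd-length faces) is passed over.
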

   \begin{proof} As $  \omega(\cG) =   \omega(\tilde \cG)$, this comes to proving that $\tilde \cG$ has zero degree if and only if it is melonic. 
   The proof follows the one of Proposition~\ref{prop:melon}, but with some twists. The main difference is that faces can now have odd  length.
   
   Assume that the connected graph $\tilde \cG$ with only tetrahedral vertices has zero degree. Denoting $F_q(\tilde \cG)$ the number of faces of length $q$ of 
   $\tilde\cG$ and counting corners we have $\sum_{q\ge 1} qF_q(\tilde
   \cG) =  6 V(\tilde \cG)$. On the other hand $  \omega(\tilde \cG) = 
   3 + \frac{3}{2}V(\tilde \cG) - \sum_{q\ge 1} F_q(\tilde \cG) $, hence we get:
   \[
     \omega(\tilde \cG) = 3 + \sum_{q\ge 1} \left( \frac{q}{4} - 1  \right) F_q(\tilde \cG) \;.
   \]
   
   As the faces can now have odd length, $q=1,2$ and $3$ would have negative coefficients in the above formula. However, we have the following intermediate result.
 \begin{lemma}
    If a connected CTKT graph $\tilde \cG$ with only tetrahedral vertices has reduce degree zero, then $F_1(\cG) =F_3 (\cG)=0$.
   \end{lemma}
     \begin{proof}
      If $\tilde \cG$ has a face of length $1$ then it has a tadpole. We build the graph $\tilde \cG'$ by replacing the tadpole by an edge. This reduced both the number of edges and faces by 1, hence 
      $   \omega(\tilde \cG) =   \omega(\tilde \cG') 
      + 1/2 \ge 1/2 $.
      
      Now assume $\tilde \cG$ has a face of length $3$ and no tadpoles. Since 
      $\tilde\cG$ can not have a tadpole, then the face of length $3$ forms a triangle (it can not be a tadpole at the end of a dipole). In the jacket not containing the face of length 3 this leads to a triangle of twisted edges. From Proposition~\ref{prop:twists}, this jacket can not be planar.
      
   \end{proof}
   
   Thus $\tilde \cG$ must have a face of length $2$ and we conclude by the same induction as in Proposition~\ref{prop:melon}.
   
    \end{proof}

\newpage

\section{The renormalization (semi--)group}\label{app:ren}
\setcounter{equation}{0}

We briefly review the Wilsonian renormalization group \cite{peskin2018introduction} and use this opportunity to introduce some notation.

\paragraph{The one particle irreducible effective action.} The generating functional of connected moments of a theory with action $ \pmb{S}[\phi]$ is:
\[
 e^{\pmb{W}[J]} = \int [d\phi] \; e^{- \pmb{S}[\phi] + J\cdot \phi} \;,
 \qquad \frac{\delta \pmb{W}}{ \delta J_x} =  \Braket{\phi_x}^J \equiv \pmb{\phi}_x[J] \;, \qquad \frac{\delta^2 \pmb{W}}{ \delta J_x\delta J_y} = \Braket{\phi_x\phi_y}_c^{\pmb{J}}  = \pmb{G}_{xy}[J] \;,
\]
where this time we consider a local source $J_x$.
Going on shell means setting $J=0$. We denote $\pmb{J}[\phi]$ the solution of 
$\pmb{\phi}[J] = \phi$, that is $\pmb{J}[\phi]$ is the source that ensures that the expectation of the field is  exactly $\phi$. The Legendre transform of $\pmb{W}$ is:
\[
 \pmb{\Gamma}[\phi] =  \phi \cdot \pmb{J}  - \pmb{W}[\pmb{J}]
  \;,\qquad \frac{\delta \pmb{\Gamma}}{\delta \phi_x} = \pmb{J}_x
   \;,\qquad \frac{\delta^2 \pmb{\Gamma}}{\delta \phi_x\delta \phi_x} = 
    \pmb{G}^{-1}[\phi]_{xy} = 
    \bigg( \pmb{G}[J]^{-1}\big{|}_{J = \pmb{J}[\phi]} \bigg)_{xy} \; .
\]
Going on shell means setting  $\phi = \phi_{0}$ solution of the equations of motion
$\delta \pmb{\Gamma}/ \delta \phi  =0 $. From now on we consider that $\phi_0=0$, which can be guaranteed by taking an even action. The effective action can be written as a functional integral:
\[
 e^{-\pmb{\Gamma}[\phi]} = \int[d\psi] \;e^{-  \pmb{S}[\phi+\psi] + \pmb{J}[\phi] \cdot \psi }\; ,\qquad \Braket{\psi}^{\pmb{J}[\phi]} = 0 \;.
\]
In this form $\pmb{J}[\phi]$ is fixed by the requirement that, for the given background $\phi$, the expectation of $\psi$ is zero. The above functional integral can be evaluated in a Feynman expansion:
\[
 e^{-\pmb{\Gamma}[\phi]} =  e^{-\pmb{S}[\phi]}\int d\psi \; e^{ -\frac{1}{2} \psi \pmb{S}''[\phi]  \psi - \pmb{S}'[\phi] \psi - \sum_{n\ge 3} \frac{1}{n!} 
 \pmb{S}^{(n)}[\phi] \psi^n +  \pmb{J}[\phi] \psi  } \;.
\] 
We have $\pmb{\Gamma}[\phi] = \pmb{S}[\phi] + \pmb{\bar \Gamma}^{\rm 1PI}[\phi]$, where $- \pmb{\bar \Gamma}^{\rm 1PI}[\phi] $ is the sum over connected vacuum graphs of $\psi$ with propagator $( \pmb{S}''[\phi])^{-1}$,  vertices $ - \pmb{S}'[\phi]\psi $ and $-\pmb{S}^{(n)}[\phi]\psi^n, \; n\ge 3$ and a counterterm $\pmb{J}[\phi] \psi$
which ensures that $\braket{\psi}^{\pmb{J}[\phi]}=0$. Observe that the graphs contributing to $ \pmb{\bar \Gamma}^{\rm 1PI}[\phi] $ must have edges (i.e. the bare vertices are excluded).
 \begin{figure}[H]
 \begin{center}
 \psfrag{=0}{=0}
  \includegraphics[width=0.3\textwidth]{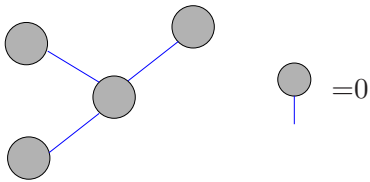}
  \caption{The 1PI decomposition of a graph.}\label{fig:1PIdecomp}
 \end{center}
 \end{figure} 
 These graphs can always be decomposed along one particle irreducibly edges (that is edges that, when cut, disconnect the graph), as depicted in 
 Fig.~\ref{fig:1PIdecomp}. Every graph is then a tree with vertices the one particle irreducible kernels. Due to the counterterm, the amplitude of any tree having a univalent leaf is zero, hence only the trees with exactly one vertex survive. The tree with one vertex is the sum over all the one particle irreducible graphs. This can be written formally as:
\[
 e^{-\pmb{\Gamma}[\phi]} = \int_{\rm 1PI} d\psi \; e^{ - \pmb{S}[\phi + \psi]} \;.
\]

There is a slight subtlety here, related to the bare vertices. Let us consider an action consisting in a free part, which is quadratic in the field, and an interaction:
\[
 \pmb{S}[\phi] = \frac{1}{2} \phi \cdot C^{-1} \cdot \phi  +
    \pmb{S}^{\rm int}[\phi] \;.
\]
The interaction $ \pmb{S}^{\rm int}[\phi]$ is conditioned to not have a term 
proportional to the free part, but it can have other quadratic terms, most notably a mass term. The functional $\pmb{\bar \Gamma}^{\rm 1PI}[\phi]$ contains only graphs with edges hence it does not contain bare vertices. One can include them in the ``full 1PI action''
$\pmb{  \Gamma}^{\rm 1PI}[\phi] =  \pmb{S}^{\rm int}[\phi] + \pmb{\bar \Gamma}^{\rm 1PI}[\phi]  $. The effective action writes:
\[
 \pmb{\Gamma}[\phi] = \pmb{S}[\phi]  + \pmb{\bar \Gamma}^{\rm 1PI}[\phi] =
    \frac{1}{2} \phi \cdot C^{-1} \cdot \phi  + \pmb{  \Gamma}^{\rm 1PI}[\phi]\;.
\]
We define the self energy, that is the amputated one particle irreducible 2 point function as:
\[
 \pmb{\Sigma}[\phi]_{xy} = - \frac{\delta^2  \pmb{\Gamma}^{\rm 1PI} }{ \delta \phi_x \delta\phi_y}  \;, \qquad \pmb{G}^{-1}[\phi]_{xy} = (C^{-1})_{xy} - \pmb{\Sigma}[\phi]_{xy} \;.
\]
The reason to include the bare vertices in the 1PI generating function is that, with this definition, the self energy includes the mass vertex (and also additional quadratic vertices, if they exist). Observe moreover that:
\[
 e^{-\pmb{\Gamma}[\phi]} = \int_{\rm 1PI} d\psi \; e^{ - \pmb{S}[\phi + \psi]}
  = e^{-\frac{1}{2} \phi \cdot C^{-1} \cdot \phi } \int_{\rm 1PI} d\psi \; 
  e^{ - \frac{1}{2} \psi\cdot C^{-1} \cdot \psi + \pmb{S}^{\rm int}[\phi + \psi]}
  \;.
\]

\paragraph{Renormalization group flow and fixed points.}
From now on we consider the free part of the action to be:
\[
  \frac{1}{2} \int d^d x \; \phi(x) (-\partial^2)^{\zeta} \phi(x) \;,
   \qquad C= \frac{1}{(-\partial^2)^{\zeta}} \;.
\]
with $\zeta\le 1$. The scaling dimension of the field 
$\phi(\Omega x) = \Omega^{-\Delta_{\phi}} \phi(x)$ is dictated by the free part to be $ \Delta_{\phi} = (d-2\zeta)/2$. Observe that the dimension of the field is $[\phi] = [ {\rm momentum} ]^{\Delta_{\phi}}$.

We introduce an ultraviolet cutoff $\Lambda$ and an infrared one $k$ and we replace $C$ by $C^{\Lambda}_k$, the covariance with 
cutoffs\footnote{A common choice is to use multiplicative momentum cutoffs $C^{\Lambda}_k(p) = C(p) \chi^{\Lambda}_k(p)$, where $\chi^{\Lambda}_k(p) = \Theta( p^2 / \Lambda^2 ) - \Theta(p^2/k^2)$, and $\Theta(u)$ is some approximated step function cutting off $u\ge 1$.}. In order to simplify the notation we sometimes suppress the UV cutoff, but one should remember that for now the UV cutoff is 
present\footnote{The cutoffs do not spoil the mass dimension of the field. We have:
\[
 C_k^{-1}(x-y) = \int_p \frac{p^{2\zeta}}{  \Theta( p^2 / \Lambda^2 ) - \Theta(p^2/k^2) } e^{-\im p(x-y)}  \;,
\]
hence $ C_{\Omega^{-1}k}^{-1}\big(\Omega(x-y) \big) = \Omega^{-d-2\zeta} C^{-1}_{k}(x-y)$ and the quadratic part is invariant under $x\to x' = \Omega x, k\to k ' = \Omega^{-1}k$.}
We parametrize the renormalization group flow by the effective action at 
scale $k$:
\begin{align}
  e^{-\pmb{\Gamma}_k[\phi]} = & e^{-\frac{1}{2} \phi \cdot (C_k)^{-1} \cdot \phi} \int_{\rm 1PI} [d\psi] \; e^{-\frac{1}{2} \psi \cdot(C_k)^{-1} \cdot \psi + \pmb{S}^{\rm int}[\phi + \psi] } \;, \crcr
  \label{eq:effaction}
  \pmb{\Gamma}_k[\phi] = & \frac{1}{n!} \sum_{n\ge 2} \int dx_1 \dots dx_n \;
   \Gamma^{(n)}_k(x_1,\dots x_n) \; \phi(x_1) \dots \phi(x_n) \; ,
 \end{align}
where $\Gamma_k^{(n\ge 3)}$ is the $n$-point amputated 1PI correlation and
$\Gamma_k^{(2)}$ is the inverse two point function:
\[ \Gamma^{(2)}_k= (G_k)^{-1} = (C_k)^{-1} - \Sigma_k \;, \]
with $G_k$ and $\Sigma_k $ the two point function and self energy 
with cutoffs. It follows that the inverse two point function in momentum space takes the form $ G_k^{-1} (p) = m_k + Z_k p^{2\zeta} + {\rm rest}$\footnote{The rest term is $p^2O(p^2/k^2)$ for $\zeta=1$, while for $\zeta<1$ it vanishes when lifting the cutoffs.} 
where $m_k=G_k^{-1}(0)$ is the renormalized mass parameter and $Z_k$ is the wave function renormalization.
The free part of the effective action $ \phi \cdot G_k^{-1} \cdot \phi$ is dimensionless hence the renormalized field $\sqrt{ Z_k}\phi$ has dimension $\Delta_{\phi}$\footnote{That is
$[ \sqrt{ Z_k} \phi ]= [{\rm momentum}]^{\Delta_{\phi}}$ and under a rescaling
of both the positions and the infrared cutoff we have
$
 \sqrt{ Z_{\Omega^{-1}k} } \phi( \Omega x) = \Omega^{-\Delta_{\phi} } \sqrt{ Z_k} \phi(x)
$.}. The anomalous dimension of the field is:
\[ \eta_k = -\frac{1}{2} k\partial_k \ln Z_k \;, \qquad \eta_k\sim O(g^2) \;.\]

It is customary to expand the interaction part of the effective action on a basis of local operators\footnote{This is done by Taylor expanding the fields in Eq.~\eqref{eq:effaction} around a position, say $x_1$:
\[\phi(x_i) = 
\sum_{q\ge 0} \frac{1}{q!} \sum_{\bar \mu = \mu_1 \dots \mu_q }(x_i -x_1)^{\bar \mu} \partial_{\bar\mu} \phi(x_1) \;.
\]
}:
\[
  \pmb{\Gamma}_k[\phi]= \frac{1}{2}  \phi \cdot G_k^{-1} \cdot \phi + 
  \sum_{n,J} 
   k^{d-J -n\Delta_{\phi}} Z_k^{n/2} g_k^{(n;J)}
  \int d^dx \; \partial^J \phi^n(x) \;, 
\]
where $   \partial^{J}  \phi^{n}(x)  $ denotes $J$ derivatives acting on $n$ fields in some order. The explicit $Z_k$ factors in the interaction make $ g_k^{(n;J)}$ dimensionless. The $\beta$ functions are the scale derivatives of the dimensionless couplings:
\[
 k\partial_k g_{k}^{i} = \beta^{i}(g) \;, \qquad
 \beta^{i}(g) =(\Delta_{i} - d) g_k^{i} + O(g^2) \;,
\]
where $i=(n,J)$ and  $\Delta_{i} =  J + n \Delta_{\phi} $ is the classical dimension of $ \partial^J \phi^n(x) $. We obtain a fixed point $(\eta_{\star}, g_{\star})$ if:
\begin{itemize}
 \item $\lim_{k\to 0} m_k = 0$, which requires to tune the bare mass in terms of the ultraviolet cutoff $\Lambda$.
 \item $\lim_{k\to 0} \eta_k = \eta_{\star}$ and $\lim_{k\to 0}\beta^i(g_{\star}) =0$.
\end{itemize}
Taking the ultraviolet cutoff to infinity and tuning the renormalized mass to zero, the only dimensionful parameter we are left with at the fixed point is $k$. We have $Z_k \overset{k\to 0}\sim k^{-2\eta_{\star}}$ and:
\[G_k(x-y) =  \frac{k^{2\Delta_{\phi}}}{Z_k} \, H(k|x-y|,g_{\star})
\overset{k\to 0}\sim  k^{2\Delta_{\phi} + 2\eta_{\star}} H(k|x-y|,g_{\star}) \;, \]
with $H$ some dimensionless function of the dimensionless argument $k|x-y|$ and the fixed point couplings $g_{\star}$. Taking $k\to 0$ we get the physical two point function:
\[
 G_k(x-y) \overset{k\to 0} \sim \frac{ c(g_{\star}) }{|x-y|^{2(\Delta_{\phi}+ \eta_{\star}) } } \;.
\]

In order to explore the neighborhood of the fixed point, let us denote 
$\nu_a$ and the eigenvalues of the stability matrix $\frac{\partial \beta_i}{\partial g_j} (g_{\star}) = P^{-1}_{ia} \nu_a P_{aj}$. At linear order in the perturbation $g_i = g_{i;\star}+ h_i(k)$ around the fixed point we have:
\[
  h_i(k) = P^{-1}_{ia}\left( \frac{k}{k_0}\right)^{\nu_a} P_{aj} h_j(k_0) \;,
\]
with $k_0$ the scale of the initial condition. Thus:
\begin{itemize}
 \item[-] if ${\rm Re}(\nu_a)>0$, the eigendirection is irrelevant
 (the perturbation vanishes for $k\to0$)
 \item[-] if ${\rm Re}(\nu_a)<0$, the eigendirection is relevant
 (the perturbation grows for $k\to0$)
 \item[-] if ${\rm Re}(\nu_a)=0$, then:
 \begin{itemize}
  \item if ${\rm Im}(\nu_a)=0$ the eigendirection is marginal and one gets a line of fixed points.
  \item if ${\rm Im}(\nu_a)=0$ the eigendirection is a limit cycle.
 This case is somewhat pathological because not only no trajectory can ever reach the fixed point, but also the exact value of the coupling at any given scale is strongly dependent on the initial condition. 
 \end{itemize}
\end{itemize}

The scaling dimensions of the operators are $d+\nu_a$.
In order to reach the fixed point one needs to fine tune the relevant couplings (the irrelevant ones flow by themselves to the fixed point values). A fixed point is predictive if it has a small number of relevant directions. 

The mass can be separated from the rest of the quadratic terms and treated as an interaction term. It is always classically relevant:
\[
 k\partial_k m_k = -2\zeta m_k + O(g^2) \;.
\]

\section{The wave function integral}\label{app:wf}
Let us compute for $\zeta  = d/2 - d/q \le 1$ the integral:
\[
I = \int_0^1 dt 
  \int_{0}^{\infty} d\alpha \; 
  \frac{ \prod_{i=1}^{q-1}  \alpha_i^{\zeta} }{
  \left( \sum_{i=1}^{q-1} \prod_{j\neq i} \alpha_j  \right)^{d/2+1}} 
 e^{ -t \frac{\prod_{i=1}^{q-1}  \alpha_i }{ \sum_{i=1}^{q-1} \prod_{j\neq i} \alpha_j  }  }  \;.
 \]
Changing variables to $\beta = \alpha^{-1}$, rescaling all the $\beta$s by $t$ and integrating out $t$ yields:
\begin{align*}
  I  = \frac{1}{\zeta} \int d\beta \; 
 \frac{\prod_{i=1}^{q-1} \beta_i^{d/2 - \zeta-1} }{ \left( \sum_{i=1}^{q-1} \beta_i\right)^{d/2+1}}   e^{-\frac{1}{\sum_{i=1}^{q-1} \beta_i}} \;.
\end{align*}
Introducing $x =  \sum_i \beta_i $ and $\beta_i =s_i x$ the integral becomes:
\begin{align*}
I= & \frac{1}{\zeta} \int_0^{\infty} d x
    \; x^{\zeta-2}  e^{-\frac{1}{x}}  \int_0^1 \frac{ds_1}{s_1}  s_1^{\frac{d}{2}-\zeta}
     \int_0^{1-s_1}  \frac{ds_2}{s_2} s_2^{\frac{d}{2} -\zeta } \dots \crcr
  & \qquad \qquad   \int_{0}^{1-s_1 -\dots -s_{q-3}} ds_{q-2} \;  
     s_{q-2}^{\frac{d}{2}-\zeta-1} (1 - s_1 - \dots s_{q-2})^{\frac{d}{2}-\zeta-1} \; ,
\end{align*}
and using:
\[
 \int_0^{1-x} ds \; s^{a-1} (1-x-s)^{b -1} =
 (1-x)^{a + b-1} \frac{\Gamma(a)\Gamma(b)}{\Gamma(a+b)} \;,
\]
we get:
\[
 I = \frac{\Gamma(1-\zeta) }{ \zeta } \; 
 \frac{ \Gamma\left( \frac{d}{2} -\zeta \right)^{q-1} }{  \Gamma\left[   \left( \frac{d}{2} -\zeta \right) (q-1)   \right]} 
  = \frac{\Gamma\left( 1 -\zeta \right) \Gamma\left( \frac{d}{2} -\zeta \right)^{q-1} }{ \zeta  \Gamma\left( \frac{d}{2} + \zeta\right)} \;.
\]

 \bibliographystyle{JHEP-3}
 
  \bibliography{/home/razvan/Desktop/lucru/Ongoing/Refs/Refs.bib}

\providecommand{\href}[2]{#2}\begingroup\raggedright\begin{thebibliography}{100}

\bibitem{zinn1996quantum}
J.~Zinn-Justin, {\em Quantum field theory and critical phenomena}.
\newblock Clarendon Press, 1996.

\bibitem{Weinberg1}
S.~Weinberg, {\it {A model of leptons}},  Phys. Rev. Lett. {\bf 19} (1967)
  1264--1266.

\bibitem{Salam1}
A.~Salam, {\it {Weak and Electromagnetic Interactions}},  Conf. Proc. {\bf
  C680519} (1968) 367--377.

\bibitem{Glashow1}
S.~L. Glashow, {\it {Partial symmetries of weak interactions}},  Nucl. Phys.
  {\bf 22} (1961) 579--588.

\bibitem{Fritzsch:2015jfa}
H.~Fritzsch and M.~Gell-Mann, eds., {\em {50 years of quarks}}.
\newblock World Scientific, Hackensack, 2015.

\bibitem{Ising:1925em}
E.~Ising, {\it {Contribution to the Theory of Ferromagnetism}},  Z. Phys. {\bf
  31} (1925) 253--258.

\bibitem{Metzner:2011cw}
W.~Metzner, M.~Salmhofer, C.~Honerkamp, V.~Meden and K.~Schonhammer, {\it
  {Functional renormalization group approach to correlated fermion systems}},
  Rev. Mod. Phys. {\bf 84} (2012) 299
  [\href{http://arXiv.org/abs/1105.5289}{{\tt arXiv:1105.5289}}].

\bibitem{Wilson:1971dc}
K.~G. Wilson and M.~E. Fisher, {\it {Critical exponents in 3.99 dimensions}},
  Phys. Rev. Lett. {\bf 28} (1972) 240--243.

\bibitem{Wilson:1973jj}
K.~G. Wilson and J.~B. Kogut, {\it {The Renormalization group and the epsilon
  expansion}},  Phys. Rept. {\bf 12} (1974) 75--199.

\bibitem{Polchinski:1983gv}
J.~Polchinski, {\it {Renormalization and Effective Lagrangians}},  Nucl. Phys.
  {\bf B231} (1984) 269--295.

\bibitem{GlimmJaffe}
J.~Glimm and A.~Jaffe, {\em {Quantum physics. A functional integral point of
  view}}.
\newblock Springer, New York, 1987.

\bibitem{DiFrancesco:1997nk}
P.~Di~Francesco, P.~Mathieu and D.~Senechal, {\em {Conformal Field Theory}}.
\newblock Graduate Texts in Contemporary Physics. Springer-Verlag, New York,
  1997.

\bibitem{ElShowk:2012ht}
S.~El-Showk, M.~F. Paulos, D.~Poland, S.~Rychkov, D.~Simmons-Duffin and
  A.~Vichi, {\it {Solving the 3D Ising Model with the Conformal Bootstrap}},
  Phys. Rev. {\bf D86} (2012) 025022
  [\href{http://arXiv.org/abs/1203.6064}{{\tt arXiv:1203.6064}}].

\bibitem{Moshe:2003xn}
M.~Moshe and J.~Zinn-Justin, {\it {Quantum field theory in the large N limit: A
  Review}},  Phys. Rept. {\bf 385} (2003) 69--228
  [\href{http://arXiv.org/abs/hep-th/0306133}{{\tt arXiv:hep-th/0306133}}].

\bibitem{'tHooft:1973jz}
G.~'t~Hooft, {\it {A planar diagram theory for strong interactions}},  Nucl.
  Phys. {\bf B72} (1974) 461.

\bibitem{DiFrancesco:1993nw}
P.~Di~Francesco, P.~H. Ginsparg and J.~Zinn-Justin, {\it {{$2-D$} {Gravity} and
  random matrices}},  Phys. Rept. {\bf 254} (1995) 1--133
  [\href{http://arXiv.org/abs/hep-th/9306153}{{\tt arXiv:hep-th/9306153}}].

\bibitem{RTM}
R.~Gurau, {\em {Random Tensors}}.
\newblock Oxford University Press, Oxford, 2016.

\bibitem{thomasmatrix}
R.~Gurau and T.~Krajewski, {\it {Analyticity results for the cumulants in a
  random matrix model}},  Ann. Inst. Henri Poincar\'e Comb. Phys. Interact.
  {\bf 2} (2014) 169--228 [\href{http://arXiv.org/abs/1409.1705}{{\tt
  arXiv:1409.1705}}].

\bibitem{expansioin6}
R.~Gurau, {\it {The {$1/N$} expansion of tensor models beyond perturbation
  theory}},  Commun. Math. Phys. {\bf 330} (2014) 973--1019
  [\href{http://arXiv.org/abs/1304.2666}{{\tt arXiv:1304.2666}}].

\bibitem{Krajewski:2017thd}
T.~Krajewski, V.~Rivasseau and V.~Sazonov, {\it {Constructive Matrix Theory for
  Higher Order Interaction}},  \href{http://arXiv.org/abs/1712.05670}{{\tt
  arXiv:1712.05670}}.

\bibitem{double}
E.~Brezin and V.~A. Kazakov, {\it {Exactly solvable field theories of closed
  strings}},  Phys. Lett. {\bf B236} (1990) 144--150.

\bibitem{double1}
M.~R. Douglas and S.~H. Shenker, {\it {Strings in less than one-dimension}},
  Nucl. Phys. {\bf B335} (1990) 635.

\bibitem{double2}
D.~J. Gross and A.~A. Migdal, {\it {Nonperturbative two-dimensional quantum
  gravity}},  Phys. Rev. Lett. {\bf 64} (1990) 127.

\bibitem{DGR}
S.~Dartois, R.~Gurau and V.~Rivasseau, {\it {Double scaling in tensor models
  with a quartic interaction}},  JHEP {\bf 1309} (2013) 088
  [\href{http://arXiv.org/abs/1307.5281}{{\tt arXiv:1307.5281}}].

\bibitem{GurSch}
R.~Gurau and G.~Schaeffer, {\it {Regular colored graphs of positive degree}},
  Ann. Inst. Henri Poincar\'e Comb. Phys. Interact. {\bf 3} (2016) 257--320
  [\href{http://arXiv.org/abs/1307.5279}{{\tt arXiv:1307.5279}}].

\bibitem{Bonzom:2014oua}
V.~Bonzom, R.~Gurau, J.~P. Ryan and A.~Tanasa, {\it {The double scaling limit
  of random tensor models}},  JHEP {\bf 1409} (2014) 051
  [\href{http://arXiv.org/abs/1404.7517}{{\tt arXiv:1404.7517}}].

\bibitem{Gurau:2015tua}
R.~Gurau, A.~Tanasa and D.~R. Youmans, {\it {The double scaling limit of the
  multi-orientable tensor model}},  Europhys. Lett. {\bf 111} (2015), no.~2
  21002 [\href{http://arXiv.org/abs/1505.00586}{{\tt arXiv:1505.00586}}].

\bibitem{Rivasseau:2017xbk}
V.~Rivasseau and F.~Vignes-Tourneret, {\it {Constructive Tensor Field Theory:
  The ${T_{4}^{4}}$ Model}},  Commun. Math. Phys. {\bf 366} (2019), no.~2
  567--646 [\href{http://arXiv.org/abs/1703.06510}{{\tt arXiv:1703.06510}}].

\bibitem{Berlin:1952zz}
T.~H. Berlin and M.~Kac, {\it {The Spherical Model of a Ferromagnet}},  Phys.
  Rev. {\bf 86} (1952) 821--835.

\bibitem{Stanley:1968gx}
H.~E. Stanley, {\it {Spherical model as the limit of infinite spin
  dimensionality}},  Phys. Rev. {\bf 176} (1968) 718--722.

\bibitem{Berges:2004yj}
J.~Berges, {\it {Introduction to nonequilibrium quantum field theory}},  AIP
  Conf. Proc. {\bf 739} (2005) 3--62
  [\href{http://arXiv.org/abs/hep-ph/0409233}{{\tt arXiv:hep-ph/0409233}}].
  [,3(2004)].

\bibitem{Klebanov:2002ja}
I.~R. Klebanov and A.~M. Polyakov, {\it {AdS dual of the critical O(N) vector
  model}},  Phys. Lett. {\bf B550} (2002) 213--219
  [\href{http://arXiv.org/abs/hep-th/0210114}{{\tt arXiv:hep-th/0210114}}].

\bibitem{Vasiliev:1999ba}
M.~A. Vasiliev, {\it {Higher spin gauge theories: Star product and AdS space}},
   \href{http://arXiv.org/abs/hep-th/9910096}{{\tt arXiv:hep-th/9910096}}.

\bibitem{Fradkin:1986qy}
E.~S. Fradkin and M.~A. Vasiliev, {\it {Cubic Interaction in Extended Theories
  of Massless Higher Spin Fields}},  Nucl. Phys. {\bf B291} (1987) 141--171.

\bibitem{Fradkin:1987ks}
E.~S. Fradkin and M.~A. Vasiliev, {\it {On the Gravitational Interaction of
  Massless Higher Spin Fields}},  Phys. Lett. {\bf B189} (1987) 89--95.

\bibitem{Guida:1998bx}
R.~Guida and J.~Zinn-Justin, {\it {Critical exponents of the N vector model}},
  J. Phys. {\bf A31} (1998) 8103--8121
  [\href{http://arXiv.org/abs/cond-mat/9803240}{{\tt arXiv:cond-mat/9803240}}].

\bibitem{Brezin:1977sv}
E.~Brezin, C.~Itzykson, G.~Parisi and J.~B. Zuber, {\it {Planar diagrams}},
  Commun. Math. Phys. {\bf 59} (1978) 35.

\bibitem{Klebanov:2018fzb}
I.~R. Klebanov, F.~Popov and G.~Tarnopolsky, {\it {TASI Lectures on Large $N$
  Tensor Models}},  PoS {\bf TASI2017} (2018) 004
  [\href{http://arXiv.org/abs/1808.09434}{{\tt arXiv:1808.09434}}].

\bibitem{critical}
V.~Bonzom, R.~Gurau, A.~Riello and V.~Rivasseau, {\it {Critical behavior of
  colored tensor models in the large {$N$} limit}},  Nucl. Phys. {\bf B853}
  (2011) 174--195 [\href{http://arXiv.org/abs/1105.3122}{{\tt
  arXiv:1105.3122}}].

\bibitem{Ferrari:2017ryl}
F.~Ferrari, {\it {The Large D Limit of Planar Diagrams}},
  \href{http://arXiv.org/abs/1701.01171}{{\tt arXiv:1701.01171}}.

\bibitem{Azeyanagi:2017drg}
T.~Azeyanagi, F.~Ferrari and F.~I. Schaposnik~Massolo, {\it {Phase Diagram of
  Planar Matrix Quantum Mechanics, Tensor, and Sachdev-Ye-Kitaev Models}},
  Phys. Rev. Lett. {\bf 120} (2018), no.~6 061602
  [\href{http://arXiv.org/abs/1707.03431}{{\tt arXiv:1707.03431}}].

\bibitem{Ferrari:2017jgw}
F.~Ferrari, V.~Rivasseau and G.~Valette, {\it {A New Large N Expansion for
  General Matrix-Tensor Models}},  \href{http://arXiv.org/abs/1709.07366}{{\tt
  arXiv:1709.07366}}.

\bibitem{Azeyanagi:2017mre}
T.~Azeyanagi, F.~Ferrari, P.~Gregori, L.~Leduc and G.~Valette, {\it {More on
  the New Large $D$ Limit of Matrix Models}},  Annals Phys. {\bf 393} (2018)
  308--326 [\href{http://arXiv.org/abs/1710.07263}{{\tt arXiv:1710.07263}}].

\bibitem{ambj3dqg}
J.~Ambjorn, B.~Durhuus and T.~Jonsson, {\it {Three-dimensional simplicial
  quantum gravity and generalized matrix models}},  Mod. Phys. Lett. {\bf A6}
  (1991) 1133--1146.

\bibitem{sasa1}
N.~Sasakura, {\it {Tensor model for gravity and orientability of manifold}},
  Mod. Phys. Lett. {\bf A6} (1991) 2613--2624.

\bibitem{color}
R.~Gurau, {\it {Colored Group Field Theory}},  Commun. Math. Phys. {\bf 304}
  (2011) 69--93 [\href{http://arXiv.org/abs/0907.2582}{{\tt arXiv:0907.2582}}].

\bibitem{review}
R.~Gurau and J.~P. Ryan, {\it {Colored tensor models - a review}},  SIGMA {\bf
  8} (2012) 020 [\href{http://arXiv.org/abs/1109.4812}{{\tt arXiv:1109.4812}}].

\bibitem{BenGeloun:2011rc}
J.~Ben~Geloun and V.~Rivasseau, {\it {A renormalizable 4-Dimensional tensor
  field theory}},  Commun. Math. Phys. {\bf 318} (2013) 69--109
  [\href{http://arXiv.org/abs/1111.4997}{{\tt arXiv:1111.4997}}].

\bibitem{Samary:2014oya}
D.~O. Samary, C.~I. P\'erez-S\'anchez, F.~Vignes-Tourneret and R.~Wulkenhaar,
  {\it {Correlation functions of just renormalizable tensorial group field
  theory: The melonic approximation}},
  \href{http://arXiv.org/abs/1411.7213}{{\tt arXiv:1411.7213}}.

\bibitem{Pascalie:2018nrs}
R.~Pascalie, C.~I. P\'erez-S\'anchez, A.~Tanasa and R.~Wulkenhaar, {\it {On the
  large N limit of the Schwinger-Dyson equation of tensor field theory}},
  \href{http://arXiv.org/abs/1810.09867}{{\tt arXiv:1810.09867}}.

\bibitem{Witten:2016iux}
E.~Witten, {\it {An SYK-Like Model Without Disorder}},
  \href{http://arXiv.org/abs/1610.09758}{{\tt arXiv:1610.09758}}.

\bibitem{Gurau:2016lzk}
R.~Gurau, {\it {The complete {$1/N$} expansion of a SYK--like tensor model}},
  Nucl. Phys. {\bf B916} (2017) 386--401
  [\href{http://arXiv.org/abs/1611.04032}{{\tt arXiv:1611.04032}}].

\bibitem{Klebanov:2016xxf}
I.~R. Klebanov and G.~Tarnopolsky, {\it {Uncolored Random Tensors, Melon
  Diagrams, and the SYK Models}},  Phys. Rev. {\bf D95} (2017), no.~4 046004
  [\href{http://arXiv.org/abs/1611.08915}{{\tt arXiv:1611.08915}}].

\bibitem{Peng:2016mxj}
C.~Peng, M.~Spradlin and A.~Volovich, {\it {A Supersymmetric SYK-like Tensor
  Model}},  JHEP {\bf 05} (2017) 062
  [\href{http://arXiv.org/abs/1612.03851}{{\tt arXiv:1612.03851}}].

\bibitem{Krishnan:2016bvg}
C.~Krishnan, S.~Sanyal and P.~N. Bala~Subramanian, {\it {Quantum Chaos and
  Holographic Tensor Models}},  JHEP {\bf 03} (2017) 056
  [\href{http://arXiv.org/abs/1612.06330}{{\tt arXiv:1612.06330}}].

\bibitem{Krishnan:2017lra}
C.~Krishnan, K.~V.~P. Kumar and D.~Rosa, {\it {Contrasting SYK-like Models}},
  \href{http://arXiv.org/abs/1709.06498}{{\tt arXiv:1709.06498}}.

\bibitem{Bulycheva:2017ilt}
K.~Bulycheva, I.~R. Klebanov, A.~Milekhin and G.~Tarnopolsky, {\it {Spectra of
  Operators in Large $N$ Tensor Models}},  Phys. Rev. {\bf D97} (2018), no.~2
  026016 [\href{http://arXiv.org/abs/1707.09347}{{\tt arXiv:1707.09347}}].

\bibitem{Choudhury:2017tax}
S.~Choudhury, A.~Dey, I.~Halder, L.~Janagal, S.~Minwalla and R.~Poojary, {\it
  {Notes on melonic $O(N)^{q-1}$ tensor models}},  JHEP {\bf 06} (2018) 094
  [\href{http://arXiv.org/abs/1707.09352}{{\tt arXiv:1707.09352}}].

\bibitem{Pakrouski:2018jcc}
K.~Pakrouski, I.~R. Klebanov, F.~Popov and G.~Tarnopolsky, {\it {Spectrum of
  Majorana Quantum Mechanics with $O(4)^3$ Symmetry}},  Phys. Rev. Lett. {\bf
  122} (2019), no.~1 011601 [\href{http://arXiv.org/abs/1808.07455}{{\tt
  arXiv:1808.07455}}].

\bibitem{Klebanov:2018nfp}
I.~R. Klebanov, A.~Milekhin, F.~Popov and G.~Tarnopolsky, {\it {Spectra of
  eigenstates in fermionic tensor quantum mechanics}},  Phys. Rev. {\bf D97}
  (2018), no.~10 106023 [\href{http://arXiv.org/abs/1802.10263}{{\tt
  arXiv:1802.10263}}].

\bibitem{Kim:2019upg}
J.~Kim, I.~R. Klebanov, G.~Tarnopolsky and W.~Zhao, {\it {Symmetry Breaking in
  Coupled SYK or Tensor Models}},  \href{http://arXiv.org/abs/1902.02287}{{\tt
  arXiv:1902.02287}}.

\bibitem{Carrozza:2018psc}
S.~Carrozza and V.~Pozsgay, {\it {SYK-like tensor quantum mechanics with
  $\mathrm{Sp}(N)$ symmetry}},  Nucl. Phys. {\bf B941} (2019) 28--52
  [\href{http://arXiv.org/abs/1809.07753}{{\tt arXiv:1809.07753}}].

\bibitem{Klebanov:2019jup}
I.~R. Klebanov, P.~N. Pallegar and F.~K. Popov, {\it {Majorana Fermion Quantum
  Mechanics for Higher Rank Tensors}},
  \href{http://arXiv.org/abs/1905.06264}{{\tt arXiv:1905.06264}}.

\bibitem{Sachdev:1992fk}
S.~Sachdev and J.~Ye, {\it {Gapless spin fluid ground state in a random,
  quantum Heisenberg magnet}},  Phys. Rev. Lett. {\bf 70} (1993) 3339
  [\href{http://arXiv.org/abs/cond-mat/9212030}{{\tt arXiv:cond-mat/9212030}}].

\bibitem{Kitaev}
A.~Kitaev, {\it {A simple model of quantum holography}},  KITP strings seminar
  and Entanglement 2015 (Feb. 12, April 7, and May 27, 2015).

\bibitem{Maldacena:2016hyu}
J.~Maldacena and D.~Stanford, {\it {Remarks on the Sachdev-Ye-Kitaev model}},
  Phys. Rev. {\bf D94} (2016), no.~10 106002
  [\href{http://arXiv.org/abs/1604.07818}{{\tt arXiv:1604.07818}}].

\bibitem{Polchinski:2016xgd}
J.~Polchinski and V.~Rosenhaus, {\it {The Spectrum in the Sachdev-Ye-Kitaev
  Model}},  JHEP {\bf 04} (2016) 001
  [\href{http://arXiv.org/abs/1601.06768}{{\tt arXiv:1601.06768}}].

\bibitem{Gross:2016kjj}
D.~J. Gross and V.~Rosenhaus, {\it {A Generalization of Sachdev-Ye-Kitaev}},
  JHEP {\bf 02} (2017) 093 [\href{http://arXiv.org/abs/1610.01569}{{\tt
  arXiv:1610.01569}}].

\bibitem{Gross:2017aos}
D.~J. Gross and V.~Rosenhaus, {\it {All point correlation functions in SYK}},
  JHEP {\bf 12} (2017) 148 [\href{http://arXiv.org/abs/1710.08113}{{\tt
  arXiv:1710.08113}}].

\bibitem{Gross:2017hcz}
D.~J. Gross and V.~Rosenhaus, {\it {The Bulk Dual of SYK: Cubic Couplings}},
  JHEP {\bf 05} (2017) 092 [\href{http://arXiv.org/abs/1702.08016}{{\tt
  arXiv:1702.08016}}].

\bibitem{Ferrari:2019ogc}
F.~Ferrari and F.~I. Schaposnik~Massolo, {\it {Phases Of Melonic Quantum
  Mechanics}},  \href{http://arXiv.org/abs/1903.06633}{{\tt arXiv:1903.06633}}.

\bibitem{Krishnan:2017txw}
C.~Krishnan and K.~V.~P. Kumar, {\it {Towards a Finite-$N$ Hologram}},  JHEP
  {\bf 10} (2017) 099 [\href{http://arXiv.org/abs/1706.05364}{{\tt
  arXiv:1706.05364}}].

\bibitem{Krishnan:2018hhu}
C.~Krishnan and K.~V. Pavan~Kumar, {\it {Exact Solution of a Strongly Coupled
  Gauge Theory in 0+1 Dimensions}},  Phys. Rev. Lett. {\bf 120} (2018), no.~20
  201603 [\href{http://arXiv.org/abs/1802.02502}{{\tt arXiv:1802.02502}}].

\bibitem{Giombi:2017dtl}
S.~Giombi, I.~R. Klebanov and G.~Tarnopolsky, {\it {Bosonic tensor models at
  large {$N$} and small {$\epsilon$}}},  Phys. Rev. {\bf D96} (2017), no.~10
  106014 [\href{http://arXiv.org/abs/1707.03866}{{\tt arXiv:1707.03866}}].

\bibitem{Prakash:2017hwq}
S.~Prakash and R.~Sinha, {\it {A Complex Fermionic Tensor Model in $d$
  Dimensions}},  JHEP {\bf 02} (2018) 086
  [\href{http://arXiv.org/abs/1710.09357}{{\tt arXiv:1710.09357}}].

\bibitem{Benedetti:2017fmp}
D.~Benedetti, S.~Carrozza, R.~Gurau and A.~Sfondrini, {\it {Tensorial
  Gross-Neveu models}},  JHEP {\bf 01} (2018) 003
  [\href{http://arXiv.org/abs/1710.10253}{{\tt arXiv:1710.10253}}].

\bibitem{Giombi:2018qgp}
S.~Giombi, I.~R. Klebanov, F.~Popov, S.~Prakash and G.~Tarnopolsky, {\it
  {Prismatic Large $N$ Models for Bosonic Tensors}},  Phys. Rev. {\bf D98}
  (2018), no.~10 105005 [\href{http://arXiv.org/abs/1808.04344}{{\tt
  arXiv:1808.04344}}].

\bibitem{Benedetti:2018ghn}
D.~Benedetti and N.~Delporte, {\it {Phase diagram and fixed points of tensorial
  Gross-Neveu models in three dimensions}},  JHEP {\bf 01} (2019) 218
  [\href{http://arXiv.org/abs/1810.04583}{{\tt arXiv:1810.04583}}].

\bibitem{Popov:2019nja}
F.~K. Popov, {\it {Supersymmetric Tensor Model at Large $N$ and Small
  $\epsilon$}},  \href{http://arXiv.org/abs/1907.02440}{{\tt
  arXiv:1907.02440}}.

\bibitem{Liu:2018jhs}
J.~Liu, E.~Perlmutter, V.~Rosenhaus and D.~Simmons-Duffin, {\it
  {$d$-dimensional SYK, AdS Loops, and $6j$ Symbols}},  JHEP {\bf 03} (2019)
  052 [\href{http://arXiv.org/abs/1808.00612}{{\tt arXiv:1808.00612}}].

\bibitem{Simmons-Duffin:2017nub}
D.~Simmons-Duffin, D.~Stanford and E.~Witten, {\it {A spacetime derivation of
  the Lorentzian OPE inversion formula}},  JHEP {\bf 07} (2018) 085
  [\href{http://arXiv.org/abs/1711.03816}{{\tt arXiv:1711.03816}}].

\bibitem{Dolan:2000ut}
F.~A. Dolan and H.~Osborn, {\it {Conformal four point functions and the
  operator product expansion}},  Nucl. Phys. {\bf B599} (2001) 459--496
  [\href{http://arXiv.org/abs/hep-th/0011040}{{\tt arXiv:hep-th/0011040}}].

\bibitem{Pappadopulo:2012jk}
D.~Pappadopulo, S.~Rychkov, J.~Espin and R.~Rattazzi, {\it {OPE Convergence in
  Conformal Field Theory}},  Phys. Rev. {\bf D86} (2012) 105043
  [\href{http://arXiv.org/abs/1208.6449}{{\tt arXiv:1208.6449}}].

\bibitem{Costa:2011dw}
M.~S. Costa, J.~Penedones, D.~Poland and S.~Rychkov, {\it {Spinning Conformal
  Blocks}},  JHEP {\bf 11} (2011) 154
  [\href{http://arXiv.org/abs/1109.6321}{{\tt arXiv:1109.6321}}].

\bibitem{Klebanov:2017nlk}
I.~R. Klebanov and G.~Tarnopolsky, {\it {On Large {$N$} Limit of Symmetric
  Traceless Tensor Models}},  JHEP {\bf 10} (2017) 037
  [\href{http://arXiv.org/abs/1706.00839}{{\tt arXiv:1706.00839}}].

\bibitem{Gurau:2017qya}
R.~Gurau, {\it {The $1/N$ expansion of tensor models with two symmetric
  tensors}},  Commun. Math. Phys. (2017)
  [\href{http://arXiv.org/abs/1706.05328}{{\tt arXiv:1706.05328}}].

\bibitem{Benedetti:2017qxl}
D.~Benedetti, S.~Carrozza, R.~Gurau and M.~Kolanowski, {\it {The {$1/N$}
  expansion of the symmetric traceless and the antisymmetric tensor models in
  rank three}},  \href{http://arXiv.org/abs/1712.00249}{{\tt
  arXiv:1712.00249}}.

\bibitem{Carrozza:2018ewt}
S.~Carrozza, {\it {Large $N$ limit of irreducible tensor models: $O(N)$
  rank-$3$ tensors with mixed permutation symmetry}},  JHEP {\bf 06} (2018) 039
  [\href{http://arXiv.org/abs/1803.02496}{{\tt arXiv:1803.02496}}].

\bibitem{expansion1}
R.~Gurau, {\it {The {$1/N$} expansion of colored tensor models}},  Ann. H.
  Poincar\'e {\bf 12} (2011) 829--847
  [\href{http://arXiv.org/abs/1011.2726}{{\tt arXiv:1011.2726}}].

\bibitem{expansion2}
R.~Gurau and V.~Rivasseau, {\it {The {$1/N$} expansion of colored tensor models
  in arbitrary dimension}},  Europhys. Lett. {\bf 95} (2011) 50004
  [\href{http://arXiv.org/abs/1101.4182}{{\tt arXiv:1101.4182}}].

\bibitem{expansion3}
R.~Gurau, {\it {The complete {$1/N$} expansion of colored tensor models in
  arbitrary dimension}},  Ann. H. Poincar\'e {\bf 13} (2012) 399--423
  [\href{http://arXiv.org/abs/1102.5759}{{\tt arXiv:1102.5759}}].

\bibitem{Gurau:2017xhf}
R.~Gurau, {\it {Quenched equals annealed at leading order in the colored SYK
  model}},  Europhys. Lett. {\bf 119} (2017), no.~3 30003
  [\href{http://arXiv.org/abs/1702.04228}{{\tt arXiv:1702.04228}}].

\bibitem{Fusy:2018xax}
E.~Fusy, L.~Lionni and A.~Tanasa, {\it {Combinatorial study of graphs arising
  from the Sachdev-Ye-Kitaev model}},
  \href{http://arXiv.org/abs/1810.02146}{{\tt arXiv:1810.02146}}.

\bibitem{Carrozza:2015adg}
S.~Carrozza and A.~Tanasa, {\it {{$O(N)$} Random Tensor Models}},  Lett. Math.
  Phys. {\bf 106} (2016), no.~11 1531--1559
  [\href{http://arXiv.org/abs/1512.06718}{{\tt arXiv:1512.06718}}].

\bibitem{Benedetti:2018goh}
D.~Benedetti and R.~Gurau, {\it {2PI effective action for the SYK model and
  tensor field theories}},  JHEP {\bf 05} (2018) 156
  [\href{http://arXiv.org/abs/1802.05500}{{\tt arXiv:1802.05500}}].

\bibitem{JonaLasinio:1964cw}
G.~Jona-Lasinio, {\it {Relativistic field theories with symmetry breaking
  solutions}},  Nuovo Cim. {\bf 34} (1964) 1790--1795.

\bibitem{Cornwall:1974vz}
J.~M. Cornwall, R.~Jackiw and E.~Tomboulis, {\it {Effective Action for
  Composite Operators}},  Phys. Rev. {\bf D10} (1974) 2428--2445.

\bibitem{Benedetti:2019eyl}
D.~Benedetti, R.~Gurau and S.~Harribey, {\it {Line of fixed points in a bosonic
  tensor model}},  JHEP {\bf 06} (2019) 053
  [\href{http://arXiv.org/abs/1903.03578}{{\tt arXiv:1903.03578}}].

\bibitem{Gubser:1998bc}
S.~S. Gubser, I.~R. Klebanov and A.~M. Polyakov, {\it {Gauge theory correlators
  from noncritical string theory}},  Phys. Lett. {\bf B428} (1998) 105--114
  [\href{http://arXiv.org/abs/hep-th/9802109}{{\tt arXiv:hep-th/9802109}}].

\bibitem{Witten:1998qj}
E.~Witten, {\it {Anti-de Sitter space and holography}},  Adv. Theor. Math.
  Phys. {\bf 2} (1998) 253--291
  [\href{http://arXiv.org/abs/hep-th/9802150}{{\tt arXiv:hep-th/9802150}}].

\bibitem{Breitenlohner:1982jf}
P.~Breitenlohner and D.~Z. Freedman, {\it {Stability in Gauged Extended
  Supergravity}},  Annals Phys. {\bf 144} (1982) 249.

\bibitem{Gubser:2002vv}
S.~S. Gubser and I.~R. Klebanov, {\it {A Universal result on central charges in
  the presence of double trace deformations}},  Nucl. Phys. {\bf B656} (2003)
  23--36 [\href{http://arXiv.org/abs/hep-th/0212138}{{\tt
  arXiv:hep-th/0212138}}].

\bibitem{Brydges:2002wq}
D.~C. Brydges, P.~K. Mitter and B.~Scoppola, {\it {Critical (Phi**4)(3,
  epsilon)}},  Commun. Math. Phys. {\bf 240} (2003) 281--327
  [\href{http://arXiv.org/abs/hep-th/0206040}{{\tt arXiv:hep-th/0206040}}].

\bibitem{Abdesselam:2006qg}
A.~Abdesselam, {\it {A Complete Renormalization Group Trajectory Between Two
  Fixed Points}},  Commun. Math. Phys. {\bf 276} (2007) 727--772
  [\href{http://arXiv.org/abs/math-ph/0610018}{{\tt arXiv:math-ph/0610018}}].

\bibitem{Gross:2017vhb}
D.~J. Gross and V.~Rosenhaus, {\it {A line of CFTs: from generalized free
  fields to SYK}},  JHEP {\bf 07} (2017) 086
  [\href{http://arXiv.org/abs/1706.07015}{{\tt arXiv:1706.07015}}].

\bibitem{Rivasseau:1991ub}
V.~Rivasseau, {\em {From Perturbative to Constructive Renormalization}}.
\newblock Princton University Press, New Jersey, 2014.

\bibitem{Krajewski:2008fa}
T.~Krajewski, V.~Rivasseau, A.~Tanasa and Z.~Wang, {\it {Topological Graph
  Polynomials and Quantum Field Theory, Part I: Heat Kernel Theories}},  J.
  Noncommut. Geom. {\bf 4} (2010) 29--82
  [\href{http://arXiv.org/abs/0811.0186}{{\tt arXiv:0811.0186}}].

\bibitem{FabBolRiord}
F.~Vignes-Tourneret, {\it {The multivariate signed Bollobas-Riordan
  polynomial}},  Discrete Mathematics {\bf 309} (2009) 5968--5981
  [\href{http://arXiv.org/abs/0811.1584}{{\tt arXiv:0811.1584}}].

\bibitem{peskin2018introduction}
M.~E. Peskin, {\em An introduction to quantum field theory}.
\newblock CRC Press, 2018.

\end{thebibliography}\endgroup

\end{document}